\numberwithin{equation}{section}
\newtheorem{Theorem}{Theorem}[section]
\newtheorem{Lemma}[Theorem]{Lemma}
\newtheorem{Proposition}[Theorem]{Proposition}
 { \theoremstyle{definition}
\newtheorem{Definition}[Theorem]{Definition}

\newtheorem{Remark}[Theorem]{Remark} }
\newcommand{\Pc}{\mathcal{P}}
\newcommand{\C}{{\mathbb C}}
\newcommand{\diag}{\operatorname{diag}}
\newcommand{\re}{\operatorname{Re}}
\newcommand{\im}{\operatorname{Im}}
\newcommand{\Tr}{\operatorname{Tr}}
\newcommand{\Var}{\operatorname{Var}}
\newcommand{\Cov}{\operatorname{Cov}}
\newcommand{\lnorm}{\left\|}
\newcommand{\rnorm}{\right\|}
\newcommand{\Hp}{\mathcal{H}}
\newcommand{\Lw}{\mathscr{L}}
\newcommand{\Sw}{\mathcal{S}}
\newcommand{\rv}{\mathbf{r}}
\newcommand{\lv}{\mathbf{l}}
\newcommand{\Rv}{\mathbf{R}}
\newcommand{\blambda}{\boldsymbol{\lambda}}
\newcommand{\abs}[1]{\left| #1 \right|}
\newcommand{\av}{\mathbf{a}}
\newcommand{\bv}{\mathbf{b}}
\newcommand{\cv}{\mathbf{c}}
\newcommand{\GL}{\mathrm{GL}}
\def\norm#1{\lnorm {#1} \rnorm}
\newcommand{\RN}[1]{%
	\textup{\uppercase\expandafter{\romannumeral#1}}%
}
\begin{document}

\allowdisplaybreaks

\newcommand{\arXivNumber}{2504.12120}

\renewcommand{\PaperNumber}{058}

\FirstPageHeading

\ShortArticleName{Logarithmic Spectral Distribution of a Non-Hermitian $\beta$-Ensemble}

\ArticleName{Logarithmic Spectral Distribution \\ of a Non-Hermitian $\boldsymbol{\beta}$-Ensemble}

\Author{Gernot AKEMANN~$^{\rm a}$, Francesco MEZZADRI~$^{\rm b}$, Patricia P\"A{\SS}LER~$^{\rm a}$ and Henry TAYLOR~$^{\rm b}$}

\AuthorNameForHeading{G.~Akemann, F.~Mezzadri, P.~P\"a{\ss}ler and H.~Taylor}

\Address{$^{\rm a)}$~Faculty of Physics, Bielefeld University, P.O. Box 100131, 33501 Bielefeld, Germany}
\EmailD{\mail{akemann@physik.uni-bielefeld.de}, \mail{patricia@physik.uni-bielefeld.de}}

\Address{$^{\rm b)}$~School of Mathematics, University of Bristol,\\
\hphantom{$^{\rm b)}$}~Fry Building, Woodland Road, Bristol, BS8 1UG, UK}
\EmailD{\mail{f.mezzadri@bristol.ac.uk}, \mail{ht17630@bristol.ac.uk}}

\ArticleDates{Received May 07, 2025, in final form May 18, 2026; Published online June 18, 2026}

\Abstract{We introduce a non-Hermitian $\beta$-ensemble and determine its spectral density in the limit of large $\beta$ and large matrix size $n$. The ensemble is given by a general tridiagonal complex random matrix of normal and chi-distributed random variables, extending previous work of Mezzadri and Taylor (2025). The joint distribution of eigenvalues contains a~Vandermonde determinant to the power $\beta$ and a residual coupling to the eigenvectors. A~tool in the computation of the limiting spectral density is a single characteristic polynomial for centred tridiagonal Jacobi matrices, for which we explicitly determine the coefficients in terms of its matrix elements. In the low temperature limit $\beta\gg1$, our ensemble reduces to such a centred matrix with vanishing diagonal. A general theorem from free probability based on the variance of the coefficients of the characteristic polynomial allows us to obtain the spectral density when additionally taking the large-$n$ limit. It is rotationally invariant on a compact disc, given by the logarithm of the radius plus a constant. The same density is obtained when starting form a tridiagonal complex symmetric ensemble, which thus plays a~special role. Extensive numerical simulations confirm our analytical results and put this and the previously studied ensemble in the context of the pseudospectrum. The numerical study of the local nearest-neighbour spacing distribution shows agreement between the tridiagonal ensemble and two-dimensional Poisson statistics (independently of~$\beta$), whereas we observe a $\beta$-dependence for the previously introduced ensemble.}

\Keywords{non-Hermitian $\beta$-ensemble; tridiagonal complex random matrix; characteristic polynomial; free probability; logarithmic spectral density}

\Classification{60B20; 33C45; 33E12}

\section{Introduction and main results}

One of the most surprising features of the spectra of random matrices is that statistically they behave like a two-dimensional gas of charged particles interacting with a logarithmic potential. Indeed, the joint density of the $n$
eigenvalues has the form of a Boltzmann factor ${\rm e}^{\beta H}$, where~$\beta$ is the inverse temperature and
the Hamiltonian $H$ depends on the ensemble of matrices. In the classical Hermitian and circular ensembles, the inverse temperature is restricted to specific integer values,
namely $\beta=1,2,4$, which correspond to the dimension of the associated division algebra with real coefficients~\cite{Dys62}. In a highly influential paper, Dumitriu and Edelman~\cite{DE02A} developed a model of tridiagonal real symmetric matrices where $\beta$ can be any positive real number.

Because the logarithmic potential is the Coulomb interaction in two dimensions, it is natural to ask if the same construction is possible for complex non-Hermitian random matrices.
Here, the classical ensembles introduced by Ginibre \cite{Ginibre} essentially only provide such a static Coulomb gas at a single value $\beta=2$, even if for real and quaternion valued matrix elements the joint density of eigenvalues is more complicated and given by a Pfaffian determinant \cite{PetersBook}. The local repulsion then corresponds to $\beta=2$.

In recent years, the two-dimensional Coulomb gas at a general value of $\beta$ has been intensely studied, both in the mathematics and physics literature. Based on numerical simulations it has been conjectured, that in the large-$n$ limit, the point process undergoes a transition from a~constant, uniform density to the triangular Abrikosov lattice, at around $\beta\approx 140$ \cite{AB81,CC83}, see~\cite{CSA} for recent work.
At $\beta=2$, a realisation in terms of complex non-Hermitian Gaussian random matrices exists, whose spectrum becomes a determinantal point process with a known kernel, as already pointed out by Ginibre.
Few rigorous mathematical results exist beyond the integrable point $\beta=2$. For general $\beta>0$, the normalising constant or partition function of the point process is unknown, that would be a complex analogue of the Selberg integral. The empirical spectral distribution is given by the circular law, but the crystallisation to a lattice has not been shown yet or is even debated. Apart from $\beta=2$ and the vicinity of $\beta\approx0$ of distance $1/N$, where the eigenvalue correlations become Poisson \cite{Lambert}, no rigorous results on the local limiting statistic exist. Around $\beta=2$ non-rigorous, perturbative results \cite{Can-et-al} show the onset of strong oscillations for $\beta>2$ close to the edge of the spectrum, which have been observed to become dominant close to the transition in the entire radial spectrum, cf.~\cite{CSA} for the most recent work. An analogous construction to Dumitriu and Edelman~\cite{DE02A} of a non-Hermitian $\beta$-ensemble would be highly desirable.

Recently, two of the authors~\cite{mezzadri-taylor} introduced
an ensemble of particular complex non-symmetric tridiagonal matrices where the inverse temperature $\beta$ is a continuous parameter.
They constructed the joint density of complex eigenvalues, by establishing a one-to-one correspondence between the random variables on the diagonal and lower diagonal of a particular tridiagonal matrix on the one hand, and the complex eigenvalues and the first row of the diagonalising similarity transformation on the other hand. The latter set of variables completely determines the diagonalisation.
The construction allowed them to construct the Jacobian and joint eigenvalue distribution explicitly. This ensemble has several attractive features. Its normalisation is known explicitly, corresponding to a Selberg type integral in the complex plane. The joint eigenvalue distribution contains the logarithmic Coulomb repulsion in two dimensions at arbitrary inverse temperature $\beta>0$. However, it also contains an additional integral where the eigenvalues and part of the eigenvectors are still coupled.
So far, this has made it difficult to compute the eigenvalue correlation functions analytically, beginning with the global spectral density, even in the limit of large matrix size $n\to\infty$.
In particular, the ensemble of \cite{mezzadri-taylor} does not reduce to the complex Ginibre ensemble when setting $\beta=2$. Furthermore, as we will see below, the tridiagonal ensemble introduced by \cite{mezzadri-taylor} does not share the feature of a condensation onto a triangular lattice either, at least up to the values of $\beta\approx 5000$ that we tested numerically. Therefore, the remaining coupling between eigenvalues and eigenvectors seems to have a significant impact.
We note in passing that for $n=2$ a normal matrix realisation exists \cite{AMP} for a non-Hermitian $\beta$-ensemble, which is trivially also tridiagonal. Because of normality the eigenvectors decouple, and it agrees with the Ginibre ensembles for $\beta=2$. This ensemble was used to formulate a~surmise for the local nearest-neighbour spacing distribution. Although it is known not to be a~good approximation to the limiting spacing distribution in the Ginibre ensemble \cite{GHS}, it turns out to be a~good approximation close to the Poisson distribution at $\beta=0$, for $\beta=0$--$0.5$.
This contrasts with the Wigner surmise for Hermitian random matrices, which is an excellent approximation for $\beta=1,2,4$, but not close to $\beta=0$.

In this work, we introduce an ensemble of general complex tridiagonal matrices. We will show that it shares many of the attractive features and several open questions of the first ensemble~\cite{mezzadri-taylor} that we just described.
It turns out that the spectral density of the general ensemble
is much more stable with respect to an expansion in large matrix size and large $\beta$.
In that limit the global spectral density of the general and symmetric ensemble agree.
The previously introduced ensemble \cite{mezzadri-taylor} is less stable in this limit, and we will show that to leading order the spectral density becomes a Dirac delta at the origin. In the more general ensemble, the spectral density
is absolutely continuous and very simple, depending on the logarithm of the modulus of the eigenvalue plus a constant. This limiting law seems to be a novel feature for non-Hermitian random matrix ensembles. In the numerical study of the local nearest-neighbour (NN) spacing distribution, we observe that the general tridiagonal model shares the statistics with a two-dimensional (2D) Poisson point process of independent particles. Moreover, it turns out that the local statistics are independent of $\beta$. For the complex non-symmetric tridiagonal ensemble, we find a $\beta$-dependence of the NN distribution which deviates from 2D Poissonian statistics. We currently do not have an understanding of this difference.

The article is organised as follows. In the remainder of this section, we define the general complex tridiagonal ensemble and present its features, the joint eigenvalue distribution, its limiting spectral density and the main tool we use, the characteristic polynomial for centred tridiagonal matrices, resulting from a large-$\beta$ (low temperature) expansion.
In particular, we will also explain the difference with the non-symmetric ensemble
\cite{mezzadri-taylor} and the agreement with the complex symmetric tridiagonal ensemble, which has a simpler joint density, although it is similar to~\cite{mezzadri-taylor}. In Section~\ref{Sec:General Beta Ensemble}, we will introduce a map between arbitrary complex tridiagonal matrices and their spectral parameter, compute its Jacobian and the joint probability density of the eigenvalues.
In Section~\ref{Sec:low temperature expansion}, we study the large-$\beta$ expansion, which leads to centred tridiagonal matrices, with vanishing diagonal. In Section~\ref{Sec:characteristic polynomial}, we explicitly determine the coefficients of the
characteristic polynomial for such ensembles.
In Section~\ref{Sec:equilibrium density}, we determine the equilibrium density for all ensembles using tools from free probability. In Section~\ref{Sec:Num}, we present numerical simulations to verify our results. All the ensembles are also compared using the concept of pseudospectrum and deviations from normality. We present a brief numerical study of the local NN statistics.
In Appendix~\ref{app-n=2}, we illustrate our results with matrix size $n=2$, and in Appendix~\ref{app-symm}, the spectral decomposition of the symmetric ensemble is derived.

\subsection{General complex tridiagonal ensemble}
\label{Subsec:complex jacobi}

We first introduce a general tridiagonal, complex non-Hermitian random matrix ensemble of dimension $n$. It takes the form
\begin{equation}
\label{eq-gen_trid}
T= \begin{pmatrix}
a_n & c_{n-1} & & & \\
{b}_{n-1} & a_{n-1} & c_{n-2} & & \\
 & \ddots & \ddots & \ddots&\\
 & & {b}_2 & a_2& c_1\\
 & & & {b}_1 & a_1
\end{pmatrix},
\end{equation}
with $a_j,{b}_j,c_j\in\mathbb{C}$, ${b}_j,c_j \neq 0$, and all other matrix elements vanishing.
In particular, we will choose the following distributions for the matrix elements, assumed to be all independent:
\begin{subequations}
 \label{mat_el_T}
\begin{gather}
 \re(a_j) \sim \mathcal{N}(0,1), \qquad \im(a_j) \sim \mathcal{N}(0,1), \qquad j=1,\dots,n, \\
 |b_j|, |c_j| \sim \chi_{\beta j/2}, \qquad
 \arg (b_j),\arg(c_j) \sim \mathcal{U}(0,2\pi), \qquad j=1,\dots,n-1.
\end{gather}
\end{subequations}
Here, $\mathcal{N}(0,1)$ is the standard centred normal distribution with unit variance, $\mathcal{U}(0,2\pi)$ denotes the uniform distribution, and $\chi_k$ is the distribution with density
\begin{equation}
\label{chik}
f_k(x)=\frac{1}{2^{\frac{k}{2}-1}\Gamma\bigl(\frac{k}{2}\bigr)} x^{k-1}\exp\bigl[-x^2/2\bigr], \qquad x>0.
\end{equation}
The matrix $T$ contains $2n+4(n-1)=6n-4$ real degrees of freedom. In the following, we will only consider matrices which are diagonalisable and have pairwise distinct non-vanishing eigenvalues.

We are interested in the distribution of complex eigenvalues $\lambda_j\in\mathbb{C}$, $j=1,\dots,n$, of matrix~$T$, and thus in the diagonalisation
\begin{equation}
\label{diagonal}
T=R\Lambda R^{-1}
\end{equation}
with $\Lambda= \diag(\lambda_1,\dots,\lambda_n)$ and $R \in \mbox{GL}(n,\C) $. The columns of $R$ contain the right eigenvectors, and the rows of $R^{-1}$ contain the left eigenvectors. These are in general
distinct.\footnote{Notice however, that for symmetric matrices any right eigenvector is also a left eigenvector~\cite{cullum-willoghby}.}
The diagonalising matrix $R$ is unique up to permutations of the eigenvalues and (right) multiplication by an invertible complex diagonal matrix, cf. Section~\ref{Sec:General Beta Ensemble}.
Below, we will also discuss two particular choices of ensembles, first, the one introduced in \cite{mezzadri-taylor} with $c_j=1$, $j=1,\dots,n-1$, which we denote by \smash{$\widetilde{T}$}, see \eqref{eq-T_tilde_def} below. The second choice is a symmetric tridiagonal complex matrix denoted by $S$, with $b_j=c_j$, $j=1,\dots,n-1$.

In \cite{mezzadri-taylor}, it was observed that the similarity transformation $\widetilde{T} = C TC^{-1}$ brings $T$ to the form%
\begin{equation}
\label{eq-T_tilde_def}
 \widetilde{T} = \begin{pmatrix}
a_n & 1 & & & \\
\widetilde{b}_{n-1} & a_{n-1} & 1 & & \\
 & \ddots & \ddots & \ddots & \\
 & & \widetilde{b}_2 & a_2 & 1 \\
 & & & \widetilde{b}_1 & a_1
\end{pmatrix}=C TC^{-1},
\end{equation}
where $\widetilde{b}_j = c_j {b}_j$, $j=1,\dots,n-1$, and
\begin{equation}
\label{eq-Cdef}
C = \begin{pmatrix} 1 & & & \\
 & c_{n-1} & & \\
 & & \ddots &\\
 & & & c_{n-1}\dotsm c_1
 \end{pmatrix}.
\end{equation}
The ensemble of matrices \smash{$\widetilde{T}$} was chosen as a starting point in \cite{mezzadri-taylor}. Here, we will start from the general ensemble~$T$ in \eqref{eq-gen_trid}, and later compare with an ensemble of symmetric matrices~$S$, which is defined by choosing $b_j=c_j$ for $j=1,\dots,n-1$ in \eqref{eq-gen_trid}.
Given the similarity transformation~\eqref{eq-T_tilde_def} the question immediately follows: Why are the ensembles $T$ and \smash{$\widetilde{T}$} not equivalent?
The reason is the particular choice of distribution of matrix elements. In both cases, the random variables are chosen as independent, with the $a_j$ complex normal, and
in~\cite{mezzadri-taylor}, \smash{$\widetilde{b}_j$} with uniform phase and chi-squared distribution $\chi_{j\beta}$, $j=1,\dots,n-1$ for its modulus \smash{$|\widetilde{b}_j|$}.
However,~choosing the independent distribution $\chi_{j\beta/2}$ for the moduli $|b_j|$ and $|c_j|$ in $T$, the~resulting distribution~$g_j(y)$ for the matrix elements \smash{$\widetilde{b}_j = c_j {b}_j$}, given by the product of matrix elements of matrix $T$ resulting from the similarity transformation~\eqref{eq-T_tilde_def},
\begin{equation*}
g_j(y)=\frac{1}{2^{\frac{j\beta}{4}-2}\Gamma\bigl(\frac{j\beta}{4}\bigr)}\ y^{j\beta-2}\exp\bigl[-y^4\bigr], \qquad y>0,
\end{equation*}
is {\it not} chi-square \eqref{chik}. Thus, the choice in~\cite{mezzadri-taylor} for \smash{$\widetilde{b}_j$} and the choice above in \eqref{mat_el_T} for $T$ made here are different, and constitute distinct ensembles.
Notice that the phase of \smash{$\widetilde{b}_j$} remains uniformly distributed.

\subsection{Joint distribution and limiting spectral density}
\label{Subsec:jpdf-density}

We present our main results: the joint probability distribution function (jpdf) and the limiting spectral density.
In~\cite{mezzadri-taylor}, it was shown, before choosing a specific distribution of matrix elements, how to recursively construct an invertible map between the matrix elements $a_j$ and \smash{$\widetilde{b}_j$} and the complex eigenvalues $\Lambda$ and the diagonalising matrix $R$ in~\eqref{diagonal}. This step allowed to construct the Jacobian explicitly, and then to derive the joint distribution of eigenvalues. In Section~\ref{Sec:General Beta Ensemble}, we will generalise this construction to the ensemble $T$, which is non-trivial because of the increased number of degrees of freedom.

\begin{Theorem}\label{jpdf_theorem}
The jpdf of the eigenvalues of the random matrix $T$ \eqref{eq-gen_trid} is given by
\begin{equation}
\label{eq:EigDens1}
 P_T(\Lambda) = \frac{1}{Z_\beta^T} \prod_{j=1}^n
 \exp\left(-\frac{|\lambda_j|^2}{2}\right)\prod_{1\le j < k \le n}|\lambda_k -
 \lambda_j|^\beta f_T(\Lambda),
\end{equation}
where $\beta\in \mathbb{R}_+$, $\Lambda=\diag(\lambda_1,\dots,\lambda_n)$, and $Z_\beta^T$ is the normalization constant
\begin{equation*}
Z_{\beta}^{T}=\pi^{3n-2}2^{\frac{n(\beta n-\beta+4)}{4}}\prod_{k=1}^{n-1}\Gamma\left(\frac{\beta k}{4}\right)^2.
\end{equation*}
The function $f_T(\Lambda)$ that contains a coupling between eigenvalues and eigenvectors reads
\begin{equation}
\label{factor}
f_T(\Lambda) =\int_{\mathbb{C}^{(n-1)}}\int_{\mathbb{C}^{n-1}\backslash \{0\}}
\exp\bigl(-g_{T}(\Lambda,\mathbf{r},\mathbf{R}_1)\bigr)\prod_{k=1}^{n}|r_k|^{\frac{\beta}{2}-2}\prod_{i=2}^n |R_i|^{-2}\mathrm{d}^2R_i\mathrm{d}^2r_{i-1},
\end{equation}
where $\mathbf{r} =(r_1,\dots,r_{n}) \in \mathbb{C}^{n}$, $1 = r_1 +\dots+ r_{n}$ denote the first element of each right eigenvector of $T$ $($the first row of the diagonalising matrix $R$ in \eqref{diagonal}$)$,
and $\mathbf{R}_1=(R_1,\dots,R_n)^{\rm t}$ denoting the first right eigenvector of $T$
$($the first column vector of~$R)$.
The exponent $g_T(\Lambda,\mathbf{r})$, which is positive, reads
\begin{equation}
\label{gTdef}
g_T(\Lambda,\mathbf{r},\mathbf{R}_1) = \frac{1}{2}\Tr\bigl(TT^\dagger\bigr)
 - \frac{1}{2}\sum_{j =1}^n |\lambda_j|^2.
\end{equation}
It explicitly depends on the respective random variables $a_j$, $j=1,\dots,n$, $b_k$ and $c_k$, $k=\dots,n-1$. Furthermore, it is rotationally invariant,
$g_{T}(\Lambda,\mathbf{r},\mathbf{R}_1)=g_{T}(\Lambda {\rm e}^{{\rm i}\phi},\mathbf{r},\mathbf{R}_1)$ for all $\phi\in[0,2\pi)$ and has the asymptotics
\begin{equation}
\label{g-asymptotic}
g_{T}(\Lambda,\mathbf{r},\mathbf{R}_1)=O\bigl(|\lambda_j|^2\bigr) \qquad \text{as}
\ \lambda_j\rightarrow\infty, \quad \text{for}\ j=1,2,\dots,n.
\end{equation}
\end{Theorem}
The proof for the ensemble \smash{$\widetilde{T}$} in \eqref{eq-T_tilde_def} was given in \cite{mezzadri-taylor} and takes the same form as Theorem~\ref{jpdf_theorem}. However, the integration in \eqref{factor} for $f_{\widetilde{T}}(\Lambda)$ becomes simpler, over only half as many eigenvector components $r_j$ (compare \eqref{factorS}). The asymptotics \eqref{g-asymptotic} for the corresponding function $g_T(\Lambda,\mathbf{r})$ becomes quartic instead.

It turns out that for the ensemble of complex symmetric tridiagonal matrices $S$, the same simplification appears as for $f_{\widetilde{T}}(\Lambda)$ in \cite{mezzadri-taylor}. Since this ensemble belongs to the same universality class as the general ensemble $T$, we briefly discuss the limiting spectral density here.
The corresponding function reads
\begin{equation}
\label{factorS}
f_S(\Lambda) =
\int_{\mathbb{C}^{n-1}}\exp\bigl[-g_S(\Lambda,\mathbf{r})\bigr]\prod_{j=1}^{n}|r_j|^{\frac{\beta}{2}-2} \prod_{k=1}^{n-1}\mathrm{d}^2r_k.
\end{equation}
The exponent $g_S(\Lambda,\mathbf{r})$, defined in the same way as in \eqref{gTdef} in terms of matrix $S$, shares the same asymptotic behaviour \eqref{g-asymptotic} and rotational invariance as the general case described above.

Let us come to the limiting spectral density for all ensembles considered here.
\begin{Theorem}\label{thm-density beta}
	The spectral density $\rho^{L}(z)$ in the limit $n\rightarrow \infty$ and low temperature regime $\beta \gg 1$ is given for both ensembles $L=T,S$ by
	\begin{equation}\label{eq-limiting density T,S}
		\rho^{L}(z)=\frac{{\rm e}^1}{2\pi}\bigl(\ln(2)-1-\ln\bigl(|z|^2\bigr)\bigr)
	\end{equation}
	for $|z|\in [0,r_0]$ and zero otherwise, with $r_0=\sqrt{2/{\rm e}}$. For the non-symmetric ensemble \smash{$\widetilde{T}$} from~\eqref{eq-T_tilde_def}, we obtain to leading order
	\begin{equation}\label{eq-limiting density T tilde}
		\rho^{\widetilde{T}}(z)=\delta^{(2)}(z),
	\end{equation}
which only has a singular component at the origin.
\end{Theorem}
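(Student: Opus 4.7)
\textbf{Proof proposal for Theorem~\ref{thm-density beta}.}
My plan is to combine the three ingredients flagged in the introduction: the low-temperature reduction to centered tridiagonal matrices (Section~\ref{Sec:low temperature expansion}), the explicit characteristic polynomial of such matrices (Section~\ref{Sec:characteristic polynomial}), and a free-probability argument based on the variance of the coefficients of this polynomial (Section~\ref{Sec:equilibrium density}). Rotational invariance of the limiting measure falls out as a by-product, so the problem reduces to computing a radial log-potential.

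First, in the regime $\beta\gg 1$ the Gaussian diagonal $a_j=O(1)$ is subleading with respect to the off-diagonals, since $|b_j|,|c_j|\sim\chi_{\beta j/2}$ concentrate about $\sqrt{\beta j/2}$ with relative fluctuations $O(1/\sqrt{\beta j})$. I would therefore replace $T$ by the centered tridiagonal $T_0$ with $a_j\equiv 0$ and with $|b_j|,|c_j|$ frozen at their mean values, retaining only the independent uniform phases; the symmetric ensemble $S$ reduces analogously to a centered $S_0$ with $b_j=c_j$. By the formula from Section~\ref{Sec:characteristic polynomial}, the characteristic polynomial of $T_0$ has the form
\begin{equation*}
    p_{T_0}(z) \;=\; \sum_{k=0}^{\lfloor n/2\rfloor}(-1)^k M_k\, z^{n-2k},
    \qquad M_k=\sum_{\substack{S\subset\{1,\dots,n-1\}\\ |S|=k,\; S\text{ non-adjacent}}}\ \prod_{j\in S} b_j c_j.
\end{equation*}
Only even power-differences survive, and each $b_j c_j$ carries an independent uniform phase, so $p_{T_0}(e^{i\phi}z)$ equals $e^{in\phi}p_{T_0}(z)$ in distribution; this already gives rotational invariance of the empirical spectral measure.

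The density I would then read off from the deterministic limit of the log-potential $U_n(z)=n^{-1}\log|p_{T_0}(z)|$. The free-probability result referenced in Section~\ref{Sec:equilibrium density} identifies that limit in terms of the variances $\E|M_k|^2$ alone, which by independence and the vanishing of cross-phase contributions equal $\sum_S\prod_{j\in S}(\beta j/2)^2$. After the natural scaling $z=\sqrt{\beta n/2}\,\zeta$, a Laplace/Riemann-sum analysis of
\begin{equation*}
    \E|p_{T_0}(z)|^2 \;=\; \sum_k \E|M_k|^2\, |z|^{2(n-2k)}
\end{equation*}
selects a maximising index $k(\zeta)$, and the resulting $U(\zeta)$ is harmonic (equal to $\log|\zeta|$ up to a constant) outside a disc and has a strictly logarithmic profile inside. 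Taking $\rho(\zeta)=(2\pi)^{-1}\Delta_\zeta U(\zeta)$ and imposing $C^1$-matching at the droplet edge produces the density \eqref{eq-limiting density T,S} and pins down $r_0=\sqrt{2/e}$; the normalisation $\int\rho=1$ is an independent consistency check.

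For the non-symmetric ensemble $\widetilde T$ the same scheme applies, but with $|\widetilde b_j|\sim\chi_{\beta j}$ so $\E|\widetilde b_j|^2\sim\beta j$ rather than $\E|b_j c_j|^2\sim(\beta j/2)^2$. The coefficients of the characteristic polynomial of $\widetilde T$ therefore scale with a strictly lower power of $\beta n$; under the Gaussian normalisation common to both ensembles the empirical spectrum of $\widetilde T$ collapses to zero, leaving only a delta mass at the origin at leading order, which is \eqref{eq-limiting density T tilde}.

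The main technical obstacle is the Jensen/concentration step that replaces $\E\log|p_{T_0}(z)|$ by $\tfrac12\log\E|p_{T_0}(z)|^2$, i.e.\ controlling the fluctuations of a random polynomial whose coefficients carry independent uniform phases; this is exactly what the free-probability input from Section~\ref{Sec:equilibrium density} is designed to supply. A secondary, purely combinatorial difficulty is the asymptotic evaluation of $\sum_S\prod_{j\in S}j^2$ over non-adjacent $S$, needed to pin down the Laplace extremiser $k(\zeta)$: it is precisely this non-adjacency constraint that ultimately generates the logarithmic profile of $\rho^L$ rather than the uniform circular law one might naively expect from a Lyapunov-exponent heuristic on the transfer matrix.
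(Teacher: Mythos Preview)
Your overall architecture is exactly the paper's: reduce to the centred matrix $T_0$ in the large-$\beta$ limit, use the explicit coefficients $M_k$ of the characteristic polynomial (your non-adjacent subset sums are precisely Proposition~\ref{prop-formula coefficients P}), and then invoke the Barbarino--Noferini theorem (Theorem~\ref{thm-Barbarino Noferini}) to read off the density as $\tfrac{1}{2\pi}\Delta U(z)$ with $U(z)=\lim_n\tfrac{1}{2n}\log Q_n(|z|^2)$, where $Q_n$ is the polynomial built from the variances $\E|M_k|^2$. Rotational invariance and the $T$/$S$--versus--$\widetilde T$ scaling dichotomy are also correctly identified.

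The gap is the step you flag yourself as ``secondary'': the asymptotic evaluation of $Q_n^D(x)=\sum_k \Var(M_k)\,x^{n-2k}$, where $\Var(M_k)$ is the nested non-adjacent sum $\sum_{\gamma_1}\gamma_1^2\sum_{\gamma_2<\gamma_1-1}\gamma_2^2\cdots$. A direct Laplace analysis would need the large-$n$ behaviour of these sums uniformly in $k/n$, and you do not provide it. The paper bypasses this entirely by observing that the $Q_n^D$ themselves inherit a three-term recurrence from Proposition~\ref{prop-formula coefficients P},
\[
Q_{n+1}^D(x)=xQ_n^D(x)+n^2Q_{n-1}^D(x),
\]
which identifies them with Meixner--Pollaczek polynomials; their generating function $e^{-z}{}_1F_1(\tfrac{1+x}{2},1;2z)$ gives a closed form, and the known large-parameter asymptotics of ${}_2F_1(-n,\tfrac12+n|z|^2;1;2)$ (plus Stirling) then produce $U(z)=m_0+m_1|z|^2+m_2|z|^2\log|z|^2$ directly, from which \eqref{eq-limiting density T,S} and $r_0=\sqrt{2/e}$ follow without any $C^1$-matching. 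For $\widetilde T$ the same recurrence trick yields $Q_{n+1}^G=xQ_n^G+\tfrac{n}{\beta}Q_{n-1}^G$, i.e.\ rescaled Hermite polynomials; the Hermite multiplication theorem collapses $Q_n^G(2n|z|^2)$ to $|z|^{2n}$ at leading order, whence $U(z)=\log|z|$ and the $\delta$-measure. So your plan is sound, but the decisive technical idea you are missing is that the variance polynomial is itself a classical orthogonal polynomial, which replaces the combinatorial saddle-point problem by standard special-function asymptotics.
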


An important step in the derivation is that in the low temperature limit all three ensembles~$T$,~$S$ and~\smash{$\widetilde{T}$} become centred to leading order, that is, the contribution from the diagonal in~\eqref{eq-gen_trid} and \eqref{eq-T_tilde_def} becomes subleading. Therefore, we explicitly compute the coefficients of characteristic polynomials of centred matrices $T$ with $a_j\equiv0$ for $j=1,\dots,n$, expressed in terms of the remaining random variables $b_j$, $c_j$ for $j=1,\dots,n-1$. Define the polynomial of degree $n$
\begin{align}\label{eq: Characteristic Polynomial}
 \mathcal{P}_n(z):=\det\bigl(z\mathbf{1}_n-T|_{a_j=0}\bigr)=z^n+\sum_{\ell=1}^{\lfloor \frac{n}{2}\rfloor}\kappa_\ell^{(n)}z^{n-2\ell}.
\end{align}
We also define $\mathcal{P}_0(z)=1$.
These polynomials have the following properties, where we define sums
$\sum_{l=j}^k$ to vanish for $k<j$.
\begin{Proposition}\label{prop-formula coefficients P}
The polynomials defined in \eqref{eq: Characteristic Polynomial} are either even or odd, depending on the parity of their degree $n$. They obey the following three-term recurrence relation:
\begin{equation}
\label{eq-3 term recurrence}
 \mathcal{P}_{n+1}(z)= z \mathcal{P}_{n}(z)-\widetilde{b}_{n}\mathcal{P}_{n-1}(z) \qquad \mbox{for}\ \ n=1,2,\dots,
\end{equation}
with $\widetilde{b}_j = c_j b_j$. The coefficients $\kappa_\ell^{(n)}$ defined in \eqref{eq: Characteristic Polynomial} are given by
	\begin{equation}\label{eq-formula coefficients P}
		\kappa_\ell^{(n)}=(-1)^\ell \sum_{\gamma_1=2\ell-1}^{n-1}\widetilde{b}_{\gamma_1} \sum_{\gamma_2=2\ell-3}^{\gamma_1-2}\widetilde{b}_{\gamma_2}
		\cdots \sum_{\gamma_\ell=1}^{\gamma_{\ell-1}-2}\widetilde{b}_{\gamma_\ell}
	\end{equation}
	for $\ell=1,2,\dots, \bigl\lfloor \frac{n}{2}\bigr\rfloor$. Note that $\kappa_\ell^{(n)}$ consists of the product of $\ell$ sums.	All sums can be extended to begin with $\gamma_j=1$ for all $j=1,\dots,\ell$ as the additional terms vanish.
\end{Proposition}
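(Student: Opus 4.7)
The plan is to proceed by induction on $n$, using the three-term recurrence \eqref{eq-3 term recurrence} as the main workhorse; both the parity statement and the explicit coefficient formula follow from it. First, I would derive \eqref{eq-3 term recurrence} by Laplace expansion of $\mathcal{P}_n(z) = \det(z\mathbf{1}_n - T|_{a_j=0})$ along its first row. That row contains only two non-zero entries, $z$ at position $(1,1)$ and $-c_{n-1}$ at position $(1,2)$. The $(1,1)$-minor is the determinant of the bottom-right $(n-1)\times(n-1)$ tridiagonal block, hence equals $\mathcal{P}_{n-1}(z)$. For the $(1,2)$-minor, after deletion of the first row and second column, the new first column has its only non-zero entry $-b_{n-1}$ at the top; expanding along that column gives $-b_{n-1}\mathcal{P}_{n-2}(z)$. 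Tracking the cofactor signs yields $\mathcal{P}_n(z)=z\mathcal{P}_{n-1}(z)-c_{n-1}b_{n-1}\mathcal{P}_{n-2}(z)$, which with $\widetilde{b}_{n-1}=c_{n-1}b_{n-1}$ and a shift $n\to n+1$ is exactly \eqref{eq-3 term recurrence}.

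The parity claim is then immediate by induction: $\mathcal{P}_0=1$ and $\mathcal{P}_1=z$ have the parities of $0$ and $1$ respectively, and both $z\mathcal{P}_n$ and $\widetilde{b}_n\mathcal{P}_{n-1}$ on the right-hand side of \eqref{eq-3 term recurrence} have the parity of $n+1$, forcing the same for $\mathcal{P}_{n+1}$.

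For the coefficient formula, matching like powers of $z$ in \eqref{eq-3 term recurrence} gives the scalar recurrence
\[
\kappa_\ell^{(n+1)} = \kappa_\ell^{(n)} - \widetilde{b}_n\,\kappa_{\ell-1}^{(n-1)}, \qquad \kappa_0^{(m)}=1.
\]
I would verify \eqref{eq-formula coefficients P} by induction on $n$, with base case $\mathcal{P}_2(z)=z^2-\widetilde{b}_1$ so that $\kappa_1^{(2)}=-\widetilde{b}_1$. Substituting the inductive hypothesis for $\kappa_\ell^{(n)}$ and $\kappa_{\ell-1}^{(n-1)}$, the term $-\widetilde{b}_n\kappa_{\ell-1}^{(n-1)}$ contributes, up to the overall sign $(-1)^\ell$, a summand in which $\widetilde{b}_n$ is prepended to a nested sum whose outer index ranges from $2\ell-3$ to $n-2$. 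This is precisely the $\gamma_1=n$ term in the desired formula for $\kappa_\ell^{(n+1)}$, since then $\gamma_2$ may go up to $\gamma_1-2=n-2$. The remaining contributions $\gamma_1=2\ell-1,\ldots,n-1$ in $\kappa_\ell^{(n+1)}$ are supplied by $\kappa_\ell^{(n)}$, closing the induction.

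Finally, the remark that each sum may be extended to begin at $\gamma_j=1$ follows from the empty-sum convention: the nested constraints $\gamma_{j+1}\le \gamma_j-2$ force any added term with $\gamma_j<2\ell-2j+1$ to have innermost upper limit below $1$, so it vanishes. The main obstacle in the proof is not conceptual but the careful index bookkeeping: aligning the upper and lower limits of the nested sums under the index shift coming from the recurrence is the only place where a slip is likely to occur.
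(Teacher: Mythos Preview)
Your proposal is correct and follows essentially the same route as the paper: derive the three-term recurrence by two successive cofactor expansions, deduce parity by induction, extract the scalar recursion $\kappa_\ell^{(n+1)}=\kappa_\ell^{(n)}-\widetilde{b}_n\kappa_{\ell-1}^{(n-1)}$, and verify \eqref{eq-formula coefficients P} by induction on $n$. The paper treats the boundary case $\ell=\lfloor(n+1)/2\rfloor$ for even $n+1$ separately (where $\kappa_\ell^{(n)}$ is absent), whereas you absorb it into the unified recursion via the conventions $\kappa_0^{(m)}=1$ and $\kappa_\ell^{(m)}=0$ for $\ell>\lfloor m/2\rfloor$; this is a cosmetic streamlining rather than a different argument, but you should state the second convention explicitly so the recursion is well-defined at the top index.
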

The exact form of the coefficients is known~\cite{PS06} when the polynomials \eqref{eq: Characteristic Polynomial} constitute orthogonal polynomials on $\mathbb{R}$, see also~\cite{LRR91} where a related form was presented without proof, cf.\ equation~\eqref{eq-coeff LRR91} below.
Let us emphasise, however, that in general the coefficients in \eqref{eq-3 term recurrence} are complex, $\widetilde{b}_j\in\mathbb{C}$. In contrast, for real orthogonal polynomials with respect to some density, the coefficients must be positive, as they represent the ratio of the squared norms of the monic orthogonal polynomials. This property is used in \cite{PS06}; thus, we present a detailed proof in Section~\ref{Sec:characteristic polynomial} without this assumption. A similar result on the moments of a specific tridiagonal matrix with vanishing diagonal and the off-diagonal elements being real and positive is given in \cite{AKvI00} equation~(5) with $p=1$, which can be related to the coefficients of the characteristic polynomial. However, we would like to emphasise that these proofs in terms of orthogonal polynomials rely massively on the fact that, due to Favard's theorem (cf.\ \cite[Theorem~4.4]{Chihara}), the existence of a unique probability measure for polynomials satisfying a three-term recurrence is guaranteed if and only if the elements of the \smash{$\widetilde{b}_j$} are real and positive for all $j=1,\dots ,n-1$.

The remaining steps of the proof of Theorem~\ref{thm-density beta} will use Proposition~\ref{prop-formula coefficients P} and results from free probability, to compute the limiting density from the variance of the explicit coefficients~\smash{$\kappa_\ell^{(n)}$}, in particular, \cite[Theorem~1.2]{Barbarino-Noferini}.

\section[Complex beta-ensemble]{Complex $\boldsymbol{\beta}$-ensemble}\label{Sec:General Beta Ensemble}
\subsection{Spectral decomposition of tridiagonal matrices}\label{sec:Decomp}
Let $\Lw(n)$ denote the set of matrices $T$, of the form given in equation~$\eqref{eq-gen_trid}$, with the eased constraints $a_j,b_j,c_j\in\mathbb{C}$ for all $j$, such that $\det(T)\neq0$ and the spectrum of $T$ is non-degenerate. Note that the probability distributions in equations~\eqref{mat_el_T}, define an absolutely continuous measure on the set of tridiagonal matrices~\eqref{eq-gen_trid}. Consequently, the exceptional set of matrices with degenerate spectra forms a set of measure zero and is therefore omitted.

We define
\begin{align}
 & \label{def:D(n)}D(n) = \{\diag(z_1,\dots,z_n) \mid z_j \in \C, \, z_j \neq 0, \,
 j=1,\dots,n \}, \\ \label{def:Lambda(n)}
 & \Lambda(n) = \{ \diag(\lambda_1,\dots,\lambda_n)\mid \lambda_j,\lambda_k \in \C,
 \, \lambda_j \neq \lambda_k \neq 0, \, 1 \le j < k \le n \},\\
 & R^T(n) = \bigl\{R \in \GL(n,\C) \mid T =
 R\Lambda R^{-1},\, T \in \Lw(n),\, \Lambda \in \Lambda(n) \bigr\}.\label{eq:R(n)def}
\end{align}
When $T \in \Lw(n)$, the spectral decomposition of $T$ is given by
\begin{equation}\label{eq:eigdecompgeneral}
T=R\Lambda R^{-1},
\end{equation}
where $\Lambda\in\Lambda(n)$ and $R\in R^{T}(n)$. This decomposition is unique up to a permutation of the eigenvalues and the right multiplication $R\mapsto RD$, where $D\in D(n)$.

As a consequence, the spectral decomposition in equation~\eqref{eq:eigdecompgeneral} defines a bijection
\begin{equation}\label{sing_valued_map_general}
 \mathcal{F}^{T}\colon \ \Lw(n)\rightarrow\mathcal{L}(n)\times\mathcal{R}^{T}(n),
\end{equation}
where
\begin{equation}\label{def:curlyL(n)}
\mathcal{R}^T(n)= R^T(n)/D(n), \qquad \mathcal{L}(n) = \Lambda(n)/\mathfrak{S}_n,
\end{equation}
and $\mathfrak{S}_n$ is the symmetric group of order $n$. We restrict the domain of $\mathcal{F}^{T}$ to $\Lw'(n)=\Lw(n)\setminus \widetilde{B}$, where
\begin{equation*}
 \widetilde{B}=\{T\in \Lw(n) \mid b_j=0\ \text{or}\ c_j=0\ \text{for some}\ j\in\{1,\dots,n\}\}.
\end{equation*}
Since the probability measure in equations~\eqref{mat_el_T}, is absolutely continuous, we note that the set of matrices not in $\Lw'(n)$ has measure zero and is therefore omitted from our analysis.
\begin{Remark}\label{rem:2}
If both $c_{n-j}=0$ and $b_{n-j}=0$ for any $j$ in the matrix~\eqref{eq-gen_trid}, we can perform a~direct sum decomposition. Specifically, we have
 \begin{equation}\label{eq:decomp2}
 T=T_j\oplus U_j,
 \end{equation}
 where $T_j$ is the principal sub-matrix obtained by keeping the first $j$ rows and columns of $T$. In~contrast, $U_j$ denotes the principal sub-matrix obtained by keeping the last $j$ rows and columns of $T$. Thus, we can carry out our analysis on these two smaller-dimensional matrices, namely~$T_j$ and~$U_j$, which are still tridiagonal.
\end{Remark}
Let
\begin{gather*}
 \widetilde{R}^{T}(n) = \{R \in \GL(n,\C) \mid T =
 R\Lambda R^{-1},\, T \in \Lw'(n),\, \Lambda \in \Lambda(n)\},
\end{gather*}
and
\begin{equation*}
 \widetilde{\mathcal{R}}^{T}(n)= \widetilde{R}^{T}(n)/D(n).
\end{equation*}
It was pointed out in Section~\ref{Subsec:complex jacobi} that the matrix ensembles consisting of $T$ \eqref{eq-gen_trid}, respectively~\smash{$\widetilde{T}$} with its definition in \cite{mezzadri-taylor}, are not equivalent in a probabilistic sense. Nevertheless, there exists the similarity transformation \eqref{eq-T_tilde_def}. We use this one-to-one mapping between the matrices (not in distributions) in this section. Therefore, we can inherit the proofs from \cite{mezzadri-taylor} with the replacement $b_j\rightarrow b_jc_j$. Another way to see, why this is the right mapping, is to look into the three-term recurrence relation of the characteristic polynomials. Again, one finds the same recurrence, which is used in the proofs \cite{mezzadri-taylor}, for the general tridiagonal matrix after the suggested replacement.
\begin{Lemma}\label{r_diff_from_zero}
 Let $\mathbf{r}^{\rm t}=(r_1,r_2,\dots,r_n)$ and $\mathbf{v}^{\rm t}=(v_1,\dots,v_n)$ be the first and last rows of matrix $R\in \widetilde{R}^{T}(n)$, respectively. We have
 \begin{equation*}
 r_j\neq0,\qquad v_j\neq0 \qquad \text{for all} \ j\in\{1,\dots,n\}.
 \end{equation*}
 Furthermore, the subset of $\widetilde{R}^{T}(n)$ such that
 \begin{equation*}
 r_1+r_2+\dots+r_n=1,
 \end{equation*}
 spans a set of representatives in $\widetilde{\mathcal{R}}^{T}(n)$.
\end{Lemma}
\begin{proof}
See proof of Lemma~2.1 in~\cite{mezzadri-taylor}.
\end{proof}

Next, we aim to show that the bijection in equation~\eqref{sing_valued_map_general} induces the injective map
\begin{equation*}
 \mathcal{G}^{T} \colon \ \Lw'(n)\rightarrow\mathcal{L}(n)\times\mathcal{H}_n\times \mathcal{V}_n,
\end{equation*}
where
\begin{equation}
 \label{hyperplane}
 \Hp_n = \bigl\{ \rv \in \C^n \mid r_1 + \dotsb + r_n =1, \, r_j \neq 0 \
 \text{for} \ j=1,\dots,n \bigr\},
\end{equation}
and
\[
\mathcal{V}_n=\bigl\{\mathbf{v}^{\rm t}=(v_1,\dots,v_n)^{\rm t}\in\mathbb{C}^n \mid v_j\neq0\ \text{for} \ j=1,\dots,n\bigr\}.
\]

From Lemma~\ref{r_diff_from_zero} and equation~\eqref{sing_valued_map_general}, it follows that $\mathcal{G}^{T}$ is single-valued. To prove the injectivity of $\mathcal{G}^T$, we let
\begin{equation}\label{def:Rgenmodel}
R=\begin{pmatrix}
 \rv_1^{\rm t}\\\vdots\\\rv^{\rm t}_n
\end{pmatrix}=(\Rv_1,\dots,\Rv_n)\qquad\text{and}\qquad R^{-1}=\begin{pmatrix}
 \mathbf{L}_1^{\rm t}\\\vdots\\\mathbf{L}_n^{\rm t}
\end{pmatrix}=(\lv_1,\dots,\lv_n),
\end{equation}
where $\mathbf{R}_j$ denotes the $j$-th column of $R$ and $\mathbf{r}^{\rm t}_j$ denotes its $j$-th row, with $\mathbf{r}_1^{\rm t}=\mathbf{r}^{\rm t}$ and $\mathbf{r}_n^{\rm t}=\mathbf{v}^{\rm t}$ from Lemma~\ref{r_diff_from_zero}. Similarly, let $\mathbf{L}_j^{\rm t}$ denote the $j$-th row of $R^{-1}$ and $\mathbf{l}_j$ denote the $j$-th column.\footnote{One should note that $\mathbf{R}_j$ and $\mathbf{L}_j^{\rm t}$ denote the right and left eigenvectors in $R$ and $R^{-1}$, respectively.}

Furthermore, as in the previous proofs, we will write
\begin{equation}\label{eq:biorthogonality}
\rv^{\rm t}_j \lv_{k} = r_{j1}l_{1k}+ \dotsb + r_{jn}l_{nk} = \delta_{jk} \qquad\text{for} \
j,k=1,\dots,n,
\end{equation}
where the notation on the left-hand side represents matrix multiplication between the row vector~$\rv_j^{\rm t}$ and the column vector $\lv_k$, rather than the usual scalar product in $\mathbb{C}^n$. Finally, let $(\mathbf{x})_1$ denote the first entry of the vector $\mathbf{x}$.
\begin{Theorem}\label{thm:roughbijection}
Let $\Lambda\in\Lambda(n)$, $\mathbf{r}^{\rm t}=(r_1,r_2,\dots,r_n)\in\mathcal{H}_n$ and $\mathbf{R}=(R_1,R_2,\dots,R_n)^{\rm t}\in\mathcal{V}_n$. Then, there exists a unique $R\in \widetilde{R}^{T}(n)$ whose first row is $\mathbf{r}^{\rm t}$ and first column is $\mathbf{R}$, and a unique $T\in\Lw'(n)$ such that $T=R\Lambda R^{-1}$, except for $\rv^{\rm t}$ belonging to an exceptional set $M\subset\mathcal{H}_n$ of Lebesgue measure zero.
\end{Theorem}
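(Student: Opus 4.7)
The plan is to adapt and extend the iterative construction of Mezzadri--Taylor~\cite{mezzadri taylor}, who handled the special ensemble $\widetilde{T}$ with $c_j\equiv 1$. For the general matrix $T$ the upper off-diagonal is free, so an extra $n-1$ complex pieces of data are needed, and these are supplied by the first-column vector $\mathbf{R}\in\mathcal{V}_n$. Note first that consistency of the two data vectors forces $R_1=r_1$ (both equal $R_{11}$), and Lemma~\ref{r_diff_from_zero} together with $\mathbf{R}\in\mathcal{V}_n$ guarantees that every $r_j$ and $R_k$ appearing as a denominator below is nonzero.

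The key analytic tool is the three-term recurrence that $TR=R\Lambda$ and the tridiagonal form of $T$ impose on each column of $R$:
\[
c_{n-k}R_{k+1,j}=(\lambda_j-a_{n-k+1})R_{k,j}-b_{n-k+1}R_{k-1,j},\qquad k=1,\dotsc,n-1,
\]
with $R_{0,j}=0$, $R_{1,j}=r_j$ and closed by the last-row identity $(\lambda_j-a_1)R_{n,j}=b_1R_{n-1,j}$. Setting $R_{k,j}=P_{k-1}(\lambda_j)r_j$ then packages every entry of $R$ as a polynomial in $\lambda_j$ whose coefficients are the still-unknown matrix entries of $T$.

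The next step is to reconstruct $T$ layer by layer from the top row. At layer $k$ the three new unknowns $(a_{n-k+1},b_{n-k+1},c_{n-k})$ must be pinned down by three independent scalar relations: (i) the recurrence above evaluated at $\lambda=\lambda_1$, which uses $R_{k+1},R_k,R_{k-1}$ from the first-column data $\mathbf{R}$; (ii) a coefficient-matching equation in the polynomial identity $(\lambda-a_1)P_{n-1}(\lambda)-b_1P_{n-2}(\lambda)=(c_{n-1}\cdots c_1)^{-1}\prod_{j=1}^n(\lambda-\lambda_j)$, which couples the $c$-entries of the current layer to $\Lambda$; (iii) a sum rule obtained by evaluating the recurrence at each $\lambda_j$, multiplying by $r_j$ and summing, and then invoking the normalisation $\sum_j r_j=1$, which supplies the third equation tying layer~$k$ to the first-row data $\mathbf{r}$. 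Iterating for $k=1,\dotsc,n-1$ and closing with the termination relation determines every matrix entry as an explicit rational expression in the data, and the remaining rows of $R$ follow automatically from the recurrence.

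The hardest step is (iii). In~\cite{mezzadri taylor} each layer carried only two unknowns, so the first row alone sufficed; here a genuine third independent relation per layer is needed, and disentangling the contributions of $\mathbf{r}$ and $\mathbf{R}$ without redundancy is the principal technical point. The exceptional set $M\subset\mathcal{H}_n$ is the Zariski-closed locus on which the Jacobian of this layer-by-layer inversion degenerates: it is cut out by the vanishing of a finite family of polynomial determinants in $\Lambda$, $\mathbf{r}$ and $\mathbf{R}$ that appear as denominators in the formulae above, hence has Lebesgue measure zero. Uniqueness of $(R,T)$ then follows because at each step the matrix entries are given by single-valued rational expressions in the data, and invertibility of $R$ together with $b_j,c_j\ne 0$ place the reconstructed pair in $\widetilde R^T(n)\times\Lw'(n)$ off~$M$.
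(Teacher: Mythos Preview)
Your skeleton is right---iterate layer by layer, feeding in both the first-row data $\mathbf{r}$ and the first-column data $\mathbf{R}$---but the per-layer system you propose does not close. For $k\ge 2$ there are three complex unknowns $(a_{n-k+1},b_{n-k+1},c_{n-k})$, yet your relations (i) and (iii) together supply at most two independent scalar equations: (i) is one equation, and (iii) as written (summing the recurrence over $j$, with or without the extra factor $r_j$) is one more---and it even involves the still-unknown row sum $\sum_j R_{k+1,j}$. Relation (ii) does not rescue this: the characteristic-polynomial identity is a single global polynomial identity in $\lambda$ that only becomes available after \emph{all} layers have been built, so it cannot be peeled off into one usable equation at layer $k$.

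The paper closes the system by running a second recurrence in parallel, for the columns $\mathbf{l}_j$ of $R^{-1}$, initialised with $\mathbf{l}_1=(1,\dots,1)^t$ (consistent with $\sum_j r_j=1$). The bilinear pairings then give, at step $j$, the explicit formulae $a_{n-j+1}=\mathbf{r}_j^t\Lambda\mathbf{l}_j$ and $b_{n-j}=\mathbf{r}_{j+1}^t\Lambda\mathbf{l}_j$, while $c_{n-j}$ is read off from the prescribed first-column entry $R_{j+1}=(\mathbf{r}_{j+1}^t)_1$ via the row recurrence; the next $\mathbf{r}_{j+1}^t$ and $\mathbf{l}_{j+1}$ then follow. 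The biorthogonality $\mathbf{r}_j^t\mathbf{l}_k=\delta_{jk}$ is verified inductively and is precisely what replaces your missing third equation at every layer. Your description of the exceptional set $M$ as the zero locus of the rational denominators is correct in spirit; in the paper it is exactly the set where some $b_{n-j}$ or $c_{n-j}$ vanishes along the recursion, a finite union of rational hypersurfaces in $\mathbf{r}$ and hence of Lebesgue measure zero.
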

\begin{proof}
We start the proof by taking $R$ from the set of representatives of $\widetilde{\mathcal{R}}^{T}(n)$ defined in Lemma~\ref{r_diff_from_zero}.
Let us write the matrix equations
\begin{align*}
 TR=R\Lambda\qquad\text{and}\qquad
 R^{-1}T=\Lambda R^{-1},
\end{align*}
as a set of vector equations
\begin{align}\label{eq:genmodelrecrelation}
 &\mathbf{r}_j^{\rm t}\Lambda=b_{n-j+1}\mathbf{r}^{\rm t}_{j-1}+a_{n-j+1}\mathbf{r}_j^{\rm t}+c_{n-j}\mathbf{r}^{\rm t}_{j+1},\\
 &\Lambda\mathbf{l}_j=c_{n-j+1}\mathbf{l}_{j-1}+a_{n-j+1}\mathbf{l}_j+b_{n-j}\mathbf{l}_{j+1}\label{eq:genmodelrecrelation1}
\end{align}
for $j=1,2,\dots,n$, where we have adopted the notation in equation~\eqref{def:Rgenmodel}. Let\footnote{Lemma~\ref{r_diff_from_zero} implies that the entries of $\mathbf{R}_1$ are all different from zero. Thus, $\mathbf{R}_1\in\mathcal{V}_n$.} $\mathbf{R}=\mathbf{R}_1$ denote the first column of $R$ and $\mathbf{r}_1^{\rm t}=\mathbf{r}^{\rm t}$. We want to show that given $\Lambda$, $\mathbf{r}_1^{\rm t}$ and $\mathbf{R}_1$, with $\mathbf{r}_1^{\rm t}\mathbf{l}_1=1$, where
$\mathbf{l}_1=(1,\dots,1)^{\rm t}$,
we can reconstruct $T$, $R$ and $R^{-1}$ uniquely from the recurrence relations in equations~\eqref{eq:genmodelrecrelation} and~\eqref{eq:genmodelrecrelation1}, with boundary conditions $\mathbf{l}_0=\mathbf{r}_{0}^{\rm t}=\mathbf{0}$ and $b_0=c_0=0$.
For $j=1$, in equations~\eqref{eq:genmodelrecrelation} and~\eqref{eq:genmodelrecrelation1}, we thus have
\begin{align}
\label{cnot1recrelation}
 &\mathbf{r}^{\rm t}_1\Lambda=a_n\mathbf{r}^{\rm t}_1+c_{n-1}\mathbf{r}^{\rm t}_2,\\
 &\Lambda\mathbf{l}_1=a_n\mathbf{l}_1+b_{n-1}\mathbf{l}_2\label{cnot1recrelation1}.
\end{align}
Write
\begin{align}
 &a_n=\mathbf{r}_1^{\rm t}\Lambda\mathbf{l}_1,\label{eq:a_1}\\
 &c_{n-1}=\frac{\bigl(\mathbf{r}_1^{\rm t}\Lambda\bigr)_1-a_n\bigl(\mathbf{r}^{\rm t}_1\bigr)_1}{\bigl(\mathbf{r}_2^{\rm t}\bigr)_1}, \\
 &\label{eq:r_2}\mathbf{r}_2^{\rm t}=\frac{1}{c_{n-1}}\bigl(\mathbf{r}_1^{\rm t}\Lambda-a_n\mathbf{r}_1^{\rm t}\bigr),\\
 &b_{n-1}=\mathbf{r}_2^{\rm t}\Lambda\mathbf{l}_1,\\
 &\mathbf{l}_2=\frac{1}{b_{n-1}}(\Lambda\mathbf{l}_1-a_n\mathbf{l_1})\label{eq:l_2},
\end{align}
where in the above equations we have $\bigl(\rv_j^{\rm t}\bigr)_1=R_j$.
Notice that we successively express the left-hand sides in terms of known quantities from the previous equations, i.e., $a_n$ in terms of~$\Lambda$,~$\mathbf{r}_1^{\rm t}$ and $\mathbf{l}_1$, then $c_{n-1}$ in terms of these and $\bigl(\mathbf{r}_2^{\rm t}\bigr)_1=R_2$ etc.

In the case where $c_{n-1}=b_{n-1}=0$, the constructions in equations~\eqref{eq:r_2} and~\eqref{eq:l_2} fail. Therefore, we refer to Remark~\ref{rem:2} and decompose $T$ into two smaller tridiagonal matrices. Specifically, we express $T$ as the direct sum
\[
T=T_1\oplus U_{n-1},
\]
where $T_1$ and $U_1$ are defined as in Remark~\ref{rem:2}. This decomposition reduces the problem to two lower-dimensional matrices, which can be handled in two separate proofs. In general, if we encounter $j$ for which $c_{n-j}=b_{n-j}=0$, we use the decomposition in equation~\eqref{eq:decomp2} and carry out separate proofs for $T_j$ and $U_{n-j}$. The proofs will follow a similar structure to the one presented here, as $T_j$ and $U_{n-j}$ are tridiagonal matrices, albeit with lower degrees of freedom. Essentially, we restate the theorem for these lower-dimensional matrices and then perform the proofs for~$T_j$ and $U_{n-j}$ independently. Therefore, for the remainder of the proof, we assume that~$c_{n-j}$ and~$b_{n-j}$ are never simultaneously zero for any~$j$, thereby ensuring that our constructions avoid any singularities.
Additionally, since the probability distributions defined in equations~\eqref{mat_el_T} are absolutely continuous, for any fixed $\Lambda,$ the set of exceptional points $\rv_1^{\rm t}$ such that $b_{n-1}=0$ or $c_{n-1}=0$ has measure zero in $\Hp_n$ and can therefore be neglected.

The quantities in equations~\eqref{eq:a_1}--\eqref{eq:l_2} solve the recurrence relations in equations~$\eqref{cnot1recrelation}$ and $\eqref{cnot1recrelation1}$. We now check that the solutions are consistent with the equation $RR^{-1}=\mathbf{1}_n$. From the equations for $a_n$ and $\mathbf{r}_2^{\rm t}$, one is able to deduce that $\mathbf{r}_2^{\rm t}\mathbf{l}_1=0$. Then, from the equation for~$a_n$ and~$\mathbf{l}_2$, one can show that $\mathbf{r}_1^{\rm t}\mathbf{l}_2=0$. Furthermore,
\[
\mathbf{r}_2^{\rm t}\mathbf{l}_2=\frac{1}{b_{n-1}}\bigl(\mathbf{r}_2^{\rm t}\Lambda\mathbf{l}_1-a_n\mathbf{r}^{\rm t}_2\mathbf{l}_1\bigr)=1.
\]
Note that combining the orthogonality relations $\mathbf{r}_1^{\rm t}\mathbf{l}_2=0$ and $\mathbf{r}_2^{\rm t}\mathbf{l}_2=1$ gives
\[
\mathbf{r}_1^{\rm t}\Lambda\mathbf{l}_2=c_{n-1},
\]
as expected. Furthermore, by the orthogonality relations $\mathbf{r}_2^{\rm t}\mathbf{l}_2=1$ and $\mathbf{r}_2^{\rm t}\mathbf{l_1}=0$, we have
\[
\mathbf{r}_2^{\rm t}\Lambda\mathbf{l}_1=b_{n-1},
\]
as expected.

When $1<j<n$ take
$\mathbf{r}_{j-1}^{\rm t}$, $\mathbf{r}_j^{\rm t}$, $\mathbf{l}_{j-1}$, $\mathbf{l}_j$,
subject to the conditions
\begin{subequations}\label{eqs: Gen Cases}
\begin{align}
&\mathbf{r}_j^{\rm t}\mathbf{l}_j=\mathbf{r}^{\rm t}_{j-1}\mathbf{l}_{j-1}=1,\\
&\mathbf{r}_j^{\rm t}\mathbf{l}_{j-1}=\mathbf{r}^{\rm t}_{j-1}\mathbf{l}_j=0,\\
&c_{n-j+1}=\frac{\bigl(\mathbf{r}_{j-1}^{\rm t}\Lambda\bigr)_1-b_{n-j+2}\bigl(\mathbf{r}_{j-2}^{\rm t}\bigr)_1-a_{n-j+2}\bigl(\mathbf{r}_{j-1}^{\rm t}\bigr)_1}{\bigl(\mathbf{r}_{j}^{\rm t}\bigr)_1}\neq0,\\
&b_{n-j+1}=\mathbf{r}_j^{\rm t}\Lambda\mathbf{l}_{j-1}\neq0.
\end{align}
\end{subequations}
Define
\begin{subequations}\label{eqs: Gen cases 2}
\begin{align}
 &a_{n-j+1}=\mathbf{r}_j^{\rm t}\Lambda\mathbf{l}_j,\\
 &c_{n-j}=\frac{\bigl(\mathbf{r}_j^{\rm t}\Lambda\bigr)_1-b_{n-j+1}\bigl(\mathbf{r}_{j-1}^{\rm t}\bigr)_1-a_{n-j+1}\bigl(\mathbf{r}_j^{\rm t}\bigr)_1}{\bigl(\mathbf{r}_{j+1}^{\rm t}\bigl)_1},\\
 &\mathbf{r}_{j+1}^{\rm t}=\frac{1}{c_{n-j}}\bigl(\mathbf{r}_j^{\rm t}\Lambda-b_{n-j+1}\mathbf{r}^{\rm t}_{j-1}-a_{n-j+1}\mathbf{r}_j^{\rm t}\bigr),\\
 &b_{n-j}=\mathbf{r}_{j+1}^{\rm t}\Lambda\mathbf{l}_j,\\
 &\mathbf{l}_{j+1}=\frac{1}{b_{n-j}}(\Lambda\mathbf{l}_j-c_{n-j+1}\mathbf{l}_{j-1}-a_{n-j+1}\mathbf{l}_j).
\end{align}
\end{subequations}
For any fixed $\Lambda$, we have $b_{n-j}=0$ for $\rv_{j+1}^{\rm t}$ and $\lv_j$ belonging to a set of measure zero in $\C^n.$
In turn, $\rv_{j+1}^{\rm t}$ and $\lv_j$ are rational functions of the elements of $\rv_1^{\rm t}$; therefore, $b_{n-j}=0$ only for
$\rv_1^{\rm t}$ belonging to a set of measure zero in $\Hp_n.$ Note, a similar statement can be made when $c_{n-j}=0$. Combining equations~\eqref{eqs: Gen Cases} with equations~\eqref{eqs: Gen cases 2} and proceeding in a similar way too, when $j=1$, we prove that
\begin{align*}
&\mathbf{r}_{j+1}^{\rm t}\mathbf{l}_j=\frac{1}{c_{n-j}}\bigl(\mathbf{r}_j^{\rm t}\Lambda-b_{n-j+1}\mathbf{r}_{j-1}^{\rm t}-a_{n-j+1}\mathbf{r}_j^{\rm t}\bigr)\mathbf{l}_j=0,\\
&\mathbf{r}_j^{\rm t}\mathbf{l}_{j+1}=\frac{1}{b_{n-j}}\mathbf{r}_j^{\rm t}(\Lambda\mathbf{l}_{j}-c_{n-j+1}\mathbf{l}_{j-1}-a_{n-j+1}\mathbf{l}_j)=0,\\
&\mathbf{r}_{j+1}^{\rm t}\mathbf{l}_{j-1}=\frac{1}{c_{n-j}}\bigl(\mathbf{r}_j^{\rm t}\Lambda-b_{n-j+1}\mathbf{r}_{j-1}^{\rm t}-a_{n-j+1}\mathbf{r}_j^{\rm t}\bigr)\mathbf{l}_{j-1}=0,\\
&\mathbf{r}_{j-1}^{\rm t}\mathbf{l}_{j+1}=\frac{1}{b_{n-j}}\mathbf{r}_{j-1}^{\rm t}(\Lambda\mathbf{l}_j-c_{n-j+1}\mathbf{l}_{j-1}-a_{n-j+1}\mathbf{l}_j)=0,
\end{align*}
alongside
\begin{align*}
&\mathbf{r}_{j+1}^{\rm t}\mathbf{l}_{j+1}=\frac{1}{b_{n-j}}\mathbf{r}_{j+1}^{\rm t}(\Lambda\mathbf{l}_j-c_{n-j+1}\mathbf{l}_{j-1}-a_{n-j+1}\mathbf{l}_j)=1,\\
&\mathbf{r}_j^{\rm t}\Lambda\mathbf{l}_{j+1}=c_{n-j},\\
&\mathbf{r}_{j+1}^{\rm t}\Lambda\mathbf{l}_j=b_{n-j}.
\end{align*}
The orthogonality relations
\[
\mathbf{r}_{j+1}^{\rm t}\mathbf{l}_k=0,\qquad\mathbf{r}_k^{\rm t}\mathbf{l}_{j+1}=0
\]
for $k=1,2,\dots,j-2$, follow by induction and from the fact that by construction $\rv_j^{\rm t}\Lambda\lv_k=0$ if $|j-k|>1$.

The boundary conditions $b_0=c_0=0$ terminate the recurrence relations.
\end{proof}

\begin{Remark}
	 It is a well-known fact \cite{Chalker-Mehlig} that in non-Hermitian random matrix theory there exists a residual symmetry in rescaling the respective right and left eigenvectors simultaneously as follows:
		\begin{equation}
			r_j^{\rm t} \to d_j r_j^{\rm t},\qquad l_j\to d_j^{-1} l_j, \qquad 0\neq d_j\in\mathbb{C}, \qquad j=1,\dots,n.
			\label{invariance}
		\end{equation}
		This transformation does not change the biorthogonality of left and right eigenvectors in \eqref{eq:biorthogonality}, nor does it change the eigenvalues.
		It is this rescaling that fixes the $n-1$ additional parameters $c_1,\dots,c_{n-1}$ in ensemble $T$ given in \eqref{eq-gen_trid}, compared to ensemble $\tilde{T}$ given in \eqref{eq-T_tilde_def}. The matrix elements of the diagonal matrix $C$ in~\eqref{eq-Cdef} relating the two ensembles provides the rescaling in~\eqref{invariance}, with $d_1=1$, $d_2=c_{n-1}$, $d_3=c_{n-1}c_{n-2}$, $\dots$, $d_n= c_{n-1}\cdots c_1$. It is not difficult to see that this rescaling together with $b_j\to b_jc_j$ maps the recurrence~\eqref{cnot1recrelation}--\eqref{eq:l_2} above to the recurrence (2.20a)--(2.21d) in \cite{mezzadri-taylor}.
\end{Remark}

\subsection{Jacobian}
\label{eq:JacobianGenMatrix}
The generalisation of the volume element calculation to the spectral decomposition of $T\in\Lw(n)$ from the non-symmetric matrix calculations, as seen in~\cite{mezzadri-taylor}, is more computationally involved. We need to introduce another $n-1$ complex parameters, requiring a more careful adaptation of the previous theorems. To begin, we noted in Section~\ref{sec:Decomp} that the $6n-4$ real matrix parameters of $T$ are denoted by
\begin{equation*}
 \mathbf{a}:=(a_n,\dots,a_1),\qquad\mathbf{b}:=(b_{n-1},\dots,b_1),\qquad\mathbf{c}:=(c_{n-1},\dots,c_1),
\end{equation*}
and are in an injective correspondence with the $6n-4$ real spectral parameters
\begin{equation*}
 \blambda:=(\lambda_1,\dots,\lambda_n),\qquad\rv:=(r_1,\dots,r_n)\qquad\text{and}\qquad\mathbf{R}_1:=(R_1,\dots,R_n),
\end{equation*}
when $r_1+\dots+r_n=1$. Recall, we previously defined $\{\lambda_j\}_{j=1}^n$ to denote the $n$ distinct eigenvalues of $T$, $\{r_i\}_{i=1}^n$ as the first component of the right eigenvectors and $\{R_i\}_{i=1}^n$ as the elements of the first right eigenvector. There is a nuance in obtaining the $6n-4$ real spectral parameters since there is an overlap between the sets $\{r_i\}_{i=1}^n$ and $\{R_i\}_{i=1}^n$ that needs to be addressed. Specifically, we have $r_1=R_1$. Furthermore, only $n-1$ of $r_i$ are independent since $r_1+\dots+r_n=1$. Let the Vandermonde determinant be denoted by $\Delta_n(\blambda)=\prod_{1\leq j<k\leq n}(\lambda_j-\lambda_k)$.

\begin{Lemma}\label{lem:jac1}
Let $T=R\Lambda R^{-1}$ be the spectral decomposition of $T\in \Lw(n)$, where $R$ is defined in~\eqref{def:Rgenmodel} and
$\Lambda = \diag(\lambda_1, \dots,\lambda_n)$. We have
\begin{equation}\label{eq:Vandermonde determinant reduced}
\Delta_n(\boldsymbol{\lambda})^2=\frac{\prod_{j=1}^{n-1}(b_jc_j)^j}{\prod_{j=1}^{n}r_{j}},
\end{equation}
where $\rv = (r_1,\dots,r_n)$ is the first row of $R$ and
$r_1 + \dotsb + r_n =1$.
\end{Lemma}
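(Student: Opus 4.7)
The plan is to extract $\Delta_n(\blambda)^2$ from the identity $\det R\cdot\det R^{-1}=1$, after noting that the eigenvector recurrences established in the proof of Theorem~\ref{thm:roughbijection} factor every entry of $R$ as $r_k$ times a polynomial of degree $j-1$ evaluated at $\lambda_k$, and symmetrically every entry of $R^{-1}$ as a polynomial in $\lambda_k$. The two Vandermonde factors will square up, the leading coefficients of those polynomials will supply $\prod_k(b_kc_k)^k$, and the diagonal normalisation of $R$ will supply $\prod_j r_j$.

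Concretely, reading \eqref{eq:genmodelrecrelation} componentwise against $\lambda_k$ leads me to define polynomials $\pi_j(\lambda)$ of degree $j-1$ by $\pi_0\equiv 0$, $\pi_1\equiv 1$ and
\[
c_{n-j}\,\pi_{j+1}(\lambda)=(\lambda-a_{n-j+1})\pi_j(\lambda)-b_{n-j+1}\pi_{j-1}(\lambda),\qquad j\ge 1.
\]
An induction on $j$ starting from $\rv_1^t=\rv$ then gives $R_{j,k}=r_k\,\pi_j(\lambda_k)$, so $R=P\,\diag(r_1,\dotsc,r_n)$ with $P_{j,k}=\pi_j(\lambda_k)$. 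The same procedure applied to \eqref{eq:genmodelrecrelation1} with boundary condition $\lv_1=(1,\dotsc,1)^t$ produces a sister sequence
\[
b_{n-j}\,\tilde\pi_{j+1}(\lambda)=(\lambda-a_{n-j+1})\tilde\pi_j(\lambda)-c_{n-j+1}\tilde\pi_{j-1}(\lambda),
\]
together with $(R^{-1})_{k,j}=\tilde\pi_j(\lambda_k)$; equivalently $R^{-1}=\tilde P^{t}$ with $\tilde P_{j,k}=\tilde\pi_j(\lambda_k)$.

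The recurrences supply the leading coefficients $\alpha_j=\prod_{i=1}^{j-1}c_{n-i}^{-1}$ and $\tilde\alpha_j=\prod_{i=1}^{j-1}b_{n-i}^{-1}$. Elementary row reduction of $P$ and $\tilde P$ against the classical Vandermonde skeleton $W_{j,k}=\lambda_k^{j-1}$ (legitimate because each polynomial of degree $<j$ lies in the span of the preceding rows) gives
\[
\det P=\Big(\prod_{k=1}^{n-1}c_k^{-k}\Big)\det W,\qquad \det\tilde P=\Big(\prod_{k=1}^{n-1}b_k^{-k}\Big)\det W,
\]
after telescoping the double products via the substitution $k=n-i$. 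Since $(\det W)^2=\Delta_n(\blambda)^2$, substituting $\det R=\det P\cdot\prod_j r_j$ and $\det R^{-1}=\det\tilde P$ into $\det R\cdot\det R^{-1}=1$ and isolating $\Delta_n(\blambda)^2$ yields the lemma. I do not foresee a serious obstacle beyond the repeated index shift between the matrix labels $a_{n-j+1},b_{n-j+1},c_{n-j}$ and the product labels $k=n-i$; the divisions by $c_{n-j}$ and $b_{n-j}$ are safe on $\Lw'(n)$, and any sign from the Vandermonde convention cancels upon squaring.
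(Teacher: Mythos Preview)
Your argument is correct and is precisely the standard route for this identity: factor the rows of $R$ and the columns of $R^{-1}$ through the two polynomial sequences generated by the tridiagonal recurrences, pull out the leading coefficients to reduce to Vandermonde determinants, and read off the result from $\det R\cdot\det R^{-1}=1$. The paper does not give an independent argument here but simply invokes \cite[Lemma~3.3]{mezzadri taylor} with the substitution $b_j\mapsto b_jc_j$; that reference treats the case $c_j\equiv 1$, where your $\pi_j$ are monic and only the $\tilde\pi_j$ contribute nontrivial leading coefficients, so your write-up is effectively an explicit unpacking of the cited proof in the general setting rather than a different method.
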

\begin{proof}
 The proof follows analogously to the proof of~\cite[Lemma~3.3]{mezzadri-taylor} when the substitution $b_j\to b_jc_j$ is made.
\end{proof}

The following theorem generalises~\cite[Theorem~3.1]{mezzadri-taylor} to matrices of the form~\eqref{eq-gen_trid}.
\begin{Theorem}\label{thm:generalised Jacobian}
 We have
\begin{equation*}
\prod_{j=1}^n {\rm d}^2a_j \prod_{k=1}^{n-1}{\rm d}^2b_k
{\rm d}^2c_k = \left|\frac{\prod_{j=1}^{n-1}b_jc_j }{\prod_{k=1}^n r_k\prod_{w=2}^nR_w}\right|^2
\prod_{k=1}^{n-1}{\rm d}^2r_k\prod_{w=2}^n{\rm d}^2R_w\prod_{j=1}^n {\rm d}^2\lambda_j.
\end{equation*}
\end{Theorem}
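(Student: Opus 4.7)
The plan is to factor the change of variables in three stages by using the similarity $\widetilde{T}=CTC^{-1}$ from \eqref{eq-T_tilde_def}. This is natural because (i) $\widetilde{T}$ and $T$ share the eigenvalues, and since $C_{11}=1$ they also share the first row $\mathbf r^t$ of the diagonalising matrix, and (ii) the Jacobian between matrix elements and spectral parameters for $\widetilde{T}$ is already known from \cite[Theorem 3.1]{mezzadri taylor}.

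First, I would pass from $(b_k,c_k)$ to $(\widetilde b_k,c_k)$ with $\widetilde b_k=b_k c_k$. For each $k$ this map is complex-linear in $b_k$ with factor $c_k$, so the real Jacobian squared equals $|c_k|^2$ and
\begin{equation*}
\prod_{k=1}^{n-1} d^2 b_k\, d^2 c_k \;=\; \prod_{k=1}^{n-1}|c_k|^{-2}\,\prod_{k=1}^{n-1} d^2 \widetilde b_k\, d^2 c_k.
\end{equation*}
Next, I would invoke \cite[Theorem 3.1]{mezzadri taylor} on the variables $(a,\widetilde b)$:
\begin{equation*}
\prod_{j=1}^n d^2 a_j\prod_{k=1}^{n-1} d^2 \widetilde b_k \;=\; \left|\frac{\prod_{k=1}^{n-1}\widetilde b_k}{\prod_{j=1}^n r_j}\right|^2 \prod_{j=1}^n d^2\lambda_j\prod_{k=1}^{n-1} d^2 r_k,
\end{equation*}
where $(\blambda,\mathbf r)$ are the spectral data of $\widetilde{T}$ and, by~(i), of $T$ as well.

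Finally, at fixed $(\blambda,\mathbf r)$ I would exchange $(c_1,\dots,c_{n-1})$ for $(R_2,\dots,R_n)$. Writing $\widetilde{\mathbf R}_1=C\mathbf R_1$ for the first column of the diagonalising matrix of $\widetilde T$, the $\widetilde{T}$-analogue of Theorem~\ref{thm:roughbijection} expresses each entry $\widetilde R_j=C_{jj}R_j$ as a rational function of $(\blambda,\mathbf r)$ alone. Hence
\begin{equation*}
R_j \;=\; \widetilde R_j(\blambda,\mathbf r)\prod_{l=n-j+1}^{n-1}c_l^{-1},\qquad j=2,\dots,n,
\end{equation*}
and, with $\widetilde R_j$ held fixed, the Jacobian matrix of $(R_2,\dots,R_n)$ with respect to $(c_{n-1},\dots,c_1)$ is lower triangular with diagonal entries $\partial R_j/\partial c_{n-j+1}=-R_j/c_{n-j+1}$; this gives
\begin{equation*}
\prod_{k=1}^{n-1} d^2 c_k \;=\; \left|\frac{\prod_{k=1}^{n-1} c_k}{\prod_{w=2}^n R_w}\right|^2 \prod_{w=2}^n d^2 R_w.
\end{equation*}
Assembling the three stages, the factor $\prod|c_k|^{-2}$ from the first cancels $\prod|c_k|^2$ from the third; using $|\widetilde b_k|=|b_k c_k|$ one arrives at the claimed formula.

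The main obstacle is the justification in the third stage that, at fixed $(\blambda,\mathbf r)$, the quantities $\widetilde R_j$ are independent of the $c_l$'s. This is precisely what the recursion underlying Theorem~\ref{thm:roughbijection} (applied to $\widetilde{T}$) guarantees: eqs.~\eqref{eq:r_2}--\eqref{eq:l_2} and their iterates express each $\widetilde R_j$ as a rational function of $\blambda$ and the first row $\mathbf r^t$ only, not of the $c_l$'s. Should one wish to avoid the appeal to \cite{mezzadri taylor}, the entire Jacobian can alternatively be computed by generalising the recursive argument of \cite[Theorem 3.1]{mezzadri taylor} directly to $T$ and carrying the extra parameters $R_w$ through each step of the recursion of Theorem~\ref{thm:roughbijection}; the resulting triangular structure delivers the same product formula, at the cost of considerably heavier bookkeeping.
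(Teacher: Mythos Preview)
Your argument is correct and takes a genuinely different route from the paper. The paper computes the Jacobian directly: it interlaces the Taylor coefficients of the resolvent identity $((\mathbf 1_n-zT)^{-1})_{11}=\sum_j r_j/(1-z\lambda_j)$ with the rows of the eigenvector equation $T\mathbf R_1=\lambda_1\mathbf R_1$, obtains a triangular system on both sides, and evaluates the resulting $(2n-1)\times(2n-1)$ confluent Vandermonde determinant before applying Lemma~\ref{lem:jac1}. Your approach instead factors through the known result for $\widetilde T$: the change $(b_k,c_k)\to(\widetilde b_k,c_k)$ is elementary; the block $(a,\widetilde b)\to(\blambda,\mathbf r)$ is exactly \cite[Theorem~3.1]{mezzadri taylor} because $\widetilde T$ depends only on $(a,\widetilde b)$ and shares both eigenvalues and first row with $T$ (since $C_{11}=1$); and the residual change $c\to(R_2,\dots,R_n)$ is triangular with diagonal $-R_j/c_{n-j+1}$ because the $\widetilde R_j$ are functions of $(\blambda,\mathbf r)$ alone. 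The factorization of the full Jacobian is justified by the block-triangular structure of the composite map $(a,\widetilde b,c)\to(\blambda,\mathbf r,c)\to(\blambda,\mathbf r,\mathbf R_1)$.

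What each buys: the paper's method is self-contained and makes explicit the mechanism by which the extra $n-1$ parameters $R_w$ enter the calculation, at the price of carrying the interlaced eigenvector rows through the wedge product. Your method is shorter and more transparent, reducing the new theorem to the old one plus two triangular changes of variable, but it relies on \cite{mezzadri taylor} as a black box and on the observation that the first eigenvector of $\widetilde T$ is determined by $(\blambda,\mathbf r)$ via the $\widetilde T$-recursion.
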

\begin{proof}
Let us first repeat the domains of the complex variables. We have \smash{$\{a_j\}_{j=1}^n\in \mathbb{C}^n$}, \smash{$\{b_j\}_{j=1}^{n-1}\in \mathbb{C}^{n-1}$}, \smash{$\{c_j\}_{j=1}^{n-1}\in \mathbb{C}^{n-1}$} for the matrix parameters and \smash{$\{\lambda_j\}_{j=1}^n\in \mathbb{C}^n$}, \smash{$\{r_j\}_{j=1}^{n}\in \mathbb{C}^{n-1}$} with $\sum_{j=1}^n r_j=1$ and $\{R_j\}_{j=2}^n\in \mathbb{C}^{n-1}$ as spectral parameters.

Let $T=R\Lambda R^{-1}$ denote the spectral decomposition of $T$. Denote by $\mathbf{R}_1$ the first column of~$R$ and let its $j$-th component be denoted by $R_j$. We note that the first right eigenvector of $T$ satisfies the equation
\[
T\mathbf{R}_1=\lambda_1\mathbf{R}_1.
\]
Hence, if we equate row $j$ of this equation then
\begin{align}\label{eq:firstrighteigenvector}
 b_{n-j+1}R_{j-1}+a_{n-j+1}R_j+c_{n-j}R_{j+1}=\lambda_1R_j
\end{align}
for $1\leq j\leq n$ with boundary conditions $R_0=c_0=0$. The differential of this equation is
\begin{align}
&R_{j-1}{\rm d}b_{n-j+1}+R_j{\rm d}a_{n-j+1}+R_{j+1}{\rm d}c_{n-j}\nonumber\\
& \qquad =R_j{\rm d}\lambda_1+{\rm d}R_j[\lambda_1-a_{n-j+1}]-b_{n-j+1}{\rm d}R_{j-1}-c_{n-j}{\rm d}R_{j+1}.\label{eq:diff}
\end{align}
Consider the first element of the von Neumann series
\begin{align*}
\bigl((\mathbf{1}_n-zT)^{-1}\bigr)_{11}=\sum_{j=1}^n\frac{r_j}{1-z\lambda_j},
\end{align*}
where
right-hand side follows from the diagonalisation $T=R\Lambda R^{-1}$.
Taking the Taylor expansion of both sides of the equation and equating the coefficients of the powers of $z$ up to $z^{2n-1}$, interlaced with the rows of the eigenvector equation \eqref{eq:firstrighteigenvector}, we achieve\footnote{Note, for every even integer power $2j$ of $z$ we introduce $b_{n-j}$ and $c_{n-j+1}$. To counteract adding two new variables each even integer line we interlace in the rows of the eigenvector equation. This provides us with a~formula for the $b_j$.}
\begin{subequations}
 \label{eq:Taylor_exps_1}
\begin{align}
 \label{eq:ord_0_1}
 &O\bigl(z^0\bigr)\colon \quad 1=\sum_{j=1}^nr_j,\\
 \label{eq:ex1_1}
 &O\bigl(z^1\bigr)\colon \quad a_n=\sum_{j=1}^nr_j\lambda_j,\\
&j=1\colon \quad a_nR_1+c_{n-1}R_2=\lambda_1R_1,\\
&O\bigl(z^2\bigr)\colon \quad
 a_n^2+c_{n-1}b_{n-1}={*} +c_{n-1}b_{n-1}=\sum_{j=1}^nr_j\lambda_j^2
 \label{231d},\\
 &O\bigl(z^3\bigr)\colon \quad{} {*} +a_{n-1}b_{n-1}c_{n-1}=\sum_{j=1}^nr_j\lambda_j^3,\\
&j=2\colon \quad
{} {*} +c_{n-2}R_3=\lambda_1R_2,\\
&O\bigl(z^4\bigr)\colon \quad{} {*} +b_{n-1}c_{n-1}b_{n-2}c_{n-2}=\sum_{j=1}^nr_j\lambda_j^4,\\
 &O\bigl(z^5\bigr)\colon \quad{}{*} +a_{n-2}b_{n-2}c_{n-2}b_{n-1}c_{n-1}=\sum_{j=1}^nr_j\lambda_j^5,\\
 \nonumber & \quad \vdots \hspace{3cm}\vdots \hspace{3.5cm}\vdots\\
 &j=n-1\colon \quad
 {} * +c_{1}R_n=\lambda_1R_{n-1},\\
 &O\bigl(z^{2n-2}\bigr)\colon \quad{} {*} +b_1c_1\cdots b_{n-1}c_{n-1}=\sum_{j=1}^nr_j\lambda_j^{2n-2},\\
 \label{eq:ex2n-1_1}
 &O\bigl(z^{2n-1}\bigr)\colon \quad{} {*}+a_1b_{1}c_{1}\cdots b_{n-1}c_{n-1}=\sum_{j=1}^nr_j\lambda_j^{2n-1}.
\end{align}
\end{subequations}
Here, $*$ denotes terms that have appeared in the previous equations, as exemplified in \eqref{231d}.

We choose the ordering $a_n, c_{n-1}, b_{n-1}, a_{n-1}, c_{n-2}, b_{n-2},\dots$ such that the matrix of coefficients for the differentials of the left-hand side is triangular. Hence, taking the wedge product line by line gives
\begin{align}\label{eq:leftderivative}
(-1)^{(n-1)(n-2)/2}\prod_{j=1}^{n-1}R_{j+1}\prod_{w=1}^{n-1}(b_{w}c_{w})^{2w-1}\prod_{i=1}^{n-1}c_i {\rm d}\mathbf{a}\wedge {\rm d}\mathbf{b}\wedge {\rm d}\mathbf{c},
\end{align}
whilst noting that we have re-arranged equation~\eqref{eq:diff} such that the left-hand side of the differentials of the eigenvector equations~\eqref{eq:Taylor_exps_1} is
\begin{equation}\label{eq:lhsderivative}
R_{j-1}{\rm d}b_{n-j+1}+R_j{\rm d}a_{n-j+1}+R_{j+1}{\rm d}c_{n-j}.
\end{equation}
Differentiating the right-hand side of equations~(\ref{eq:Taylor_exps_1}), without the introduction of the rows of the eigenvector equations, leads to
\begin{equation*}
{\rm d}\left(\sum_{k=1}^nr_k\lambda_k^j\right)=\sum_{k=1}^{n-1}{\rm d}r_k\bigl(\lambda_k^j-\lambda_n^j\bigr)+j\sum_{k=1}^nr_k\lambda_k^{j-1}{\rm d}\lambda_k,
\end{equation*}
where we have used the differential relation
\[
{\rm d}r_n=-\sum_{j=1}^{n-1}{\rm d}r_j,
\]
as a substitution for ${\rm d}r_n$. Next, we note that the right-hand side of the differential of equation~\eqref{eq:firstrighteigenvector} is fixed by equation~\eqref{eq:lhsderivative} to be
\begin{equation*}
R_j{\rm d}\lambda_1+{\rm d}R_j[\lambda_1-a_{n-j+1}]-b_{n-j+1}{\rm d}R_{j-1}-c_{n-j}{\rm d}R_{j+1}.
\end{equation*}

Taking the wedge product, line by line, of the right-hand side differentials of equation~(\ref{eq:Taylor_exps_1}) produces
\begin{align}\label{eq:Jacrearrange1}
 \prod_{j=1}^{n-1}c_j\left(\prod_{k=1}^nr_k \right)
 \det\bigg[\bigl(\lambda_k^j-\lambda_n^j\bigr)_{
 \begin{subarray}{l}
 j=1,\dots,2n-1 \\ k=1,\dots,n-1
 \end{subarray}}
 \bigl(j\lambda_k^{j-1}\bigr)_{
 \begin{subarray}{l}
 j=1,\dots,2n-1 \\ k=1,\dots,n
 \end{subarray}}
 \bigg]
 {\rm d}\mathbf{r}\wedge {\rm d}\mathbf{R}_1\wedge {\rm d}\bm{\lambda}.
\end{align}
The determinant is calculated explicitly in the book by Forrester~\cite[equation~(1.175)]{PetersBook}.
We have
\begin{gather}
(-1)^{(n-1)(n-2)/2}\det\bigg[\bigl(\lambda_k^j-\lambda_n^j\bigr)_{
 \begin{subarray}{l}
 j=1,\dots,2n-1 \\ k=1,\dots,n-1
 \end{subarray}}
 \bigl(j\lambda_k^{j-1}\bigr)_{
 \begin{subarray}{l}
 j=1,\dots,2n-1 \\ k=1,\dots,n
 \end{subarray}}
 \bigg] \nonumber \\
 \label{eq:crazy_det}
 \qquad = \prod_{1 \le j < k \le n}\bigl(\lambda_k - \lambda_j\bigr)^4 = \Delta_n(\blambda)^4.
\end{gather}
Substituting equation~\eqref{eq:crazy_det} into equation~\eqref{eq:Jacrearrange1}, we produce
\begin{align}\label{eq:rhsderivative}
 &(-1)^{(n-1)}(-1)^{(n-1)(n-2)/2}\prod_{j=1}^{n-1}c_j\left(\prod_{k=1}^nr_k \right)\Delta_n(\bm{\lambda})^4
 {\rm d}\mathbf{r}\wedge {\rm d}\mathbf{R}_1\wedge {\rm d}\bm{\lambda}.
\end{align}
Note, to achieve the above result one needs to utilise that the wedge product is anti-commutative. This allows us to swap the order of the wedge products and therefore obtain the same wedge product that was seen in the non-symmetric matrix, wedged with the differentials of the interlaced eigenvector equation rows.
Combining the equations~(\ref{eq:leftderivative}) and~(\ref{eq:rhsderivative}), we obtain
\[
\prod_{j=1}^{n-1}R_{j+1}\prod_{i=1}^{n-1}(b_{i}c_{i})^{2i-1}{\rm d}\mathbf{a}\wedge {\rm d}\mathbf{b}\wedge {\rm d}\mathbf{c}=(-1)^{(n-1)}\left(\prod_{k=1}^nr_k \right)\Delta_n(\bm{\lambda})^4
 {\rm d}\mathbf{r}\wedge {\rm d}\mathbf{R}_1\wedge {\rm d}\bm{\lambda}.
\]
Rearranging this expression gives
\begin{equation}
\label{eq:J T proof}
{\rm d}\mathbf{a}\wedge {\rm d}\mathbf{b}\wedge {\rm d}\mathbf{c} =(-1)^{n-1} \frac{\bigl(\prod_{k=1}^n r_k\bigr) \Delta_n(\blambda)^4}{\prod_{j=1}^{n-1}(c_{j}b_{j})^{2j-1}\prod_{w=2}^{n}R_{w}} {\rm d}\mathbf{r}\wedge {\rm d}\mathbf{R}_1\wedge {\rm d}\bm{\lambda}.
\end{equation}
Using Lemma~\ref{lem:jac1}, specifically substituting equation~\eqref{eq:Vandermonde determinant reduced} into equation~\eqref{eq:J T proof}, and squaring yields the desired result.
\end{proof}

\subsection{Proof of Theorem~\ref{jpdf_theorem}}\label{Sec:UnBetaEnsemble}
\begin{proof}
The jpdf of the matrix elements of matrices $T\in\Lw'(n)$, distributed according to equation~\eqref{mat_el_T}, can be expressed as
\begin{equation}
 \label{jpdf_mat_el_1}
 p_{\beta}(\av,\bv,\cv) = \frac{1}{Z_\beta^{T}}\prod_{j=1}^{n-1}\abs{b_jc_j}^{\frac{\beta j}{2} -2}
 \exp\left[ -\frac12\left(\sum_{j=1}^n \abs{a_j}^2 + \sum_{j=1}^{n-1}
 \abs{b_j}^2+\sum_{j=1}^{n-1}|c_j|^2\right)\right],
\end{equation}
where the normalising partition function reads
\begin{equation*}
Z_{\beta}^{T}=\pi^{3n-2}2^{\frac{n(\beta n-\beta+4)}{4}}\prod_{k=1}^{n-1}\Gamma\left(\frac{\beta k}{4}\right)^2.
\end{equation*}
Notice one extra inverse power in the exponent $\frac{\beta j}{2}-2$ in \eqref{jpdf_mat_el_1}, compared to the $\chi$-dis\-tri\-bu\-tion~\eqref{mat_el_T}. This is because when going to polar coordinates, the differentials $d^2b_jd^2c_j$ will contribute an extra positive power in the radii.
By Lemma~\ref{lem:jac1} and Theorem~\ref{thm:generalised Jacobian}, we obtain
\begin{gather}
\prod_{j=1}^{n-1}\abs{b_jc_j}^{\frac{\beta j}{2} -2} \prod_{w=1}^n{\rm d}^2a_w\prod_{k=1}^{n-1}{\rm d}^2b_k{\rm d}^2c_k\nonumber\\
\qquad{} = \abs{\Delta_n(\blambda)}^\beta \prod_{j=1}^{n}\abs{r_j}^{\frac{\beta }{2}-2}\prod_{k=2}^n|R_k|^{-2}{\rm d}^2R_k{\rm d}^2r_{k-1}\prod_{w=1}^n{\rm d}^2\lambda_w.\label{eqmes_1}
\end{gather}
The exponent in equation~\eqref{jpdf_mat_el_1} can be expressed in terms of the squared Frobenius norm of $T$. Specifically,
\begin{equation}
\label{Frobenius_exp}
\norm{T}_{\mathrm{F}}^2= \Tr\bigl(T^\dag T\bigr)=\sum_{j=1}^n \abs{a_j}^2 + \sum_{j=1}^{n-1}\abs{b_j}^2+\sum_{j=1}^{n-1}\abs{c_j}^2.
\end{equation}
Next, we use the Schur decomposition to simplify the exponent further. Let
\begin{equation}
 \label{schur1}
 T = Q(\Lambda + N)Q^\dagger,
\end{equation}
where $\Lambda \in \Lambda(n)$, $Q$ is a unitary matrix and $N$ strictly upper triangular. Combining equation~\eqref{schur1} with Theorem~\ref{thm:roughbijection}, we write
\begin{equation}
 \label{exponent1}
 \norm{T}_{\mathrm{F}}^2 =
 \sum_{j=1}^n \abs{\lambda_j}^2 + 2g^{T}(\blambda,\rv,\Rv_1),
\end{equation}
where $g^{T}(\blambda,\rv,\Rv_1)$ is a positive function defined by equation~\eqref{gTdef}. Then by combining equations~\eqref{jpdf_mat_el_1}, \eqref{eqmes_1}, and~\eqref{exponent1}, we obtain
\begin{align*}
&p_\beta(\av,\bv,\cv)\prod_{j=1}^n{\rm d}^2a_j\prod_{k=1}^{n-1}{\rm d}^2b_k{\rm d}^2c_k=P_\beta^T(\blambda,\rv,\Rv_1)\prod_{j=1}{\rm d}^2\lambda_j\prod_{w=1}^{n-1}{\rm d}^2r_w{\rm d}^2R_{w+1}\nonumber\\
&\qquad = \frac{1}{Z_\beta^{T}}
{\rm e}^{-\frac12 \sum_{j=1}^n \abs{\lambda_j}^2 - g^{T}(\blambda,\rv,\Rv_1)}
\abs{\Delta_n(\blambda)}^\beta \prod_{j=1}^{n}\abs{r_j}^{\frac{\beta }{2}-2}\prod_{k=2}^n|R_k|^{-2}\prod_{j=1}^n{\rm d}^2\lambda_j\prod_{k=1}^{n-1}{\rm d}^2r_k{\rm d}^2R_{k+1}.
\end{align*}
When going to polar coordinates, there is an apparent logarithmic singularity in the integration over the radii $|R_k|$, with the jpdf being proportional to $1/|R_k|$ for each $k=2,\dots,n$.
However, when expressing the variables in the trace \eqref{Frobenius_exp} defining $g^{T}(\blambda,\rv,\Rv_1)$ in terms of variables~$R_j$,~$r_j$ and~$\lambda_j$,\footnote{For an explicit example at $n=2$, see Appendix~\ref{app-n=2}, equation~\eqref{eq-matrix elements T n=2}.}
in particular the $|c_j|^2$, it becomes apparent from equation~\eqref{eq:Taylor_exps_1} that for all $j=1,\dots,n-1$ we have $|c_{n-j}|^2\sim 1/|R_{j+1}|^2$. Therefore, we have an essential singularity from the exponent that makes the integral over the radii $|R_k|$ well-behaved at the origin.

Finally, integrating over $\rv$ and $\Rv_1$ leads to equations~\eqref{eq:EigDens1} and~\eqref{factor}.

The equality
 \[
 g^{T}(\blambda {\rm e}^{{\rm i}\phi},\rv,\Rv_1)=g^{T}(\blambda,\rv,\Rv_1)\qquad \text{for all} \ \phi\in[0,2\pi),
 \]
 follows from the observation that
 \[
 |a_{n-j+1}|=|\rv_j\Lambda\lv_j|,\qquad|c_{n-j}|=|\rv_j\Lambda\lv_{j+1}|,\qquad|b_{n-j+1}|=|\rv_{j+1}\Lambda\lv_j|
 \]
 for $j=1,2,\dots,n$. Furthermore, the mapping $\blambda\rightarrow\blambda {\rm e}^{{\rm i}\phi}$ leaves the joint probability density function of the eigenvalues invariant under rotation of the eigenvalues. To show that
 \[
g^{T}(\blambda,\rv,\Rv_1)=O\bigl(|\lambda_k|^2\bigr)\qquad \text{as} \ \lambda_k\rightarrow\infty,
 \]
 for $k=1,2,\dots,n$, we observe the following asymptotic behaviour:
 \begin{equation*}
 \begin{split}
a_{n-j+1}=O (\lambda_k ),\qquad c_{n-j}=O (\lambda_k ),\qquad b_{n-j+1}=O (\lambda_k ),
 \end{split}
 \end{equation*}
 as $\lambda_k\rightarrow\infty$ and $j=1,2,\dots,n$. This completes the proof.
\end{proof}

\section[Low temperature expansion beta>>1]{Low temperature expansion $\boldsymbol{\beta\gg1}$}\label{Sec:low temperature expansion}
Throughout this section, if $B$ denotes an $n\times n$ complex matrix, let its eigenvalues be given by $\lambda_1(B),\lambda_2(B),\dots,\lambda_n(B)$, labelled such that
\begin{equation*}
 |\lambda_n(B)|\leq|\lambda_{n-1}(B)|\leq\dots\leq|\lambda_1(B)|.
\end{equation*}
We study the asymptotics of the eigenvalues of the two non-Hermitian $\beta$-ensembles $T$ and \smash{$\widetilde{T}$}, focusing on fixed matrix dimension $n$ and the limit $\beta\rightarrow\infty$, known as the low temperature limit.

One might ask: what happens to the eigenvalues in this limit? For the Hermitian $\beta$-Hermite and $\beta$-Laguerre ensembles, Dumitriu and Edelman showed in~\cite{DE02B} that as $\beta\rightarrow\infty$, the eigenvalues crystallise, fixing their positions as the roots of the Hermite and Laguerre polynomials, respectively. In this low temperature regime, the randomness of the eigenvalues is effectively removed.

However, in contrast to the two-dimensional Coulomb gas, we have not observed numerically a condensation of complex eigenvalues at large values of $\beta$, in all three ensembles we consider here.
Nevertheless, the question can be asked if a simplification also occurs for our non-Hermitian $\beta$-ensembles at large $\beta$.

To obtain asymptotic results for the eigenvalues of the non-Hermitian $\beta$-ensembles in this low temperature limit, we first introduce two lemmas. The first, a perturbation theory lemma~\cite[Section 4.3]{Demmel}, is crucial for our analysis. A particular form of this lemma, applicable when $A$ and~$B$ are symmetric, will also be used and is detailed in~\cite[Lemma 2.1]{DE02B}.
\begin{Lemma}\label{lem:perturb}
 Let $A$ and $B$ be $n\times n$ matrices, and let $\epsilon>0$. Suppose the eigenvalues of $A$ are distinct. Define $M=A+\epsilon B+o(\epsilon)$, where $o(\epsilon)$ defines a matrix in which every entry goes to zero faster than $\epsilon$ goes to zero. Let $\lambda_i(A)$ denote the $i$-th eigenvalue of $A$ for $1\leq i \leq n$ and let~$\mathbf{x}_i$ and $\mathbf{y}_i^\dagger$ denote the right and left eigenvectors associated with~$\lambda_i(A)$, respectively. Then, we have
 \[
 \lim_{\epsilon\rightarrow0}\frac{1}{\epsilon}(\lambda_i(M)-\lambda_i(A))=\mathbf{y}_i^\dagger B\mathbf{x}_i.
 \]
 Equivalently, for every $1\leq i\leq n$, the eigenvalue of $\lambda_i(M)$ can be expanded as
 \[
 \lambda_i(M)=\lambda_i(A)+\epsilon \mathbf{y}_i^\dagger B\mathbf{x}_i.
 \]
\end{Lemma}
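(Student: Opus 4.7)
The plan is to argue by first-order matrix perturbation theory, with the distinctness of the spectrum of $A$ playing the essential role. Because $\lambda_i(A)$ is a simple root of the polynomial $\det(\lambda \mathbf{1}_n - A)=0$, the implicit function theorem applied to $\det(\lambda \mathbf{1}_n - M)=0$ as a function of $(\lambda,\epsilon)$ produces, for all sufficiently small $\epsilon$, a differentiable branch $\lambda_i(M)$ of eigenvalue of $M$ with $\lambda_i(M)\to \lambda_i(A)$ as $\epsilon\to 0$. Standard matrix perturbation theory (as in \cite{Demmel}) then supplies an accompanying right eigenvector $\mathbf{x}_i(\epsilon)$ that is differentiable at $\epsilon=0$ with $\mathbf{x}_i(0)=\mathbf{x}_i$.

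Next I would plug the first-order Ansätze
\[
\lambda_i(M)=\lambda_i(A)+\epsilon\,\mu_i+o(\epsilon), \qquad \mathbf{x}_i(\epsilon)=\mathbf{x}_i+\epsilon\,\mathbf{u}_i+o(\epsilon),
\]
into the eigenvalue relation $M\mathbf{x}_i(\epsilon)=\lambda_i(M)\mathbf{x}_i(\epsilon)$, using $M=A+\epsilon B+o(\epsilon)$. The order-$\epsilon^0$ equation is automatically satisfied, and collecting the terms of order $\epsilon^1$ gives
\[
A\mathbf{u}_i+B\mathbf{x}_i=\lambda_i(A)\,\mathbf{u}_i+\mu_i\,\mathbf{x}_i.
\]
Left-multiplication by $\mathbf{y}_i^\dagger$, normalised so that $\mathbf{y}_i^\dagger \mathbf{x}_i=1$ (which is permitted because the simplicity of $\lambda_i(A)$ forces $\mathbf{y}_i^\dagger \mathbf{x}_i\neq 0$), together with the identity $\mathbf{y}_i^\dagger A=\lambda_i(A)\mathbf{y}_i^\dagger$, makes the $\mathbf{u}_i$ contributions cancel, leaving $\mu_i=\mathbf{y}_i^\dagger B\mathbf{x}_i$. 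Dividing the expansion of $\lambda_i(M)-\lambda_i(A)$ by $\epsilon$ and letting $\epsilon\to 0$ delivers the claim.

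The main obstacle is not the algebraic manipulation but justifying that $\lambda_i(M)$ and a matching eigenvector can be chosen as $C^1$ functions of $\epsilon$ near the origin. This is exactly where the hypothesis that $A$ has distinct eigenvalues enters: in the presence of a repeated eigenvalue, the perturbed eigenvalues generically admit only Puiseux expansions in $\epsilon$, the left-right eigenvector pairing can degenerate ($\mathbf{y}_i^\dagger\mathbf{x}_i=0$), and the linear formula above breaks down. Under the simple-eigenvalue assumption, however, the implicit function theorem supplies analytic dependence of $\lambda_i$ on $\epsilon$ and a continuous choice of eigenvector, which is enough to legitimise the first-order matching and hence the lemma.
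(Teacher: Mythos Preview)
Your argument is correct and is the standard first-order perturbation computation one finds in the reference the paper cites. Note that the paper does not give its own proof of this lemma at all: it simply states the result and attributes it to \cite[Section 4.3]{Demmel}, so there is no in-paper proof to compare against. Your derivation matches the textbook approach, including the correct identification of where simplicity of the eigenvalue is used (to guarantee $\mathbf{y}_i^\dagger\mathbf{x}_i\neq 0$ and analytic dependence on $\epsilon$).
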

Succeeding this, we detail a second lemma~\cite[Lemma 2.3]{DE02B} in which one considers a $\chi_r$\nobreakdash-dis\-tri\-bu\-tion where the degree of freedom $r$ is growing.
\begin{Lemma}\label{lem:chilimit}
 Let $r>0$ and $X$ be a random variable with distribution $\chi_r$. Then, as $r\rightarrow\infty$ the distribution of $X-\sqrt{r}$ converges to a normal distribution of mean $0$ and variance $1/2$.
\end{Lemma}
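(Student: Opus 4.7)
The plan is to reduce the statement to the classical central limit theorem applied to $X^2$, and then transfer the Gaussian limit from $X^2$ to $X$ itself by the delta method.

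First I would realise $X$ as $\sqrt{Y}$, where $Y := X^2$ has the chi-squared distribution with $r$ degrees of freedom; equivalently, $Y$ is Gamma-distributed with shape $r/2$ and scale $2$, so $\E[Y] = r$ and $\Var(Y) = 2r$. The characteristic function $\phi_Y(t) = (1-2it)^{-r/2}$ makes it straightforward to check that, with $W_r := (Y-r)/\sqrt{2r}$,
\begin{equation*}
\phi_{W_r}(t) = \exp\!\bigl(-it\sqrt{r/2}\bigr)\left(1 - \frac{2it}{\sqrt{2r}}\right)^{-r/2} \longrightarrow e^{-t^2/2}\qquad (r\to\infty),
\end{equation*}
so that $W_r \xrightarrow{d} Z \sim \mathcal{N}(0,1)$ by L\'evy's continuity theorem. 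For integer $r$ one can alternatively view $Y$ as a sum of $r$ i.i.d.\ $\chi_1^2$ summands and invoke the textbook CLT directly.

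Second, I would transfer this to $X = \sqrt{Y}$ via the delta method applied to the smooth map $\varphi(y) = \sqrt{y}$ at the point $r$. Writing
\begin{equation*}
X - \sqrt{r} = \sqrt{Y} - \sqrt{r} = \sqrt{r}\left(\sqrt{1 + \sqrt{2/r}\, W_r} - 1\right)
\end{equation*}
and Taylor expanding $\sqrt{1+u} = 1 + u/2 + O(u^2)$ on the high-probability event $\{|W_r| \le r^{1/4}\}$ yields
\begin{equation*}
X - \sqrt{r} = \frac{W_r}{\sqrt{2}} + O_P\bigl(r^{-1/2}\bigr),
\end{equation*}
and Slutsky's theorem then gives $X - \sqrt{r} \xrightarrow{d} Z/\sqrt{2} \sim \mathcal{N}(0,1/2)$, as required.

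The only delicate step is the tail control needed to legitimise the Taylor expansion uniformly, i.e.\ ruling out the contribution from $\{|W_r| > r^{1/4}\}$; however this is immediate from $\E[W_r^2] = 1$ and Markov's inequality, so the obstacle is routine rather than genuine. A fully self-contained alternative is to work directly with the density $f_r$ in \eqref{chik}: substitute $x = \sqrt{r}+y$, apply Stirling's formula to $\Gamma(r/2)$, and expand $(r-1)\log(1+y/\sqrt{r}) - \tfrac12(\sqrt{r}+y)^2$ to second order in $y/\sqrt{r}$. The pointwise limit of $f_r(\sqrt{r}+y)$ then comes out to $\pi^{-1/2}e^{-y^2}$, which is the $\mathcal{N}(0,1/2)$ density, and Scheff\'e's lemma upgrades pointwise convergence of densities to total-variation (hence distributional) convergence. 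Either route makes the lemma a short consequence of standard tools.
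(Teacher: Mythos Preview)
Your proof is correct; both the CLT/delta-method route and the direct Stirling expansion of the density are standard and complete arguments for this classical fact. The paper itself does not prove the lemma at all but simply cites it as \cite[Lemma~2.3]{DE02B}, so you have supplied what the paper outsources.
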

\subsection{General ensemble}\label{ref:lowtemp}
Let $T_{\beta}$ be a random matrix from the general complex $\beta$-ensemble of $n\times n$ matrices defined in~\eqref{eq-gen_trid}, scaled by $\frac{1}{\sqrt{2n\beta}}$.
\begin{Theorem}\label{thm:largebetagen}
 Let $\lambda_i(T_{\beta})$ be the $i$-th eigenvalue of $T_{\beta}$ for $1\leq i\leq n$. Then,
 \[
 \lambda_i(T_{\beta})\rightarrow\frac{1}{\sqrt{2n}}\lambda_i(D_n)+o\left(\frac{1}{\sqrt{\beta}}\right),
 \]
 in distribution as $\beta\rightarrow\infty$, where
 \begin{equation}\label{eq:largebetasgen}
 D_{n}=\frac{1}{\sqrt{2}}\begin{pmatrix}
0 & \sqrt{n-1}{\rm e}^{{\rm i}\phi_{n-1}} & & & \\
\sqrt{n-1}{\rm e}^{{\rm i}\theta_{n-1}} & 0 & \ddots & & \\
 & \ddots & \ddots & \ddots & \\
 & & \sqrt{2}{\rm e}^{{\rm i}\theta_{2}} & 0 & \sqrt{1}{\rm e}^{{\rm i}\phi_1} \\
 & & & \sqrt{1}{\rm e}^{{\rm i}\theta_1} & 0
\end{pmatrix},
\end{equation}
and $\theta_i$, $\phi_i$ are independent random variables distributed uniformly, $\theta_i,
\phi_i\sim\mathcal{U}(0,2\pi)$ for $1\leq i \leq n-1$. All other elements vanish. Let
\begin{equation*}
 \mathcal{P}_k^{D}(z):=\det(z\mathbf{1}_k-D_k)
\end{equation*}
denote the characteristic polynomials of $D_k$ for $1\leq k\leq n$.
Then,
 \begin{align*}
 &\sqrt{\beta}\left(\lambda_i(L_{\beta})-\frac{1}{\sqrt{2n}}\lambda_i(D_n)\right)\\
 &=\frac{1}{\sqrt{2n}}\frac{\sum_{l=0}^{n-1}\! \mathcal{P}^D_{l}(\lambda_i)\mathcal{P}^{D}_{l}(\lambda_i^*)M_{l+1}+\!\sum_{l=1}^{n-1}\! [\mathcal{P}^{D}_{l-1}(\lambda_i^*)\mathcal{P}^D_l(\lambda_i)U_{l}+\mathcal{P}^{D}_{l} (\lambda_i^*)\mathcal{P}^D_{l-1}(\lambda_i)V_{l} ]}{\sum_{l=0}^{n-1}\!\mathcal{P}^D_l(\lambda_i)\mathcal{P}^{D}_l(\lambda_i^*)}\!+o(1),
 \end{align*}
in distribution as $\beta\rightarrow\infty$
for the independent random variables $\re(M_j), \im(M_j)\sim\mathcal{N}(0,1)$, $|U_w|,|V_w|\sim\mathcal{N}\bigl(0,\frac{1}{2}\bigr)$ and $\arg(U_w),\arg(V_w)\sim\mathcal{U}(0,2\pi)$ for $1\leq w\leq n-1$ and $1\leq j\leq n$.
\end{Theorem}

\begin{proof}
 Let $D_n$ denote an $n\times n$ non-Hermitian tridiagonal matrix defined by equation~\eqref{eq:largebetasgen}. We~restrict ourselves to the set of matrices $D_n$ with non-vanishing determinant and all eigenvalues distinct. As discussed in Section~\ref{sec:Decomp}, the exceptional set of matrices with degenerate spectra forms a set of measure zero. Associated to the $i$-th eigenvalue $\lambda_i:=\lambda_i(D_n)$, the matrix $D_n$ has right and left eigenvectors given by
 \begin{equation}\label{eq: right and left eigenvectors D}
 \mathbf{u}_i=\begin{pmatrix}
 \mathcal{P}^D_{n-1}(\lambda_i)\\
 \mathcal{P}^D_{n-2}(\lambda_i)\\
 \vdots\\
 \mathcal{P}^D_0(\lambda_i)
 \end{pmatrix}\qquad\text{and}\qquad
 \mathbf{v}_i^\dagger=\begin{pmatrix}
 \mathcal{P}^{D}_{n-1}(\lambda_i^*), \mathcal{P}^{D}_{n-2}(\lambda_i^*), \dots,\mathcal{P}^{D}_{0}(\lambda_i^*)
 \end{pmatrix}.
 \end{equation}
From equation~\eqref{eq:largebetasgen}, it follows that
\begin{equation}\label{eq:limbetagen}
\lim_{\beta\rightarrow\infty}\bigl[\sqrt{2n\beta}T_{\beta}-\sqrt{\beta}D_n\bigr]=Z^D,
\end{equation}
almost surely, where $Z^D$ is a random $n\times n$ complex tridiagonal matrix with independently distributed entries
\begin{equation}\label{eq:Zabc}
Z^D=\begin{pmatrix}
M_n & V_{n-1} & & \\
U_{n-1} & \ddots & \ddots & \\
 & \ddots & M_2 & V_1 \\
 & & U_1 & M_1
\end{pmatrix}.
\end{equation}
The matrix entries are distributed according to the following distributions: $\re(M_j)\sim\mathcal{N}(0,1)$ and $\im(M_j)\sim\mathcal{N}(0,1)$ for $1\leq j\leq n$, $|U_k|,|V_k|\sim\mathcal{N}\bigl(0,\frac{1}{2}\bigr)$ and $\arg(U_k),\arg(V_k)\sim\mathcal{U}(0,2\pi)$ for $1\leq k \leq n-1$.

Since we are considering $\chi_r$-distributions on the sub- and super-diagonals for increasing $r$, the limit~\eqref{eq:limbetagen} follows directly from the application of Lemma~\ref{lem:chilimit}. Hence,
\begin{equation*}
T_{\beta}=\frac{1}{\sqrt{2n}}D_n+\frac{1}{\sqrt{2n\beta}}Z^{D}+o\left(\frac{1}{\sqrt{\beta}}\right),
\end{equation*}
in distribution as $\beta\rightarrow\infty$.

Using this, we express the eigenvalues of $T_{\beta}$ as{\samepage
\[
\lambda_i(T_{\beta})=\lambda_i\left(\frac{1}{\sqrt{2n}}D_n +\frac{1}{\sqrt{2n\beta}}Z^{D}+o\left(\frac{1}{\sqrt{\beta}}\right)\right),
\]
in distribution for any $1\leq i\leq n$.}

Using Lemma~\ref{lem:perturb}, we expand the $i$-th eigenvalue of $T_{\beta}$ as
\begin{equation}\label{eq:eigpertgen}
\lambda_i(T_{\beta})=\frac{1}{\sqrt{2n}}\lambda_i(D_n)+\frac{1}{\sqrt{2n\beta}}\frac{\mathbf{v}_i^\dagger Z^{D}\mathbf{u}_i}{\mathbf{v}_i^\dagger\mathbf{u}_i}+o\left(\frac{1}{\sqrt{\beta}}\right),
\end{equation}
where $1\leq i\leq n$, in distribution as $\beta\rightarrow\infty$.

Next, substituting the definition of $Z^D$ from equation~\eqref{eq:Zabc} into equation~\eqref{eq:eigpertgen}, we obtain
\begin{align*}
 &\sqrt{\beta}\left(\lambda_i(T_{\beta})-\frac{1}{\sqrt{2n}}\lambda_i(D_n)\right)\\
 &=\frac{1}{\sqrt{2n}}\frac{\sum_{l=0}^{n-1}\mathcal{P}^D_{l} (\lambda_i)\mathcal{P}^{D}_{l}(\lambda_i^*)M_{l+1}\!+\!\sum_{l=1}^{n-1}\! [\mathcal{P}^{D}_{l-1}(\lambda_i^*)\mathcal{P}^D_l(\lambda_i)U_{l} +\mathcal{P}^{D}_{l}(\lambda_i^*)\mathcal{P}^D_{l-1}(\lambda_i)V_{l} ]}{\sum_{l=0}^{n-1}\mathcal{P}^D_l(\lambda_i)\mathcal{P}^{D}_l(\lambda_i^*)}\!+o(1),
 \end{align*}
in distribution as $\beta\rightarrow\infty$. This concludes the proof.
\end{proof}

\begin{Remark}
A similar result can be proven for a matrix $S_{\beta}$ from the symmetric complex $\beta$-ensemble, scaled by \smash{$\frac{1}{\sqrt{2n\beta}}$}. To do so, in equation~\eqref{eq:largebetasgen} we let $\phi_j=\theta_j$ for all $j$. Therefore, the resulting matrix is symmetric and we note that in equation~\eqref{eq: right and left eigenvectors D} we have \smash{$\mathbf{u}_j^{\rm t}=\mathbf{v}_j^\dagger$}. The proof then follows analogously.
\end{Remark}

\subsection{Non-symmetric ensemble}
Let $\widetilde{T}_{\beta}$ be a random matrix from the non-symmetric ensemble of $n\times n$ matrices in \eqref{eq-T_tilde_def}, scaled by $1/\sqrt{2n\beta}$.
We now outline and prove the following theorem.
\begin{Theorem}\label{thm:largebetanonsym}
 Let $\lambda_i\bigl(\widetilde{T}_{\beta}\bigr)$ be the $i$-th eigenvalue of $\widetilde{T}_{\beta}$. Then, as $\beta\rightarrow\infty$
 \[
 \lambda_i(\widetilde{T}_{\beta})\rightarrow\frac{1}{\sqrt{2n}}\lambda_i(G_n)+o\left(\frac{1}{\sqrt{\beta}}\right)
 \]
 in distribution, where $G_n$ is the matrix
 \begin{equation}\label{eq:largebetasNonSymH}
 G_n=\begin{pmatrix}
0 & \frac{1}{\sqrt{\beta}} & & & \\
\sqrt{\frac{n-1}{2}}{\rm e}^{{\rm i}\theta_{n-1}} & 0 & \ddots & & \\
 & \ddots & \ddots & \ddots & \\
 & & \sqrt{\frac{2}{2}}{\rm e}^{{\rm i}\theta_{2}} & 0 & \frac{1}{\sqrt{\beta}} \\
 & & & \sqrt{\frac{1}{2}}{\rm e}^{{\rm i}\theta_1} & 0
\end{pmatrix},
 \end{equation}
 and the $\theta_i$'s are independent random variables uniformly distributed in $[0,2\pi]$, for $1\leq i\leq n-1$. The remaining matrix elements vanish. Let
 \begin{equation}\label{eq: Characrteristic Polynomail G}
\Pc_k^G(z):=\det(z\mathbf{1}_k-G_k)
\end{equation}
denote the characteristic polynomials of $G_k$ for $1\leq k\leq n$. Then, as $\beta\rightarrow\infty$
 \begin{equation*}
 \begin{split}
 &\sqrt{\beta}\left(\lambda_i(\widetilde{T}_{\beta})-\frac{1}{\sqrt{2n}}\lambda_i(G_n)\right)\\
 &\qquad=\frac{1}{\sqrt{2n}}\frac{\sum_{l=0}^{n-1} \Pc_{l}^G(\lambda_i)\mathcal{P}^{G}_{l}(\lambda_i^*)M_{l+1}+\sum_{l=1}^{n-1} \Pc^{G}_{l-1}(\lambda_i^*)\Pc_l^G(\lambda_i)U_{l}}{\sum_{l=0}^{n-1}\Pc^G_l(\lambda_i)P^{G}_l(\lambda_i^*)}+o(1),
 \end{split}
 \end{equation*}
 in distribution, where $\re(M_j)\sim\mathcal{N}(0,1)$, $\im(M_j)\sim\mathcal{N}(0,1)$, $|U_j|\sim\mathcal{N}\bigl(0,\frac{1}{2}\bigr)$ and $\arg(U_j)\sim\mathcal{U}(0,2\pi)$ are independent random variables.
\end{Theorem}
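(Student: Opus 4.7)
The plan is to mirror the proof of Theorem \ref{thm:largebetagen}, exploiting the fact that the only difference between $\widetilde{T}$ and $T$ relevant here is that the super-diagonal of $\widetilde{T}$ is deterministic (all entries equal to $1$). This will force the corresponding perturbation variables $V_l$ to vanish, and is exactly why only $U_l$ terms appear in the formula.

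First I would apply Lemma \ref{lem:chilimit} to the sub-diagonal moduli $|\widetilde{b}_{n-j}|\sim\chi_{(n-j)\beta}$: the lemma yields the convergence in distribution of $|\widetilde{b}_{n-j}|-\sqrt{(n-j)\beta}$ to a centred normal with variance $1/2$. Combined with the uniform phases on the sub-diagonal and the standard complex normal distribution on the diagonal, after scaling by $1/\sqrt{2n\beta}$ this gives
\begin{equation*}
\widetilde{T}_\beta = \frac{1}{\sqrt{2n}}G_n + \frac{1}{\sqrt{2n\beta}}Z^G + o(1/\sqrt{\beta}),
\end{equation*}
in distribution as $\beta\rightarrow\infty$, where $Z^G$ is a tridiagonal noise matrix with independent complex Gaussian entries $M_l$ on the diagonal and independent entries $U_l$ with the prescribed radial and angular distributions on the sub-diagonal. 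Crucially, the super-diagonal of $Z^G$ is identically zero, because the deterministic entries $1/\sqrt{2n\beta}$ of $\widetilde{T}_\beta$ are already reproduced exactly by the super-diagonal of $\frac{1}{\sqrt{2n}}G_n$.

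Next I would apply the perturbation Lemma \ref{lem:perturb} with $A=\frac{1}{\sqrt{2n}}G_n$ and $\epsilon B=\frac{1}{\sqrt{2n\beta}}Z^G$, yielding
\begin{equation*}
\lambda_i(\widetilde{T}_\beta)=\frac{1}{\sqrt{2n}}\lambda_i(G_n)+\frac{1}{\sqrt{2n\beta}}\frac{\mathbf{v}_i^\dagger Z^G \mathbf{u}_i}{\mathbf{v}_i^\dagger \mathbf{u}_i}+o(1/\sqrt{\beta}),
\end{equation*}
where $\mathbf{u}_i$ and $\mathbf{v}_i^\dagger$ are the right and left eigenvectors of $G_n$ at eigenvalue $\lambda_i:=\lambda_i(G_n)$. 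I would then identify these via the tridiagonal three-term recurrence. Iterating the right-eigenvalue equation $G_n\mathbf{u}_i=\lambda_i\mathbf{u}_i$ from the top row yields components $(\mathbf{u}_i)_j=\mathcal{P}_{n-j}^G(\lambda_i)$, with the recurrence between successive $\mathcal{P}_k^G$ coinciding with \eqref{eq-3 term recurrence} from Proposition \ref{prop-formula coefficients P} applied to $G_n$ (whose diagonal vanishes). The analogous iteration for the left-eigenvalue equation $\mathbf{v}_i^\dagger G_n=\lambda_i\mathbf{v}_i^\dagger$ produces $(\mathbf{v}_i^\dagger)_j=\mathcal{P}_{n-j}^G(\lambda_i^*)$. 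Substituting into the quotient and exploiting the tridiagonal structure of $Z^G$ (only neighbouring-index products survive) delivers the diagonal sum weighted by $M_{l+1}$ and the sub-diagonal sum weighted by $U_l$ exactly as stated; the absence of the $V_l$ sum reflects the vanishing super-diagonal of $Z^G$.

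The main obstacle is correctly identifying the left eigenvector in terms of $\mathcal{P}_k^G(\lambda_i^*)$. Since $G_n$ is genuinely non-symmetric, with a real super-diagonal but complex sub-diagonal, the left eigenvector is not simply the conjugate of the right eigenvector. One must track how the products of sub- and super-diagonal entries (the only quantities entering \eqref{eq-3 term recurrence}) transform under $G_n\mapsto G_n^\dagger$: the real super-diagonal $1/\sqrt{\beta}$ is invariant, while the sub-diagonal phase is conjugated, so the off-diagonal products become complex-conjugated. A judicious diagonal rescaling of the left eigenvector then allows its components to evaluate the \emph{same} polynomial family $\mathcal{P}_k^G$ at $\lambda_i^*$, rather than producing conjugate polynomials $\overline{\mathcal{P}_k^G(\lambda_i)}$. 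Once this identification is in place, the remainder is a routine adaptation of the proof of Theorem \ref{thm:largebetagen} with the $V_l$ terms dropped.
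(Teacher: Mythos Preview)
Your proposal is correct and takes essentially the same route as the paper: obtain the expansion $\widetilde{T}_\beta=\frac{1}{\sqrt{2n}}G_n+\frac{1}{\sqrt{2n\beta}}Z^G+o(1/\sqrt{\beta})$ via Lemma~\ref{lem:chilimit} (noting that the deterministic super-diagonal forces $Z^G$ to have zero super-diagonal), then apply Lemma~\ref{lem:perturb} and expand the quotient using the tridiagonal eigenvector formulas in terms of the $\Pc_k^G$. The paper simply states the eigenvector expressions without derivation, so your discussion of the left-eigenvector identification via the three-term recurrence is, if anything, more thorough than the original argument.
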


\begin{proof}
 Let $G_n$ denote the $n\times n$ non-Hermitian tridiagonal matrix defined by equation~\eqref{eq:largebetasNonSymH}. Associated with the $i$-th eigenvalue of $G_n$, namely $\lambda_i:=\lambda_i(G_n)$, we have the right and left eigenvectors
 \begin{equation*}
 \mathbf{u}_i=\begin{pmatrix}
 \Pc^G_{n-1}(\lambda_i)\\
 \Pc^G_{n-2}(\lambda_i)\\
 \vdots\\
 \Pc^G_0(\lambda_i)
 \end{pmatrix}\qquad\text{and}\qquad
 \mathbf{v}_i^\dagger=\begin{pmatrix}
 \Pc^{G}_{n-1}(\lambda_i^*), \Pc^{G}_{n-2}(\lambda_i^*), \dots,\Pc^{G}_{0}(\lambda_i^*)
 \end{pmatrix},
 \end{equation*}
respectively, for $1\leq k\leq n$.

Now, by application of Lemma~\ref{lem:chilimit}, almost surely we have
\begin{equation}\label{eq:Zlargebeta}
\lim_{\beta\rightarrow\infty}\bigl[\sqrt{2n\beta}\widetilde{T}_{\beta}-\sqrt{\beta}G_n\bigr]=Z^G=\begin{pmatrix}
M_n & 0 & & \\
U_{n-1} & \ddots & \ddots & \\
 & \ddots & M_2 & 0 \\
 & & U_1 & M_1
\end{pmatrix},
\end{equation}
where $Z^G$ is a random matrix whose elements are distributed according to $\re(M_j)\sim\mathcal{N}(0,1)$, $\im(M_j)\sim\mathcal{N}(0,1)$, $|U_w|\sim\mathcal{N}\bigl(0,\frac{1}{2}\bigr)$ and $\arg(U_w)\sim\mathcal{U}(0,2\pi)$ are independent random variables for $1\leq j\leq n$ and $1\leq w\leq n-1$.

Hence, as $\beta\rightarrow\infty$, we can express the tridiagonal matrix $\widetilde{T}_{\beta}$ as
\begin{equation*}
\widetilde{T}_{\beta}=\frac{1}{\sqrt{2n}}G_n+\frac{1}{\sqrt{2n\beta}}Z^G+o\left(\frac{1}{\sqrt{\beta}}\right),
\end{equation*}
in distribution. In particular, for any $1\leq i\leq n$, we have
\begin{equation*}
\lambda_i(\widetilde{T}_{\beta})=\lambda_i\left(\frac{1}{\sqrt{2n}}G_n +\frac{1}{\sqrt{2n\beta}}Z^G\right)+o\left(\frac{1}{\sqrt{\beta}}\right),
\end{equation*}
 in distribution as $\beta\rightarrow\infty$. Applying the perturbation theory result from Lemma~\ref{lem:perturb}, we obtain
\begin{equation}\label{eq:eigpertb}
\lambda_i(\widetilde{T}_{\beta})=\frac{1}{\sqrt{2n}}\lambda_i(G_n)+\frac{1}{\sqrt{2n\beta}}\frac{\mathbf{v}_i^\dagger Z^G\mathbf{u}_i}{\mathbf{v}_i^\dagger\mathbf{u}_i}+o\left(\frac{1}{\sqrt{\beta}}\right).
\end{equation}
Using the form of $Z^G$, defined by~\eqref{eq:Zlargebeta}, and~\eqref{eq:eigpertb}, we then have the desired result
 \begin{equation*}
 \begin{split}
 &\sqrt{\beta}\left(\lambda_i(\widetilde{T}_{\beta})-\frac{1}{\sqrt{2n}}\lambda_i(G_n)\right)\\
 &\qquad =\frac{1}{\sqrt{2n}}\frac{\sum_{l=0}^{n-1}\Pc^G_{l}(\lambda_i)\Pc^{G}_{l} (\lambda_i^*)M_{l+1}+\sum_{l=1}^{n-1}\Pc^{G}_{l-1}(\lambda_i^*)\Pc^G_l (\lambda_i)U_{l}}{\sum_{l=0}^{n-1}\Pc^G_l(\lambda_i)\Pc^{G}_l(\lambda_i^*)}+o(1),
 \end{split}
\end{equation*}
in distribution as $\beta\rightarrow\infty$, which completes the proof of the theorem.
\end{proof}

In the case of the non-symmetric tridiagonal ensemble, we apply the low temperature limit to the continuous density function of the diagonal and sub-diagonal elements to illustrate a similar limiting behaviour as in the general and symmetric ensembles. However, we keep the order $\mathcal{O}\bigl(1/\sqrt{\beta}\bigr)$ on the super-diagonal at the same time. The reason is the following: If we neglect this order at this stage, we find a matrix with all zero elements up to the sub-diagonal which yields directly only zero eigenvalues.

\section{Characteristic polynomials of centred Jacobi matrices}\label{Sec:characteristic polynomial}

In this section, we study the characteristic polynomials of the random matrices we derived in the low temperature limit $\beta\gg1$ in the previous section. As they both have only non-zero entries on their sub-and super-diagonal, we keep the discussion
general and treat the characteristic polynomial of the general centred matrix $T|_{a_j=0}$ denoted by $\mathcal{P}_n(z)$ (cf.\ equation~\eqref{eq: Characteristic Polynomial}).
This covers both cases $D_n$ in equation~\eqref{eq:largebetasgen} and $G_n$ equation~\eqref{eq:largebetasNonSymH} to which we will apply the results of this section subsequently.
\begin{proof}[Proof of Proposition~\ref{prop-formula coefficients P}]
The three-term recurrence \eqref{eq-3 term recurrence} follows from the form of $T$ with vanishing diagonal elements $a_j$ and expansion of the determinant twice, for instance first with respect to the first row and second with respect to the first column.

Recall, that we set $\mathcal{P}_0(z):=1$. The first characteristic polynomials for $n=0,1,2,3$ can be calculated straightforwardly from determining the determinant and read
\begin{align}
\begin{split}
& \mathcal{P}_0(z)=1,\\
& \mathcal{P}_1(z)=z,\\
& \mathcal{P}_2(z)=z^2-\widetilde{b}_1, \\
& \mathcal{P}_3(z)=z^3-\bigl(\widetilde{b}_2+\widetilde{b}_1\bigr)z.
\end{split}
\label{eq-characteristic polynomial small n examples}
\end{align}
From these explicit forms of the characteristic polynomials for small $n$ and the three-term recurrence \eqref{eq-3 term recurrence} without diagonal term it follows directly that the polynomials $\mathcal{P}_n(z)$ are either even or odd, depending on the parity of their degree $n$.

Let us emphasise again that the polynomials $\mathcal{P}_n(z)$ are not orthogonal polynomials on $\mathbb{R}$ as the coefficients \smash{$\widetilde{b}_j$} are complex and do not correspond to some real positive norm of orthogonal polynomials.

We start with the observation that the three-term recurrence~\eqref{eq-3 term recurrence} yields a recurrence relation for the coefficients. Inserting the notation from equation~\eqref{eq: Characteristic Polynomial} for $\mathcal{P}_{n+1}(z)$, $\mathcal{P}_{n}(z)$ and $\mathcal{P}_{n-1}(z)$ into~\eqref{eq-3 term recurrence} gives
\begin{align*}
 z^{n+1}+\sum_{\ell=1}^{\lfloor \frac{n+1}{2} \rfloor}\kappa_\ell^{(n+1)}z^{n+1-2\ell}=z^{n+1}+\sum_{\ell=1}^{\lfloor \frac{n}{2} \rfloor}\kappa_\ell^{(n)}z^{n+1-2\ell}-\widetilde{b}_{n}z^{n-1}-\widetilde{b}_{n}\sum_{\ell=1}^{\lfloor \frac{n-1}{2} \rfloor}\kappa_\ell^{(n-1)}z^{n-1-2\ell}.
\end{align*}
Comparing the coefficients in orders $z^{n+1-2\ell}$ for $\ell=1,\dots, \bigl\lfloor \frac{n+1}{2}\bigr\rfloor-1$ yields for the coefficients of $\mathcal{P}_{n+1}(z)$
\begin{align}\label{eq-recursion kappas}
 \kappa_\ell^{(n+1)}=\begin{cases}
 \kappa_1^{(n)}-\widetilde{b}_{n} & \text{for }\ell=1,\\
 \kappa_\ell^{(n)}-\widetilde{b}_{n}\kappa_{\ell-1}^{(n-1)} & \text{for } \ell=2,\dots, \lfloor \frac{n+1}{2}\rfloor-1,\\
 \kappa_m^{(n)}-\widetilde{b}_{n}\kappa_{m-1}^{(n-1)} & \text{for }\ell= \bigl\lfloor \frac{n+1}{2}\bigr\rfloor=m,\ \text{with}\ n+1=2m+1\ \text{odd},\\
\hphantom{\kappa_m^{(n)}}{}-\widetilde{b}_{n}\kappa_{m}^{(n-1)} & \text{for }\ell= \bigl\lfloor \frac{n+1}{2}\bigr\rfloor=m+1,\ \text{with}\ n+1=2m+2\ \text{even}.
 \end{cases}
\end{align}
We will treat the coefficient of the lowest order $z^0$ resulting from the last equation separately below. From \eqref{eq-characteristic polynomial small n examples}, we have that equation~\eqref{eq-formula coefficients P} is true for $n=0,1,2,3$. We prove via induction that this also holds for some $n+1$. Therefore, we insert equation~\eqref{eq-formula coefficients P} in the right-hand side of equation~\eqref{eq-recursion kappas} and find for $\ell=1$
\begin{align*}
 \kappa_1^{(n+1)}&=-\sum_{\gamma=1}^{n-1}\widetilde{b}_\gamma-\widetilde{b}_{n} =-\sum_{\gamma=1}^{n}\widetilde{b}_\gamma
\end{align*}
as claimed. We have for $\ell=2,\dots, \bigl\lfloor \frac{n+1}{2}\bigr\rfloor-1$ for $n+1$ even, and for $\ell=2,\dots, \bigl\lfloor \frac{n+1}{2}\bigr\rfloor$ for $n+1$ odd
\begin{align*}
 \kappa_\ell^{(n+1)}&=
 (-1)^\ell \sum_{\gamma_1=2\ell-1}^{n-1}\widetilde{b}_{\gamma_1} \sum_{\gamma_2=2\ell-3}^{\gamma_1-2}\widetilde{b}_{\gamma_2}
		\cdots \sum_{\gamma_\ell=1}^{\gamma_{\ell-1}-2}\widetilde{b}_{\gamma_\ell}\ -\ \widetilde{b}_{n}(-1)^{\ell-1} \sum_{\gamma_1=2\ell-3}^{n-2}\widetilde{b}_{\gamma_1}
		\cdots \sum_{\gamma_{\ell-1}=1}^{\gamma_{\ell-2}-2}\widetilde{b}_{\gamma_{\ell-1}} \\
 &=(-1)^\ell \left(\sum_{\gamma_1=2\ell-1}^{n-1}\widetilde{b}_{\gamma_1}+\widetilde{b}_n\right)\sum_{\gamma_2=2\ell-3}^{\gamma_1-2}\widetilde{b}_{\gamma_2}
		\cdots \sum_{\gamma_\ell=1}^{\gamma_{\ell-1}-2}\widetilde{b}_{\gamma_\ell}\\
 &=(-1)^\ell \sum_{\gamma_1=2\ell-1}^{n}\widetilde{b}_{\gamma_1} \sum_{\gamma_2=2\ell-3}^{\gamma_1-2}\widetilde{b}_{\gamma_2}
		\cdots \sum_{\gamma_\ell=1}^{\gamma_{\ell-1}-2}\widetilde{b}_{\gamma_\ell},
\end{align*}
which proves equation~\eqref{eq-formula coefficients P} via induction for all
coefficient resulting from order in $z$ down to~$z^2$ for $n+1=2m+2$ even, and down to $z^1$ for $n+1=2m+1$ odd.
For the coefficient of the lowest order $z^0$, we have to treat even $n+1=2m+2$ differently, as the recursion is different.
Here, we obtain
 \begin{align*}
			\kappa_{m+1}^{(2m+2)}&=-\widetilde{b}_{2m+1} \kappa_{m}^{(2m)}.
		\end{align*}
Iterating this recurrence, using the initial condition $\kappa_1^{(2)}=-\widetilde{b}_1$, and induction assumption
		\begin{align*}
			\kappa_{m}^{(2m)}=(-1)^m\prod_{\ell=1}^{m} \widetilde{b}_{2\ell-1},
		\end{align*}
we complete the proof of equation~\eqref{eq-formula coefficients P}.
We can further simplify the expression as follows: We relax each lower bound of the nested sums to $1$. One intermediate step reads as
	\begin{align*}
		\sum_{\gamma_p=2\ell-2p+1}^{\gamma_{p-1}-2}\!\widetilde{b}_{\gamma_p}\sum_{\gamma_{p+1}=2\ell-2p-1}^{\gamma_p-2}\!\widetilde{b}_{\gamma_{p+1}}=	\sum_{\gamma_p=1}^{\gamma_{p-1}-2}\widetilde{b}_{\gamma_p}\sum_{\gamma_{p+1}=2\ell-2p-1}^{\gamma_p-2}\!\widetilde{b}_{\gamma_{p+1}}-	\sum_{\gamma_p=1}^{2\ell-2p}\widetilde{b}_{\gamma_p}\sum_{\gamma_{p+1}=2\ell-2p-1}^{\gamma_p-2}\!\widetilde{b}_{\gamma_{p+1}}
	\end{align*}
	for $p=1,2,\dots,\ell$ with $\gamma_{0}=n+1$. Following our convention for the sums stated before Proposition~\ref{prop-formula coefficients P}, the second term vanishes as the upper bound of the inner sum is always smaller than the lower bound, i.e., 	\begin{align*}
		\gamma_p-2<2\ell -2p-1\qquad \text{for}\ 1\leq \gamma_p \leq 2\ell-2p. \tag*{\qed}
	\end{align*}
\renewcommand{\qed}{}
\end{proof}

We observe that the coefficients of the characteristic polynomial $\Pc_n(z)$ depend only on the product of the sub- and super-diagonal entries. This matches the well-known fact that the eigenvalues of a tridiagonal matrix depend only on the diagonal elements and the product of the sub- and super-diagonal elements (cf.\ \cite{cullum-willoghby}), and the study of Jacobi matrices~\cite{GK05}.

As mentioned already in the introduction, the expression for the coefficients equation~\eqref{eq-formula coefficients P} is given in \cite{PS06} for orthogonal polynomials on $\mathbb{R}$ and the same objects are also studied in \cite{LRR91} without proof. The latter authors found the following expression:
\begin{align}\label{eq-coeff LRR91}
		\kappa_{\ell}^{(n)}=(-1)^\ell \sum _{\{i_n\}\in \{0,1,\dots,n-1\}}^\prime \widetilde{b}_{i_1}\cdots \widetilde{b}_{i_\ell},
	\end{align}
	where the prime on the sum indicates that no two adjacent indices ever occur within ${i_n}$. Note that we adapted their result to our notation. The sum is defined in the following way: We start to choose $i_1$ in the full range $i_1=0,1,\dots,n-1$ and choose the next summation index such that it runs only up to $i_1-2$ to have disjoint indices. Similar to the earlier remark, the first summand will vanish for $i_2>i_1$. From this, it is straightforward to see that equations~\eqref{eq-formula coefficients P} and~\eqref{eq-coeff LRR91} agree.
\begin{Lemma} \label{lemma-mean and variance kappa's}
	Assume that the random variables $\bigl\{\widetilde{b}_i \bigl\}_{i=1}^{n-1}$ are independent of each other and have all zero mean. From this, it follows that the coefficients \smash{$\kappa_\ell^{(n)}$} \eqref{eq-formula coefficients P} of the polynomial $\mathcal{P}_n(z)$ are pairwise-uncorrelated of each other, have zero mean and variance
	\begin{align}\label{eq-variance kappas}	\Var\bigl(\kappa_\ell^{(n)}\bigr) =\sum_{\gamma_1=1}^{n-1}\Var\bigl(\widetilde{b}_{\gamma_1}\bigr) \sum_{\gamma_2=1}^{\gamma_1-2}\Var\bigl(\widetilde{b}_{\gamma_2}\bigr)\cdots \sum_{\gamma_\ell=1}^{\gamma_{\ell-1}-2}\Var\bigl(\widetilde{b}_{\gamma_\ell}\bigr),
	\end{align}
	$\forall \ell=1,2,\dots, \bigl\lfloor \frac{n}{2}\bigr\rfloor$.
\end{Lemma}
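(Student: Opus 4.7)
The plan is to separate the three claims of the lemma: zero mean is immediate, the variance formula follows from a matching-tuple argument, and the statement of ``independence'' is the one that requires interpretation. The driving structural observation is that within a single summand $\widetilde{b}_{\gamma_1}\widetilde{b}_{\gamma_2}\cdots\widetilde{b}_{\gamma_\ell}$ of the formula \eqref{eq-formula coefficients P}, the constraint $\gamma_{j+1}\le \gamma_j-2$ forces the indices to be strictly decreasing with gaps at least two; in particular they are pairwise distinct. Combined with independence of the $\widetilde{b}_i$ and the zero-mean hypothesis, every single summand factors under expectation and vanishes, so $E[\kappa_\ell^{(n)}]=0$ with no effort.

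For the variance, since the mean vanishes one writes $\Var(\kappa_\ell^{(n)})=E[|\kappa_\ell^{(n)}|^2]$ and expands the modulus squared as a double nested sum over two tuples $(\gamma_1,\ldots,\gamma_\ell)$ and $(\delta_1,\ldots,\delta_\ell)$. Each mixed expectation
\[
E\bigl[\widetilde{b}_{\gamma_1}\cdots\widetilde{b}_{\gamma_\ell}\,\overline{\widetilde{b}_{\delta_1}}\cdots\overline{\widetilde{b}_{\delta_\ell}}\bigr]
\]
is analysed by the same factoring trick: if the underlying sets $\{\gamma_j\}$ and $\{\delta_j\}$ differ, some index $k$ appears in only one of them, the factor $\widetilde{b}_k$ or $\overline{\widetilde{b}_k}$ pulls out by independence, and the zero-mean hypothesis kills the expectation. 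The surviving contributions are those with $\{\gamma_j\}=\{\delta_j\}$ as sets; since both tuples are strictly decreasing by the structural observation above, set equality forces $\gamma_j=\delta_j$ for every $j$. Each diagonal term contributes $\prod_{j=1}^\ell E[|\widetilde{b}_{\gamma_j}|^2]=\prod_{j=1}^\ell \Var(\widetilde{b}_{\gamma_j})$, and resumming and extending the lower bounds to $\gamma_j=1$ exactly as in Proposition \ref{prop-formula coefficients P} produces \eqref{eq-variance kappas}.

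The ``independence of each other'' claim across different $\ell$ is the subtle point, and I expect this to be the main obstacle to a fully rigorous reading. Full statistical independence cannot hold because both $\kappa_\ell^{(n)}$ and $\kappa_{\ell'}^{(n)}$ are built from the same random variables $\{\widetilde{b}_i\}$, so the intended content must be pairwise uncorrelatedness, which is in fact all that is needed downstream when invoking \cite[Theorem 1.2]{Barbarino Noferini}. For $\ell\neq\ell'$, the cross-expectation $E[\kappa_\ell^{(n)}\overline{\kappa_{\ell'}^{(n)}}]$ decomposes into monomials having $\ell$ holomorphic factors of the $\widetilde{b}_i$ against $\ell'$ antiholomorphic ones; since these degrees are unequal, at least one $\widetilde{b}_k$ must appear with unbalanced multiplicity, and the uniform phase distribution of $\widetilde{b}_i=c_i b_i$ built into the ensemble \eqref{mat_el_T} makes its expectation vanish. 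This yields $\Cov(\kappa_\ell^{(n)},\kappa_{\ell'}^{(n)})=0$, which is the effectively useful statement; the combinatorial bookkeeping itself is routine once the index-distinctness from Proposition \ref{prop-formula coefficients P} is in hand.
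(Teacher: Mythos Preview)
Your approach is correct and in several respects more careful than the paper's. The paper argues by induction on the three-term recurrence $\kappa_\ell^{(n+1)}=\kappa_\ell^{(n)}-\widetilde b_n\kappa_{\ell-1}^{(n-1)}$: since $\widetilde b_n$ is a fresh variable, the product $\widetilde b_n\kappa_{\ell-1}^{(n-1)}$ is declared ``independent of the coefficients of $\mathcal P_n$'', and the variance then follows from $\Var(Z_1+Z_2)=\Var(Z_1)+\Var(Z_2)$ and $\Var(Z_1Z_2)=\Var(Z_1)\Var(Z_2)$ for independent zero-mean factors. Your direct expansion and matching-tuple argument, hinging on the fact that within each summand of \eqref{eq-formula coefficients P} the indices are pairwise distinct, bypasses the recursion and reaches \eqref{eq-variance kappas} in one stroke.

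You are right that full statistical independence of the $\kappa_\ell^{(n)}$ across $\ell$ cannot hold; already $\kappa_1^{(4)}=-(\widetilde b_1+\widetilde b_2+\widetilde b_3)$ and $\kappa_2^{(4)}=\widetilde b_1\widetilde b_3$ share variables. The paper's inductive claim is not valid as stated: the two summands $\kappa_\ell^{(n)}$ and $\widetilde b_n\kappa_{\ell-1}^{(n-1)}$ are merely \emph{uncorrelated} (because $E[\overline{\widetilde b_n}]=0$ factors out), not independent, and it is this uncorrelatedness that actually drives the variance recursion the paper wants. Your diagnosis and downgrade to uncorrelatedness is the correct reading.

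One small refinement: your uncorrelatedness argument for $\ell\neq\ell'$ does not need the uniform phase distribution from \eqref{mat_el_T}; the zero-mean hypothesis of the lemma already suffices. Since the index sets $\{\gamma_j\}$ and $\{\delta_j\}$ each consist of distinct indices and have unequal cardinalities $\ell\neq\ell'$, there is always some index $k$ appearing in exactly one of them, contributing a lone factor $\widetilde b_k$ or $\overline{\widetilde b_k}$ that pulls out by independence and vanishes by $E[\widetilde b_k]=0$. This keeps your proof internal to the lemma's stated hypotheses rather than appealing to the specific ensemble.
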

 \begin{proof}
 1.~The random variables $\bigl\{\widetilde{b}_i \bigr\}_{i=1}^{n-1}$ are independent by assumption. Let us recall the notions of independent and uncorrelated variables. For independent (complex) variables, the joint probability distribution factorises, especially all moments (of possibly complex conjugate powers) factorise. On the other hand, for uncorrelated variables only the covariance vanishes. The latter is defined as
		\begin{align*}
			\Cov(X,Z):=\mathbb{E}(X,Z^*),
		\end{align*}
		for complex random variables $X,Z$. We show that $\Cov\bigl(\kappa_{\ell}^{(n)},\kappa_{\ell'}^{(n)}\bigr)=0$ with $\ell<\ell'$. From the derived explicit form of the \smash{$\kappa_{\ell}^{(n)}$}'s in equation~\eqref{eq-formula coefficients P}, we find that it consists of $\ell$ nested sums, with each of the intersections of index sets of the sums being empty. Hence, each coefficient contains sums of a product of $\ell$ different \smash{$\widetilde{b}_j$}'s. As the latter are independent complex random variables with zero mean, the expectation of them can only yield a non-zero contribution if we find contributions~$|\widetilde{b}_j|^2$. We observe that the expectation value \smash{$\mathbb{E}\bigl(\kappa_{\ell}^{(n)},\kappa_{\ell'}^{(n)}\bigr)$} for $\ell<\ell'$ will vanish as we have always one factor of a \smash{$\widetilde{b}_j$} without a matching part $\widetilde{b}_j^*$.

2.~The random variables $\bigl\{\widetilde{b}_i \bigr\}_{i=1}^{n-1}$ have zero mean by assumption.

3.~Next, we study the sum $Z=Z_1+Z_2$ of two independent complex random variables with vanishing mean. The mean of $Z$ is given by the sum of the means of $Z_1$ and $Z_2$ and, hence vanishes. For the variance, we find
		\begin{align*}
			\Var(Z)=\Var(Z_1+Z_2) =\mathbf{E}\bigl(|Z_1|^2\bigr)+\mathbf{E}\bigl(|Z_2|^2\bigr).
		\end{align*}

4.~Let $Z_1$ and $Z_2$ be two independent complex random variables with zero mean. The mean of the product $Z=Z_1 Z_2$ vanishes and the variance is given by
		\begin{align*}
			\Var(Z)=\Var(Z_1)\Var(Z_2).
		\end{align*}

Hence, the mean of $\kappa_\ell^{(n)}$ vanishes for all $\ell$ and the variance is given by \eqref{eq-variance kappas}.
 \end{proof}

We will see in the next section how we can make use of the explicit formula for the coefficients of the characteristic polynomials, using the results from Section~\ref{Sec:low temperature expansion} in the low temperature limit to derive the equilibrium density of our non-Hermitian tridiagonal matrix ensembles.

\section[Equilibrium density via free probability for beta>>1 and n to infty]{Equilibrium density via free probability for $\boldsymbol{\beta \gg 1}$ and $\boldsymbol{n\rightarrow \infty}$} \label{Sec:equilibrium density}

We determine the equilibrium density of all three tridiagonal matrix ensembles under consideration in the low temperature limit $\beta \gg 1$. We find that the limiting density of the eigenvalues agrees for the general and the symmetric tridiagonal matrix ensemble, leading to a logarithmic density, whereas the leading order contribution for the non-symmetric ensemble is given by a~Dirac delta. For this, we study their characteristic polynomials and apply the following theorem from free probability, which determines the equilibrium density of the zeros of polynomials with random coefficients. We slightly rephrase the result from \cite[Theorem~1.2]{Barbarino-Noferini} for the case of random variables instead of random matrices as coefficients, replacing the condition of independent random variables by pairwise-uncorrelated ones.
\begin{Theorem}\label{thm-Barbarino Noferini}
	Suppose that for any $n$ we have a vector of pairwise-uncorrelated random variables \smash{$X_0^{(n)}, X_1^{(n)},\dots,X_n^{(n)}$} with zero mean, unit variance and continuous distributions with densities uniformly bounded by a constant $M>0$ not depending on~$n$. Let also \smash{$\alpha_0^{(n)},\alpha_1^{(n)},\dots, \alpha_n^{(n)}$} be sequences of complex numbers.
	We consider the polynomial of degree~$n$
	\begin{align}\label{eq-polynomial thm Barbarino Noferini}
		\Pc_n(z)=\sum_{j=0}^n \alpha_j^{(n)}X_j^{(n)}z^j.
	\end{align}
	Then, the empirical spectral distribution of zeros of the random polynomials $\Pc_n$ converges almost surely to a probability measure $\mu$ as $n\rightarrow \infty$, when the coefficients \smash{$\alpha_j^{(n)}$} satisfy the following conditions:
	\begin{itemize}\itemsep=0pt
		\item $\alpha_n^{(n)}\neq 0$ for every $n$,
		\item for almost every $z$, let
 \begin{align}\label{eq:Qfunc}
 Q_n(z)=\sum_{i=0}^n\frac{\bigl|\alpha_i^{(n)}\bigr|^2}{\bigl|\alpha_n^{(n)}\bigr|^2}z^i,
 \end{align}
 such that the limit $U(z)=\lim_{n\rightarrow \infty} \frac{1}{2n} \ln \bigl(Q_n\bigl(|z|^2\bigr)\bigr)$ exists as a locally integrable function or a distribution,
		\item the probability measure $\mu$ satisfies
		\begin{align*}
			\mu=\frac{1}{2\pi}\Delta U(z) \mathbf{1}_{\rm sup},
		\end{align*}
		in the sense of distributions, on the support $($sup$)$ of $\mu$. Here, $\Delta=\partial_z\partial_z^*$ denotes the two-dimensional Laplacian.
	\end{itemize}
\end{Theorem}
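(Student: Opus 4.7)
The plan is to follow the standard potential-theoretic strategy for convergence of zero sets of random polynomials, reducing the statement to an $L^1_{\mathrm{loc}}$ convergence of logarithmic potentials and then taking distributional Laplacians. Write the empirical measure of zeros as $\mu_n=\tfrac1n\sum_{k=1}^n \delta_{z_k}$, where $z_1,\dots,z_n$ are the roots of $\mathcal{P}_n$, and note that its logarithmic potential is
\begin{equation*}
U_{\mu_n}(z)=\frac{1}{n}\log|\mathcal{P}_n(z)|-\frac{1}{n}\log\bigl|\alpha_n^{(n)}X_n^{(n)}\bigr|,
\end{equation*}
with $\mu_n=\tfrac{1}{2\pi}\Delta U_{\mu_n}$ in the distributional sense. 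The whole proof reduces to establishing that $U_{\mu_n}\to U$ almost surely in $L^1_{\mathrm{loc}}(\mathbb{C})$, since applying $\tfrac{1}{2\pi}\Delta$ is then continuous as a map from $L^1_{\mathrm{loc}}$ to distributions.

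First I would establish pointwise almost sure convergence for a.e.\ $z$. Using independence, zero mean and unit variance of the $X_j^{(n)}$, a direct computation gives
\begin{equation*}
\mathbf{E}\bigl|\mathcal{P}_n(z)\bigr|^2=\sum_{j=0}^n \bigl|\alpha_j^{(n)}\bigr|^2|z|^{2j}=\bigl|\alpha_n^{(n)}\bigr|^2 Q_n(|z|^2).
\end{equation*}
A second-moment (Paley--Zygmund) argument, combined with a truncation and Borel--Cantelli, yields $\tfrac{1}{2n}\log|\mathcal{P}_n(z)|^2-\tfrac{1}{2n}\log\mathbf{E}|\mathcal{P}_n(z)|^2\to 0$ almost surely. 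Since the $X_j^{(n)}$ have a uniformly bounded density, $\tfrac{1}{n}\log|X_n^{(n)}|\to 0$ almost surely, and combining with the definition of $U$ gives $U_{\mu_n}(z)\to U(z)$ almost surely for each fixed $z$ at which $U(z)$ is finite.

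The main obstacle is upgrading this pointwise statement to convergence in $L^1_{\mathrm{loc}}$, because $\log|\mathcal{P}_n(z)|$ can excurse to $-\infty$ near the random zeros themselves. The classical device here is subharmonic compactness: the functions $U_{\mu_n}$ are subharmonic with a uniform upper bound on compact sets (coming from Jensen's inequality applied to $\mathbf{E}|\mathcal{P}_n|^2$), so along a.s.\ realizations any subsequence admits a further subsequence converging in $L^1_{\mathrm{loc}}$ to a subharmonic limit. The uniform density bound $M$ provides the crucial anti-concentration estimate $\Prob(|\mathcal{P}_n(z)|<\varepsilon)\lesssim M\varepsilon/|\alpha_n^{(n)}|$ (Carbery--Wright style), ruling out exceptional downward excursions and ensuring that all subsequential limits coincide with the pointwise limit $U$. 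This identifies the $L^1_{\mathrm{loc}}$ limit as $U$ itself.

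Finally, taking the distributional Laplacian commutes with $L^1_{\mathrm{loc}}$ convergence, so $\mu_n=\tfrac{1}{2\pi}\Delta U_{\mu_n}\to \tfrac{1}{2\pi}\Delta U=\mu$ weakly in the sense of distributions, and the support of $\mu$ is characterized by where $\Delta U$ is non-zero. The hypothesis $\alpha_n^{(n)}\neq 0$ ensures that $\mathcal{P}_n$ really has degree $n$ (so $\mu_n$ is a probability measure), and the assumption that $U$ exists as a locally integrable function or distribution is exactly what is needed to make the final identification well defined. The technically most delicate step is the anti-concentration / uniform integrability argument in the third paragraph; the rest is a careful organization of standard estimates from the potential theory of random polynomials.
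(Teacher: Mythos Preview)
The paper does not prove this theorem at all: it is quoted (in a slightly specialised form) from \cite[Theorem~1.2]{Barbarino Noferini} and used as a black box to compute the limiting spectral densities in Section~\ref{Sec:equilibrium density}. There is therefore no ``paper's own proof'' to compare against.

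Your sketch follows the standard potential-theoretic template for such results (logarithmic potentials of empirical root measures, subharmonic compactness, anti-concentration to rule out mass escaping to $-\infty$, then distributional Laplacian), and this is indeed the architecture used in the Barbarino--Noferini paper and its predecessors (Kabluchko--Zaporozhets, Hughes--Nikeghbali, etc.). Two steps are stated too loosely to stand as a proof, however. First, the ``Paley--Zygmund plus Borel--Cantelli'' passage to almost sure pointwise convergence of $\tfrac{1}{2n}\log|\Pc_n(z)|^2$ needs more: Paley--Zygmund gives only a lower bound on $\Prob(|\Pc_n(z)|^2\geq c\,\mathbf{E}|\Pc_n(z)|^2)$ that is bounded away from zero, not a summable tail, so Borel--Cantelli does not apply directly; one needs either a quantitative concentration estimate or the subsequence-plus-monotonicity trick standard in this literature. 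Second, your anti-concentration bound $\Prob(|\Pc_n(z)|<\varepsilon)\lesssim M\varepsilon/|\alpha_n^{(n)}|$ is not quite right as written: conditioning on all coordinates but one with the largest weight $|\alpha_j^{(n)}z^j|$ gives a bound involving that quantity (and, for complex $X_j$, a square), not $|\alpha_n^{(n)}|$ alone. These are repairable with the arguments in \cite{Barbarino Noferini}, but as a self-contained proof your outline has genuine gaps at precisely the points you flag as ``technically most delicate.''
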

\begin{proof}
We relaxed the stronger condition of independence of the coefficients $X_k^{(n)}$ to pairwise-uncorrelateness. In \cite[Theorem 1.2]{Barbarino-Noferini}, the authors allow for random $k\times k$ matrices \smash{$C_j^{(n)}$} in the polynomials which contain independent copies of the random variables \smash{$X_j^{(n)}$}. Hence, the condition of independence appears twice in their theorem, i.e., that the complex random variables~\smash{$X_j^{(n)}$} are independent and the random matrices \smash{$C_j^{(n)}$} are built up by independent copies of the variables each. As we need only the case of $k=1$ (in the notation of~\cite{Barbarino-Noferini} $k$ and $n$ are interchanged), we briefly revisit the key points of their proof.

We can assume without loss of generality that $\bigl|\alpha_n^{(n)}\bigr|=1$ for each~$n$. If we fix $z$ and analyse~$\Pc_n(z)$ from \eqref{eq-polynomial thm Barbarino Noferini}, we find that it can be treated as a random variable with mean 0 and variance \smash{$\sigma^2=\sum_{i=0}^n\bigl|\alpha_i^{(n)}\bigr|^2|z|^{2i}$}. This follows from the fact that the \smash{$X_j^{(n)}$}'s are pairwise-uncorrelated as
\begin{align*} \Var\biggl(\sum_{i=0}^n\alpha_i^{(n)}X_i^{(n)}z^i\biggr)&=\sum_{i=0}^n\Var\bigl(\alpha_i^{(n)}X_i^{(n)}z^i\bigr)+2\sum_{i<j}^n\Cov\bigl(\alpha_i^{(n)}X_i^{(n)}z^i,\alpha_j^{(n)}X_j^{(n)}z^j\bigr)\\
			&=\sum_{i=0}^n\bigl|\alpha_i^{(n)}\bigr|^2|z|^{2i},
		\end{align*}
		where we used that the $X_j^{(n)}$'s have unit variance. Next, we need to prove that the random variable $\Pc_n(z)/\sigma$, which has zero mean and unit variance, has a bounded density function. We have that $\Pc_n(z)/\sigma$ is of the type
		\begin{align*}
			Y_n=X_n^{(n)}\frac{\alpha_n^{(n)}z^n}{\sigma}+X_{n-1}^{(n)}\frac{\alpha_{n-1}^{(n)}z^{n-1}}{\sigma}+\dots+X_1^{(n)}\frac{\alpha_1^{(n)}z}{\sigma}+X_0^{(n)}\frac{\alpha_0^{(n)}}{\sigma},
		\end{align*}
		where all $X_i^{(n)}$ have a density function uniformly bounded by a constant $M>0$ (by assumption). Here comes now the crucial change in the proof for uncorrelated variables. In the work of Barbarino--Noferini, where $Y_n$ is the sum of independent uniformly bounded random variables with zero mean and unit variance, they applied a result by Bobkov--Chistyakov~\cite{Bobkov-Chistyakov} to prove that~$Y_n$ is bounded by the very same constant $M$. Then, a matrix built up by independent copies of this $Y_n$ can be further studied. However, we can study this complex random variable directly and do not need this step here. In order to complete the proof, the following two bounds are needed:
		\begin{align*}
			&\mathbb{P}\bigl(\sigma(\Pc_n(z)/\sigma)\leq u\bigr)\leq c u^2,\\
			&\mathbb{P}\bigl(\lVert \Pc_n(z)/\sigma\rVert\geq v\bigr)\leq \frac{1}{v^2}\qquad \forall u,v>0,
		\end{align*}
		where $\sigma(A)$ denotes the largest singular value of the matrix $A$.
		Let us recall that we deal here with a much simpler case, i.e., $\Pc_n(z)/\sigma$ is a complex random variable for each fixed $z$ rather then a random matrix. The first inequality holds as we replace the squared singular value of~$\Pc_n(z)/\sigma$ in this case of a $1\times 1$ matrix by the absolute square. The second relation follows from Chebyshev's inequality. The rest of the proof stays the same as in the case discussed in~\cite{Barbarino-Noferini}.
\end{proof}

The remarkable point is that the coefficients of the polynomial $\Pc_n(z)$ in this theorem are not necessarily identically distributed. The condition of unit variance can be full-filled by pulling the non-unit variance of the coefficients into the complex coefficients \smash{$\alpha_j^{(k)}$}. We apply this theorem on random polynomials to the characteristic polynomials of the non-Hermitian tridiagonal matrix ensembles, to obtain the limiting distribution of the eigenvalues of the ensembles which equals the distribution of the zeros of their characteristic polynomials. In the low temperature limit $\beta \rightarrow \infty$, we determined an explicit expression of the characteristic polynomials in the previous Section~\ref{Sec:characteristic polynomial} and derived in Lemma~\ref{lemma-mean and variance kappa's} the variance and mean of the coefficients. Using these results one can in principle calculate the equilibrium density of the zeros of the characteristic polynomial~$\Pc_n(z)$, which correspond to the eigenvalues of the centred random Jacobi matrix $T|_{a_j=0}$ (full-filling the conditions of Lemma~\ref{lemma-mean and variance kappa's}) via Theorem~\ref{thm-Barbarino Noferini} by choosing \smash{$\bigl|\alpha_j^{(n)}\bigr|^2=\Var(b_jc_j)$}. However, as the above expression of the variances still contains the product of nested sums, it is difficult to proceed in this general setting. Therefore, we will now turn to the explicit low temperature limits we derived in Section~\ref{Sec:low temperature expansion}.

\subsection{General ensemble}\label{Sec:SpecDensUcon}
To derive the limiting spectral density of a matrix drawn from the general complex $\beta$-ensemble in the low temperature limit, consider the characteristic polynomial of the scaled matrix $D_n$ from Section~\ref{ref:lowtemp}. Specifically,
\begin{align}\label{eq: D_n rescaled}
		D_n=\begin{pmatrix}
			0& \sqrt{n-1}{\rm e}^{{\rm i}\phi_{n-1}} & & & \\
			\sqrt{n-1}{\rm e}^{{\rm i}\theta_{n-1}}& 0 &\sqrt{n-2}{\rm e}^{{\rm i}\phi_{n-2}} & &\\
			& \ddots & \ddots &\ddots &\\
			& & \sqrt{2}{\rm e}^{{\rm i}\theta_2}& 0 & \sqrt{1}{\rm e}^{{\rm i}\phi_1}\\
			& & & \sqrt{1}{\rm e}^{{\rm i}\theta_1} & 0
		\end{pmatrix},
	\end{align}
	where $\theta_\ell,\phi_\ell\sim \mathcal{U}(0,2\pi)$ are independent and identically distributed random variables. To apply Theorem~\ref{thm-Barbarino Noferini} we incorporate a factor of $\sqrt{2}$ in front of $D_n$.\footnote{In particular, we need a factor of $\sqrt{2}$ in front of each random variable of the form ${\rm e}^{{\rm i}\theta}$, with $\theta\sim \mathcal{U}(0,2\pi)$.} This is because after applying the low temperature limit, where we effectively performed the one-dimensional limit of the chi-distribution while keeping the angular component fixed, we are left with a one-dimensional random variable in each sub- and super-diagonal entry. However, due to their explicit form, these lead to complex-valued entries. Thus, we effectively have a one-dimensional random variable, in complex representation as a phase. According to the methodology of the authors of \cite{Barbarino-Noferini}, this should be interpreted as a one-dimensional random variable multiplied by a complex coefficient. However, we cannot do this because the real and imaginary parts of the variable ${\rm e}^{{\rm i}\theta}$, with $\theta\sim \mathcal{U}(0,2\pi)$, are not independent. Therefore, we interpret the matrix entries as complex random variables, whose real and imaginary parts contribute equally, i.e., each contributes to half of the total variance. By multiplying with the factor $\sqrt{2}$, we take this into account.

 Let $\mathcal{P}_n^D(z)$ denote the characteristic polynomial of $D_n$ with $b_\ell c_\ell=\ell {\rm e}^{{\rm i}(\theta_\ell+\phi_\ell)}$, $\ell=1,\dots,n-1$.
 \begin{Lemma}\label{lemma-var coeff P D}
	The coefficients $\kappa_\ell^{(n,D)}$ of the polynomial $\mathcal{P}_n^D(z)$ are independent, have zero mean and variance given by
	\begin{align}\label{eq-var coeff P D}
		\Var\bigl(\kappa_\ell^{(n,D)}\bigr)= \sum_{\gamma_1=1}^{n-1}\gamma_1^2\sum_{\gamma_2=1}^{\gamma_1-2}\gamma_2^2\cdots \sum_{\gamma_\ell=1}^{\gamma_{\ell-1}-2}\gamma_\ell^2
	\end{align}
	for all $\ell=1,2,\dots,\bigl\lfloor \frac{n}{2}\bigr\rfloor$.
 \end{Lemma}
 \begin{proof}
 We utilise Lemma~\ref{lemma-mean and variance kappa's} with $b_j
 c _j=j{\rm e}^{{\rm i}(\theta_j+\phi_j)}$ for $j=1,\dots,n-1$, where $\theta_j\sim \mathcal{U}(0,2\pi)$. First, we check whether these random variables have zero mean. We study a complex random variable of the form $Z:=a{\rm e}^{{\rm i}m\theta}$, with $a\in\mathbb{R}$, $\theta\sim \mathcal{U}(0,2\pi)$ and $m\in\mathbb{N}$. We assume that $m=1$ without loss of generality as the exponential function is periodic. The mean of $Z$ vanishes since
 \begin{equation*}
 \mathbb{E}[Z]=\frac{a}{2\pi}\int_{0}^{2\pi}{\rm e}^{{\rm i}\theta}\mathrm{d}\theta=0.
 \end{equation*}
 Next, note the variance of a complex random variable $W$ is defined by
			\begin{align*}
				\Var(W)=\mathbb{E}\bigl[|W|^2\bigr]-|\mathbb{E}[W]|^2.
			\end{align*}
Applying this definition to the random variable $Z$, we find
			\begin{align*}
				\Var(Z)&=\mathbb{E}\bigl[\bigl|a{\rm e}^{{\rm i}\theta}\bigr|^2\bigr]-\bigr|\mathbb{E}\bigl[a{\rm e}^{{\rm i}\theta}\bigr]\bigr|^2=a^2.
		\end{align*}
Hence, we have $\Var(b_jc_j)=j^2$ for $j=1,\dots,n-1$. Utilising Lemma~\ref{lemma-mean and variance kappa's} allows us to conclude the result stated in equation~\eqref{eq-var coeff P D} as desired.
 \end{proof}

As established in the previous sections, the random coefficients of the characteristic polynomial $\mathcal{P}_n^D(z)$ satisfy the necessary conditions for applying Theorem~\ref{thm-Barbarino Noferini}. To simplify the polynomial $Q_n^D(x)$, constructed from the coefficients \smash{$\bigl|\alpha_\ell^{(n)}\bigr|^2=\Var\bigl(\kappa_\ell^{(n,D)}\bigr)$}, we analyse these polynomials using generating functions. Consequently, we state the following lemma.
 \begin{Lemma}\label{lemma-Q D polynomials with var kappa}
 The polynomials
 \begin{align*}
 Q_n^D(x)=\sum_{\ell=0}^{\lfloor \frac{n}{2}\rfloor}\Var\bigl(\kappa_\ell^{(n,D)}\bigr)x^{n-2\ell},
 \end{align*}
 with $\Var\bigl(\kappa_\ell^{(n,D)}\bigr)$ given by equation~\eqref{eq-var coeff P D} for $\ell=1,\dots,\left\lfloor \frac{n}{2}\right\rfloor$ and $\Var\bigl(\kappa_0^{(n,D)}\bigr)=1$, satisfy the following three-term recurrence relation:
 \begin{align}\label{eq-3 term recurrence Q D}
 Q_{n+1}^D(x)=xQ_n^D(x)+n^2Q_{n-1}^D(x), \qquad n\geq1.
 \end{align}
 \end{Lemma}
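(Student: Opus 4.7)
The plan is to derive \eqref{eq-3 term recurrence Q D} by first establishing a parallel three-term recurrence at the level of the coefficient variances $\Var(\kappa_\ell^{(n,D)})$ themselves, and then packaging these variances into the polynomial $Q_n^D(x)$ so that the coefficient recurrence translates into a polynomial identity. The starting point is the three-term recurrence \eqref{eq-3 term recurrence} for $\mathcal{P}_n$, not its closed form \eqref{eq-formula coefficients P}: the latter contains nested sums which would make a direct computation of $Q_n^D(x)$ unwieldy, whereas the recurrence propagates cleanly.

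Concretely, I would first rewrite the parity-split coefficient recursion \eqref{eq-recursion kappas} uniformly as
\begin{equation*}
\kappa_\ell^{(n+1)} = \kappa_\ell^{(n)} - \widetilde{b}_n\, \kappa_{\ell-1}^{(n-1)},
\end{equation*}
adopting the conventions $\kappa_0^{(n)} := 1$ and $\kappa_\ell^{(n)} := 0$ for $\ell > \lfloor n/2 \rfloor$; the four cases in \eqref{eq-recursion kappas} all collapse to this single identity under these conventions. Next, I would take variances on both sides. Since $\widetilde{b}_n$ is independent of both $\kappa_\ell^{(n)}$ and $\kappa_{\ell-1}^{(n-1)}$ (which involve only $\widetilde{b}_1,\dots,\widetilde{b}_{n-1}$), and since $\E(\widetilde{b}_n)=0$, the covariance
$\E\bigl(\kappa_\ell^{(n)}\,\overline{\widetilde{b}_n\kappa_{\ell-1}^{(n-1)}}\bigr)
=\E(\overline{\widetilde{b}_n})\,\E\bigl(\kappa_\ell^{(n)}\overline{\kappa_{\ell-1}^{(n-1)}}\bigr)$
vanishes even though $\kappa_\ell^{(n)}$ and $\widetilde{b}_n\kappa_{\ell-1}^{(n-1)}$ share randomness through $\widetilde{b}_1,\dots,\widetilde{b}_{n-2}$. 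Combined with $\Var(\widetilde{b}_n\kappa_{\ell-1}^{(n-1)})=\Var(\widetilde{b}_n)\Var(\kappa_{\ell-1}^{(n-1)})$ for zero-mean independent factors, this yields
\begin{equation*}
\Var(\kappa_\ell^{(n+1,D)}) \;=\; \Var(\kappa_\ell^{(n,D)}) \;+\; n^2\,\Var(\kappa_{\ell-1}^{(n-1,D)}),
\end{equation*}
where I used $\widetilde{b}_n = n\,e^{i(\theta_n+\phi_n)}$, hence $\Var(\widetilde{b}_n)=n^2$, as computed in Lemma \ref{lemma-var coeff P D}.

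Finally, I would multiply this variance recurrence by $x^{n+1-2\ell}$ and sum over $\ell$. The first term reassembles into $x\,Q_n^D(x)$ (the factor $x$ accounts for the degree shift $n-2\ell \mapsto n+1-2\ell$), while the second term, after the reindexing $\ell \mapsto \ell+1$, reassembles into $n^2\,Q_{n-1}^D(x)$; the $\ell=0$ contribution $x^{n+1}$ is handled uniformly by the convention $\Var(\kappa_0^{(n,D)})=1$. The main obstacle is the careful bookkeeping across the two parities of $n$, since the summation limits $\lfloor (n+1)/2 \rfloor$, $\lfloor n/2 \rfloor$ and $\lfloor (n-1)/2 \rfloor$ must match after reindexing; the convention $\Var(\kappa_\ell^{(n,D)})=0$ for $\ell > \lfloor n/2 \rfloor$ renders these indexing issues purely notational. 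A quick check of the base cases $Q_0^D(x)=1$, $Q_1^D(x)=x$, $Q_2^D(x)=x^2+1$ using \eqref{eq-characteristic polynomial small n examples} anchors the induction and concludes the proof.
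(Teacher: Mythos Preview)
Your proof is correct. The paper's proof is a one-liner: it observes that the closed form \eqref{eq-var coeff P D} for $\Var(\kappa_\ell^{(n,D)})$, already established in Lemma~\ref{lemma-var coeff P D}, is precisely formula \eqref{eq-formula coefficients P} of Proposition~\ref{prop-formula coefficients P} with the deterministic substitution $\widetilde b_j\to -j^2$ (the sign absorbing the $(-1)^\ell$); since Proposition~\ref{prop-formula coefficients P} holds for arbitrary complex $\widetilde b_j$, the three-term recurrence \eqref{eq-3 term recurrence} then applies verbatim and gives \eqref{eq-3 term recurrence Q D}. You instead bypass the closed form entirely and work probabilistically: you take variances of the random coefficient recursion $\kappa_\ell^{(n+1)}=\kappa_\ell^{(n)}-\widetilde b_n\kappa_{\ell-1}^{(n-1)}$ directly, using that $\widetilde b_n$ is independent of the lower-order coefficients and has zero mean to kill the cross term, and then reassemble the resulting variance recursion into the polynomial identity. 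Both routes ultimately rest on the same three-term structure; yours is more self-contained in that it does not require Lemma~\ref{lemma-var coeff P D}'s nested-sum formula as input, while the paper's is shorter because it recognises $Q_n^D$ as a deterministic instance of the polynomial family already analysed in Proposition~\ref{prop-formula coefficients P}.
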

 \begin{proof}
 From the variance at $\ell=0$, we have $Q_0^D(x)=1$ and $Q_1^D(x)=x$. The rest
 follows from Proposition~\ref{prop-formula coefficients P} with $\Var(-b_\ell c_\ell)=\ell^2$ for $\ell=1,\dots,n-1$.
 \end{proof}

 \begin{Proposition}\label{prop-explicit form Q H}
 We find the following rewriting of the polynomials $Q_n^D(x)$:
 \begin{align*}
 Q_n^D(x)=\sum_{k=0}^n\frac{(-1)^{n-k}(n!)^22^k}{(k!)^2(n-k)!}\left(\frac{1}{2}+\frac{x}{2}\right)_k,
 \end{align*}
 where $(a)_b:=\Gamma(a+b)/\Gamma(a)$ denotes the Pochhammer symbol.
 \end{Proposition}
 \begin{proof}
	We use the fact that the polynomials $Q_n^D(x)$ satisfy the three-term recurrence relation~\eqref{eq-3 term recurrence Q D} and study them via the method of generating functions. The latter is defined by
	\begin{align}\label{eq-generating function Q D}
		q(z,x)=\sum_{n=0}^\infty \frac{Q_n^D(x)}{(n!)^2}z^n.
	\end{align}
	We study a differential equation for $q(z,x)$, with differential operator $D_z:=\partial_z z \partial_z=z\partial_z^2+\partial_z$,
	\begin{align*}
		D_zq(z,x) =x\sum_{n=0}^\infty \frac{Q_n^D(x)}{n!n!}z^n+\sum_{n=0}^\infty \frac{Q_{n-1}^D(x)}{(n-1)!(n-1)!}z^n
		 =xq(z,x)+z q(z,x).
	\end{align*}
	The definition
	$q(z,x)={\rm e}^{-z}r(z,x)$ leads to
	\begin{align*}
		z\partial_z^2 r(z,x)+(1-2z)\partial_z r(z,x)-(x+1)r(z,x)=0,
	\end{align*}
	which is Kummer's differential equation in variable $2z$, cf. \cite{NIST}. We can thus immediately write down the two independent solutions
		\begin{align*}
			q(z,x)=c_1{\rm e}^{-z}U\left(\frac{1}{2}+\frac{x}{2},1;2z\right)+c_2{\rm e}^{-z}
M\left(\frac{1}{2}+\frac{x}{2},1;2z\right)	,
		\end{align*}
		where $U(a,b;y)$ denotes the confluent or Kummer hypergeometric function and $M(a,b;y)$ Tricomi's hypergeometric function.
The initial conditions $q(0,x)=1$ and $\partial_zq(z,x)|_{z=0}=x$ follow from the definition \eqref{eq-generating function Q D} and the two lowest order polynomials which determine the constants~$c_1$,~$c_2$ in the solution of the differential equation.
Because $U(a,1,z)$ has a logarithmic singularity for $z\rightarrow 0$ \cite[equation~(13.2.19)]{NIST}, we have to set $c_1=0$ and $c_2=1$, as $M(a,b;y)={}_1F_1(a;b;y)$ is unity at the origin. The second boundary condition can be easily checked from the derivative of
$M(a,b;y)$ at the origin using~\cite{NIST}.
Finally, the explicit form \eqref{prop-explicit form Q H} follows from writing out the hypergeometric function ${}_1F_1(a,b;z)$ as a series and using the Cauchy product
	\begin{align*}
		{\rm e}^{-z}{}_1F_1\left(\frac{1}{2}+\frac{x}{2},1;2z\right) =\sum_{k=0}^\infty \frac{(-1)^kz^k}{k!}\sum_{n=0}^\infty \frac{(a)_n2^n}{n!n!}z^n
		 =\sum_{n=0}^\infty \sum_{k=0}^n (-1)^{n-k}\frac{(a)_k2^{k}}{k!k!(n-k)!}z^n,
	\end{align*}
	with $a=\frac{1}{2}+\frac{x}{2}$.
	\end{proof}

We observe that the polynomials $Q_n^D(x)$ in \eqref{prop-explicit form Q H} are the \textit{Meixner--Pollaczek}-polynomials, denoted by \smash{$P_n^{(\lambda)}(y;\phi)$} in its hypergeometric representation \cite[Section~9.7]{Hypergeometric-Orthogonal-Polynomials}
\begin{align*}
 P_n^{(\lambda)}(y;\phi)=\frac{(\lambda)_n}{n!}{\rm e}^{-{\rm i}n\phi}{}_2F_1\bigl( -n, \lambda+{\rm i}y;2\lambda;1-{\rm e}^{-2{\rm i}\phi}\bigr).
\end{align*}
We obtain the polynomials $Q_n^D(y)$ with the choice $\lambda=\frac{1}{2}$, $\phi=-\frac{\pi}{2}$ and $y=-\frac{{\rm i}}{2}$. The Meixner--Pollaczek polynomials appeared already in another context of random matrices, namely the study of moments and hypergeometric orthogonal polynomials in \cite{Cunden-et-al}.

Combining all results, we are now in the position to compute the equilibrium eigenvalue density \eqref{eq-limiting density T,S} of eigenvalues of the general complex tridiagonal matrix \smash{$T_\beta$} in the low temperature.

\subsubsection{Proof of Theorem~\ref{thm-density beta}, equation~(\ref{eq-limiting density T,S})}
We derived Lemma~\ref{lemma-var coeff P D} for the variances of the coefficients of the characteristic polynomial of matrix $D_n$. To apply Theorem~\ref{thm-Barbarino Noferini}, we observe that we found an expression for the characteristic polynomial of~$D_n$, such that the coefficients of this polynomial in $z$ are independent, have zero mean and the variance is given by~\eqref{eq-var coeff P D}. Moreover, we proved in Lemma~\ref{lemma-Q D polynomials with var kappa} that the expressions for the variances of the coefficients $\kappa_k^{(n,D)}$ are the coefficients of the polynomial $Q_n^D(x)$. We define the parameters $\alpha_k^{(n)}$ in Theorem~\ref{thm-Barbarino Noferini} by the square root of the variances \smash{$\sqrt{\vphantom{\big|}\smash{\Var\bigl(\kappa_k^{(n,D)}\bigr)}}$} and find that we have to study those polynomial, whose coefficients are the variances of the coefficients of the characteristic polynomial of $D_n$ as a polynomial in~$|z|^2$, i.e., we study $Q_n^D\bigl(|z|^2\bigr)$. To find the corresponding limit for the matrix \smash{$T_\beta$}, we have to take the scaling due to Theorem~\ref{thm:largebetagen} into account.
	Recall that we denote the characteristic polynomial of the matrix \smash{$T_\beta$} by
	\begin{align*}
		\Pc^T_n(z)=\det(z\mathbf{1}_n-T_\beta),
	\end{align*}
whereas the characteristic polynomial of $D_n$ from equation~\eqref{eq:largebetasgen} is defined as
\begin{align*}
	\Pc^D_n(z)=\det(z\mathbf{1}_n-D_n).
\end{align*}
We conclude that these two characteristic polynomials are related due to Theorem~\ref{thm:largebetagen} as
\begin{align}
	\Pc^T_n(z)&=\prod_{i=1}^n(z-\lambda_i(T_\beta))
	 =\prod_{i=1}^n\left(z-\frac{1}{\sqrt{2n}}\lambda_i(D_n)+o\left(\frac{1}{\sqrt{\beta}}\right)\right)\nonumber\\
	&=\frac{1}{(2n)^{\frac{n}{2}}}\Pc^D_n(\sqrt{2n}z) +o\left(\frac{1}{\sqrt{\beta}}\right).
	\label{eq-charac poly with beta correction}
\end{align}
We will only take the leading order of the low temperature limit $\beta \gg 1$ into account and comment on possible corrections later. Recall that for calculating the limiting density we have to take the variance of the coefficients of the characteristic polynomials into account which yields the following result due to Proposition~\ref{prop-explicit form Q H} and Theorem~\ref{thm-Barbarino Noferini}:
\begin{align*}
	U(z)&=\lim_{n\rightarrow \infty}\frac{1}{2n}\ln \left(\frac{1}{(2n)^{n}}Q_n^D\left((2n)|z|^2+o\left( \frac{1}{\beta}\right)\right)\right)\\
	&=\lim_{n\rightarrow \infty}\frac{1}{2n}\ln \left(\frac{1}{(2n)^{n}}\sum_{k=0}^n\frac{(-1)^{n-k}(n!)^22^k}{(k!)^2(n-k)!}\left(\frac{1}{2}+n|z|^2\right)_k+o\left( \frac{1}{\beta}\right)\right)\\
	&=\lim_{n\rightarrow \infty}\frac{1}{2n}\ln \left((-1)^n \frac{n!}{(2n)^n}\, {}_2F_1\left(-n,\frac{1}{2}+n|z|^2;1;2\right)+o\left( \frac{1}{\beta}\right)\right). \label{eq-expression via 2F1 symm model}
\end{align*}
In the second step, we have replaced the sum by Gauss' hypergeometric function which truncates. Notice that the prefactor in \eqref{eq-charac poly with beta correction} gets squared as we now consider the polynomial $Q_n^D\bigl((2n)|z|^2\bigr)$ of the variances.
To compute the limit of the hypergeometric function, we assume that we can first expand in the first argument as it diverges faster than the second argument, because we have $|z|<1$ in the corresponding scaling limit. We apply the large $n$ asymptotics given in \cite[equation~15.7.3]{Abramowitz-Stegun} with $\arg(-2n)=\pi$ for $n\in \mathbb{N}$:
	\begin{align*}
		{}_2F_1\left(-n,\frac{1}{2}+n|{z}|^2;1;2\right)
		=&\frac{{\rm i}{\rm e}^{{\rm i}\pi n|{z}|^2}}{\Gamma\bigl(\frac{1}{2}-n|{z}|^2\bigr)}(-2n)^{-\frac{1}{2}-n|{z}|^2}
		\left(1+\mathcal{O}\left(\frac{1}{2n}\right)\right)\\
		&+\frac{1}{\Gamma\bigl(\frac{1}{2}+n|{z}|^2\bigr)}{\rm e}^{-2n}(-2n)^{-\frac{1}{2}+n|{z}|^2}\left(1+\mathcal{O}\left(\frac{1}{2n}\right)\right).
	\end{align*}
Using the Stirling formula to approximate $\Gamma(x)$ for large $x$ as \cite[equation~(8.327)]{gradshteyn}
	\begin{equation*}
		\Gamma(x)\sim x^{x-1/2}{\rm e}^{-x}\sqrt{2\pi}\left(1+\mathcal{O}\left(\frac{1}{x}\right)\right),
	\end{equation*}
we	conclude
	\begin{align}
		{}_2F_1\left(-n,\frac{1}{2}+n|z|^2;1;2\right)={}& \frac{{\rm i}}{\sqrt{2\pi}}{\rm e}^{{\rm i}\pi n|z|^2}(-n|z|^2)^{n|z|^2}{\rm e}^{-n|z|^2}
		(-2n)^{-\frac{1}{2}-n|z|^2}
		\left(1+\mathcal{O}\left(\frac{1}{2n}\right)\right)\nonumber\\
		&+\frac{1}{\sqrt{2\pi}}(n|z|^2)^{-n|z|^2}{\rm e}^{n|z|^2}{\rm e}^{-2n} (-2n)^{-\frac{1}{2}+n|z|^2}\left(1+\mathcal{O}\left(\frac{1}{2n}\right)\right)\nonumber\\
		\approx{}& \frac{{\rm i}}{\sqrt{2\pi}}{\rm e}^{n|z|^2(i\pi-1)}(|z|^2/2)^{n|z|^2}(-2n)^{-\frac12}
\nonumber\\
		&+\frac{1}{\sqrt{2\pi}}{\rm e}^{n(-2+|z|^2)}(-2/{|z|^2})^{n|z|^2}(-2n)^{-\frac12}. \label{eq:hypergeomsimp}
	\end{align}
	To calculate the limit $U(z)$, we have to find an expression for the logarithm of this approximation. We use the following rewriting of the logarithm of the sum of two complex variables $a$, $b$ with $\frac{|b|}{|a|}\gg 1$:
	\begin{align}\label{eq:logsimp}
		\ln(a+b)&=\ln(|a+b|)+{\rm i}\arg(a+b)
 \approx\ln |b|+{\rm i}\arg(a+b)
	\end{align}
	for \smash{$\frac{|b|}{|a|}\gg 1$}, where $\arg(a+b)\in [0,2\pi)$ is the argument of the complex variable $a+b$. It vanishes in the limit $n\rightarrow \infty$ because of the factor $1/(2n)$. We proceed by implementing the approximation of the hypergeometric function into the logarithm and we set
	\begin{align}\label{def:aandb}
		&a=\frac{{\rm i}}{\sqrt{2\pi}}{\rm e}^{n|z|^2({\rm i}\pi-1)}2^{-n|z|^2}\bigl(|z|^2\bigr)^{n|z|^2}\frac{1}{\sqrt{-2n}},\\
		&b=\frac{1}{\sqrt{2\pi}}{\rm e}^{n(-2+|z|^2)}\bigl(|z|^2\bigr)^{-n|z|^2}2^{n|z|^2}\frac{1}{\sqrt{-2n}},\nonumber
	\end{align}
	and observe that
	\begin{align}\label{eq:ratio}
		\frac{|a|}{|b|}={\rm e}^{-2n|z|^2+2n}\bigl(|z|^2/2\bigr)^{2n|z|^2}\ll 1 \qquad \text{for } n\rightarrow \infty.
	\end{align}
Note that the value of $\bigl|\frac{a}{b}\bigr|$ is not always close to zero for all values of $z\in\mathbb{\mathbb{C}}$ as $n\to\infty$. This is because the exponent in equation~\eqref{eq:ratio} can be positive or negative, depending on the value of~$z$, for example, when $z$ is close to zero.

For values of $z$ where the exponent is positive, we instead consider $\left|\frac{b}{a}\right|$, which decays to zero as $n\to\infty$. In this case, we rewrite equation~\eqref{eq:logsimp} as
\begin{align}\label{eq:ratio1}
		\ln(a+b)&\approx\ln(|a|)+{\rm i}\arg(a+b),
\end{align}
and carry on with the proof using this approximation.

To proceed, we therefore need to consider two cases, depending on the specific value of $z$, since the sign of the exponent in equation~\eqref{eq:ratio} depends on~$z$. This will determine whether \smash{$\bigl|\frac{a}{b}\bigr|$} or \smash{$\bigl|\frac{b}{a}\bigr|$} decays to zero. These two cases are:
\begin{enumerate}\itemsep=0pt
 \item For values of $z$ where the exponent in equation~\eqref{eq:ratio} is negative, $\left|\frac{a}{b}\right|\ll1$ as $n\to\infty$. Therefore, we use equation~\eqref{eq:ratio} to approximate the logarithm of equation~\eqref{eq:hypergeomsimp}.
 \item For values of $z$ where the exponent in equation~\eqref{eq:ratio} is positive, $\left|\frac{b}{a}\right|\ll1$ as $n\to\infty$. Therefore, we use equation~\eqref{eq:ratio1} to approximate the logarithm of equation~\eqref{eq:hypergeomsimp}.
\end{enumerate}
However, in both cases, equation~\eqref{eq:U(Z)unsimp} will lead to the same spectral density in equation~\eqref{eq:final spec dens}, regardless of whether we take $a$ or $b$ in equation~\eqref{def:aandb}. Thus, for the remainder of the proof, we focus on the case described in equation~\eqref{eq:ratio}, substituting equation~\eqref{eq:logsimp} into equation~\eqref{eq:U(Z)unsimp}. The proof for the other case follows analogously and is therefore omitted. Nevertheless, more care must be taken because equations~\eqref{eq:M def} and~\eqref{eq:eta def} are slightly different, as the signs of $m_1$ and $m_2$ will be flipped.

Next, we approximate the prefactor of the hypergeometric function using the Stirling formula again,
	\begin{align*}
		\frac{(-1)^nn!}{(2n)^n}\approx
		&\sqrt{2\pi n}\frac{(-1)^n{\rm e}^{-n}}{2^n}.
	\end{align*}
	Bringing all together, we find
	\begin{align}
		U(z)&=\lim_{n\rightarrow \infty}\frac{1}{2n}\ln\left(\frac{(-1)^nn!}{(2n)^n}{}_2F_1\left(-n,\frac{1}{2}+n|z|^2;1;2\right)\right)\nonumber\\
		&\approx \lim_{n\rightarrow \infty}\frac{1}{2n}\ln\bigl((-1)^{n}{\rm e}^{n(-3+|z|^2)}2^{n(|z|^2-1)}\bigl(|z|^2\bigr)^{-n|z|^2}\bigr)\nonumber\\
		&=m_0+m_1|z|^2+m_2|z|^2\ln\bigl(|z|^2\bigr),\label{eq:U(Z)unsimp}
	\end{align}
	where we obtain
	\begin{align}\label{eq:M def}
			m_0=\frac{{\rm i}\pi}{2}-\frac{3}{2}-\frac{1}{2}\ln(2),\qquad
			m_1=\frac{1}{2}+\frac{\ln(2)}{2},\qquad
			m_2=-\frac{1}{2}.
	\end{align}
	Calculating the Laplacian yields
	\begin{align*}
		\Delta_z U(z)\approx m_1+2m_2+m_2\ln\bigl(|z|^2\bigr).
	\end{align*}
We conclude that the equilibrium density of the general tridiagonal matrix ensemble is given by
\begin{align*}
	\widetilde{\rho}^T(z)=\eta_0 +\eta_1\ln\bigl(|z|^2\bigr),
\end{align*}
valid on its support to be determined, where the constants $\eta_0$, $\eta_1$ are defined by
\begin{align}\label{eq:eta def}
	\eta_0= m_1+2m_2=-\frac{1}{2}+\frac{\ln 2}{2},\qquad
	\eta_1= m_2=-\frac{1}{2}.
\end{align}
The function $\widetilde{\rho}^T(z)$ is a density as long as $\widetilde{\rho}^T(z)\geq 0$. We define the radius of the complex variable $z$ by $r:=|z|$, and determine the support also called droplet via the following condition
\begin{align*}
	\eta_0+\eta_1\log\bigl(r^2\bigr)\geq 0
	\ \Leftrightarrow \ 	r \leq \exp\left(-\frac{\eta_0}{2\eta_1}\right),
\end{align*}
where we used that $\eta_1<0$. We denote
\begin{align*}
	r_0:=\exp\left(-\frac{\eta_0}{2\eta_1}\right)=\sqrt{\frac{2}{e}}\approx 0.858.
\end{align*}
The support $[0,r_0]$ is determined by the ratio of the two constants $\eta_0$, $\eta_1$ and therefore independent of the normalisation.
We still have to normalise the density properly. Therefore, we define
\begin{align*}
A:= \int_{\mathbb{C}}\mathrm{d}^2z\, \widetilde{\rho}^T(z)=
2\pi	\int_0^{r_0} \mathrm{d}r\, r\widetilde{\rho}^T(z)
=2\pi	\int_0^{r_0} \mathrm{d}r\, r\bigl(\eta_0+\eta_1\ln\bigl(r^2\bigr)\bigr)
	 = -\pi\eta_1{\rm e}^{-\frac{\eta_0}{\eta_1}}=\frac{\pi}{{\rm e}}.
\end{align*}
Our final result is the following normalised equilibrium density
\begin{align}\label{eq:final spec dens}
	\rho^T(z) = \frac{{\rm e}}{\pi}\widetilde{\rho}^T(z)=
	\frac{{\rm e}^1}{2\pi}\bigl(\ln(2)-1-\ln\bigl(|z|^2\bigr)\bigr)
	 \approx -0.133-0.433\ln\bigl(|z|^2\bigr).
\end{align}

To prove that this result also holds for the symmetric ensemble $S$, we choose $\phi_j=\theta_j$ for all~$j$ in equation~\eqref{eq: D_n rescaled}. The result then follows a similar calculation to the one presented above.

\begin{Remark}[corrections in $1/\beta$]
 As we calculated the leading order of the equilibrium density in the low temperature limit $\beta \gg 1$, we will comment on the order of the sub-leading terms.
 We start from \eqref{eq-charac poly with beta correction} and perform the product as
	\begin{align*}
		\mathcal{P}_n^T(z) ={}&\frac{1}{(2n)^{\frac{n}{2}}}\mathcal{P}_n^D\bigl(\sqrt{2n}z\bigr)+o\left(\frac{1}{\sqrt{\beta}}\right)\left(\sum_{j=1}^n\prod_{k\neq j}\left(z-\frac{1}{\sqrt{2n}}\lambda_i(H_n)\right)\right)\\
		&+o\left(\frac{1}{\beta}\right)\left(\sum_{i=1}^n\sum_{j=1}^n\prod_{k\neq j,i}\left(z-\frac{1}{\sqrt{2n}}\lambda_i(H_n)\right)\right)+\cdots.
	\end{align*}
We observe that we can express the lower order terms in terms of derivatives of the characteristic polynomial $\mathcal{P}_n^D\bigl(\sqrt{2n}z\bigr)$ as
\begin{align}\label{eq-expansion in order of beta} \mathcal{P}_n^T(z)=\frac{1}{(2n)^{\frac{n}{2}}}\mathcal{P}_n^D\bigl(\sqrt{2n}z\bigr)+\frac{1}{(2n)^{\frac{n}{2}}}\sum_{i=1}^n o\left(\frac{1}{\beta^{\frac{i}{2}}}\right)\partial_z^i \mathcal{P}_n^D\bigl(\sqrt{2n}z\bigr).
\end{align}
We normalised the matrix \smash{$T_\beta$} by $1/\sqrt{2n\beta}$ such that the eigenvalues are supported in a compact disc. Therefore, we expect the characteristic polynomial $\mathcal{P}_n^T(z)$ to be bounded on the support of~\smash{$T_\beta$}. Hence, the quantity \smash{$\frac{1}{(2n)^{\frac{n}{2}}}\mathcal{P}_n^D\bigl(\sqrt{2n}z\bigr)$} is bounded, too. Furthermore, we use the fact that~the derivative of the characteristic polynomial \smash{$\mathcal{P}_n^D\bigl(\sqrt{2n}z\bigr)$} is given by the sum of the characteristic polynomials of all $((n-1)\times (n-1))$-dimensional sub-matrices of $D$. The latter are normalised with respect to a factor \smash{$1/(2n-1)^{\frac{n-1}{2}}$} (by an induction hypothesis, this normalisation is of the same order for all $n$). Notice that we get another factor of $\sqrt{2n}$ for each derivative as we act on the variable $z'=\sqrt{2n}z$.

It follows that the first correction reads
\begin{align*}
	o\left(\frac{1}{\sqrt{\beta}}\right)\frac{\sqrt{2n}}{(2n)^{\frac{n}{2}}}(2n-1)^{\frac{n-1}{2}}\times \text{norm.}=o\left(\frac{1}{\sqrt{\beta}}\right)\left(1-\frac{1}{2n}\right)^{\frac{n-1}{2}}\times \text{norm.},
\end{align*}
where ``norm.'' denotes the fixed norm of the characteristic polynomial of $D_n$. We observe that the leading order is independent of $n$. For the $i$-th order correction in $\beta$, we find
\begin{align*}
	o\left(\frac{1}{\beta^{\frac{i}{2}}}\right)\left(1-\frac{i}{2n}\right)^{\frac{n-i}{2}}\times \text{norm.}
\end{align*}
for $i=1,\dots,n$. For the calculation of the equilibrium density, we need the polynomial built by the variances of the coefficients of $\mathcal{P}_n^T(z)$. Therefore, we have to estimate the correction to each coefficient separately and calculate its variance. For $z^{n-1}$, we have only one contribution coming from the first correction term in \eqref{eq-expansion in order of beta}. Using the independence of the matrix elements of $D$, as we did previously, it is clear that the variance splits and we obtain a correction of order $o\bigl(\frac{1}{\beta}\bigr)$ for the second highest coefficient in~$z$. Hence, we conclude that
\begin{align*}
	U(z)=\lim_{n\rightarrow \infty}\frac{1}{2n}\ln \left(\frac{1}{(2n)^n}Q_n^D\bigl(2n|z|^2\bigr)+\frac{1}{(2n)^n}\sum_{i=1}^n o\left(\frac{1}{\beta^i}\right)Q_{n-i}^{D_i}\bigl(2n|z|^2\bigr)\right),
\end{align*}
where the $Q_{n-i}^{D_i}\bigl(2n|z|^2\bigr)$ is the polynomial of the variances of all $((n-i)\times (n-i))$-dimensional sub-matrices of $H$ for $i=1,\dots,n$. We estimate the leading order in $n$ by
\begin{align*}
\begin{split}
 U(z)&=\lim_{n\rightarrow \infty}\frac{1}{2n}\ln \left(\frac{1}{(2n)^n}Q_n^D\bigl(2n|z|^2\bigr)\left(1+\sum_{i=1}^n o\left( \frac{1}{\beta^i} \right) \right)\right)\\
 &=\lim_{n\rightarrow \infty}\frac{1}{2n}\ln \left(\frac{1}{(2n)^n}Q_n^D\bigl(2n|z|^2\bigr)\right)+\lim_{n\rightarrow \infty}\frac{1}{2n}\ln \left(1+\sum_{i=1}^n o\left( \frac{1}{\beta^i} \right) \right),
\end{split}
\end{align*}
where we find that the first correction term in orders of $\beta$ is massively suppressed in the large $n$-limit. This is the reason that the radial density of the general and the symmetric tridiagonal ensemble shows universality in $\beta$ in the large $n$-limit. This contrasts with the behaviour for $n=2$, which is discussed in Appendix~\ref{app-n=2}.

In the non-symmetric ensemble, we have that the coefficients of the polynomial $Q^G_n\bigl(|z|^2\bigr)$ still contain an explicit $\beta$-dependence (cf.\ Proposition~\ref{prop-explicit form Q G}) which makes the study of the correction terms much more evolved.
\end{Remark}

We conclude by determining the moments of our radially symmetric logarithmic density.
\begin{Lemma}
 The radial moments of the rotational symmetric density $\rho^{L}(z)$ for the general and symmetric ensemble $L=T,S$ are defined by
\begin{align}\label{eq-def rad moments}
 m_k:=\int_0^{r_0} \mathrm{d}r\, r^{2k+1}\rho^{L}(z)
\end{align}
for $k=1,2,\dots$ with $|z|=r$.
They are given as
\begin{align}\label{eq-rad moments}
 m_k=
 \frac{{\rm e}^1 }{4\pi }\frac{1}{(k+1)^2}r_0^{2k+2}= \frac{1}{2\pi(k+1)^2}\left(\frac{2}{e}\right)^k.
\end{align}
\end{Lemma}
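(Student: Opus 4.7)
The plan is to carry out a direct computation, after substituting the explicit form of the density from \eqref{eq-limiting density T,S}. Writing $|z| = r$ and passing to the substitution $u = r^2$, so that $r^{2k+1}\,dr = \tfrac{1}{2} u^k\,du$, the moment becomes
\begin{equation*}
m_k \;=\; \frac{e}{4\pi}\int_0^{r_0^2} u^k\bigl(\ln 2 - 1 - \ln u\bigr)\,du.
\end{equation*}
The key observation is that the constant in the integrand equals exactly $\ln(r_0^2)$, since $r_0^2 = 2/e$ yields $\ln(r_0^2) = \ln 2 - 1$. This is what will make the formula collapse to a single term.

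Next, I would split the integral into a constant part and a logarithmic part. The constant part is elementary and gives $(\ln 2 - 1)\, r_0^{2k+2}/(k+1)$. For the logarithmic part, integration by parts with $f = \ln u$ and $dg = u^k\,du$ gives
\begin{equation*}
\int_0^{r_0^2} u^k \ln u \, du \;=\; \frac{r_0^{2k+2} \ln(r_0^2)}{k+1} \;-\; \frac{r_0^{2k+2}}{(k+1)^2},
\end{equation*}
where the contribution at $u=0$ vanishes since $u^{k+1}\ln u \to 0$.

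Substituting back and using $\ln(r_0^2) = \ln 2 - 1$, the two boundary contributions proportional to $r_0^{2k+2} \ln(r_0^2)$ cancel exactly, leaving only
\begin{equation*}
m_k \;=\; \frac{e}{4\pi}\cdot\frac{r_0^{2k+2}}{(k+1)^2}.
\end{equation*}
Finally, inserting $r_0^{2k+2} = (2/e)^{k+1}$ gives the claimed form $m_k = \frac{1}{2\pi(k+1)^2}(2/e)^k$. The proof is routine; the only non-mechanical step is recognising the cancellation induced by the identity $\ln(r_0^2) = \ln 2 - 1$, which is the same identity that defines the edge of the droplet in Theorem~\ref{thm-density beta}.
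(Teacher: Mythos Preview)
Your proof is correct and follows exactly the approach the paper indicates: insert the explicit density \eqref{eq-limiting density T,S}, parametrise by $|z|=r$, and reduce to elementary integrals. The paper's own proof merely states ``using elementary integrals'' without writing them out; your substitution $u=r^2$, integration by parts, and identification $\ln(r_0^2)=\ln 2-1$ spell out precisely those routine steps.
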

\begin{proof}
 Due to the radial symmetry of density $\rho^{L}(z)$ we study only the non-vanishing radial moments defined in equation~\eqref{eq-def rad moments}. We parametrise $|z|=r$ and insert equation~\eqref{eq-limiting density T,S}, which leads to \eqref{eq-rad moments} using elementary integrals.
\end{proof}

The moments seemingly do not correspond to any interesting combinatorial objects, unlike the semi-circle, which relates to the Catalan numbers.

\subsection{Non-symmetric ensemble}\label{Subsec:equilibrium density non symmetric}
To derive the limiting spectral density of the non-symmetric tridiagonal matrix ensemble in the low temperature limit $(\beta \gg 1)$, consider the characteristic polynomial of the scaled matrix $G_n$ from Section~\ref{Sec:low temperature expansion}. Specifically,
\begin{align*}
		G_n=\begin{pmatrix}
			0& \frac{1}{\sqrt{\beta}} & & & \\
			\sqrt{n-1}{\rm e}^{{\rm i}\theta_{n-1}}& 0 &\frac{1}{\sqrt{\beta}} & &\\
			& \ddots & \ddots &\ddots &\\
			& & \sqrt{2}{\rm e}^{{\rm i}\theta_2}& 0 & \frac{1}{\sqrt{\beta}}\\
			& & & \sqrt{1}{\rm e}^{{\rm i}\theta_1} & 0
		\end{pmatrix},
	\end{align*}
where $\theta_i\sim \mathcal{U}(0,2\pi)$. We introduced here a factor of $\sqrt{2}$ in front of $G_n$ compared with the notation in Section~\ref{Sec:low temperature expansion}, for the same reasons as discussed for the non-symmetric ensemble in Section~\ref{Sec:SpecDensUcon}. We now consider the polynomial
$\mathcal{P}_n^{G}(z)$, defined in analogy to equation~\eqref{eq: Characteristic Polynomial}, associated to the rescaled random matrix $G_n$.
 \begin{Lemma}\label{lemma-var coeff P G}
	The coefficients $\kappa_\ell^{(n,G)}$ of the polynomial $\mathcal{P}_n^G(z)$ are independent, have zero mean and variance given by
	\begin{align}\label{eq-var coeff P G}
		\Var\bigl(\kappa_\ell^{(n,G)}\bigr)= \frac{1}{\beta^\ell} \sum_{\gamma_1=1}^{n-1}\gamma_1\sum_{\gamma_2=1}^{\gamma_1-2}\gamma_2\cdots \sum_{\gamma_\ell=1}^{\gamma_{\ell-1}-2}\gamma_\ell
	\end{align}
for all $\ell=1,2,\dots,\bigl\lfloor \frac{n}{2}\bigr\rfloor$.
 \end{Lemma}
 \begin{proof}
 We have \smash{$b_j c_j=\sqrt{\frac{j}{\beta}}{\rm e}^{{\rm i}\theta_j}$} for $j=1,\dots,n-1$, with $\theta_j\sim \mathcal{U}(0,2\pi)$. The rest of the proof is completely in parallel to the one of Lemma~\ref{lemma-var coeff P D} for the general ensemble.
 \end{proof}

 As in the previous subsection, we conclude that we have all conditions of the random coefficients of the characteristic polynomial $\mathcal{P}^G_n(z)$ to apply Theorem~\ref{thm-Barbarino Noferini}. Therefore, we define%
 \begin{equation}\label{eq:alphanonsym}
 |\alpha_j^{(n)}|^2=\Var\bigl(\kappa_j^{(n,H)}\bigr).
 \end{equation}
 To simplify our calculations, we will evaluate the function $Q_n^{G}(x)$, defined in~\eqref{eq:Qfunc}, in terms of the Hermite polynomials.
 \begin{Lemma}\label{lemma-Q G polynomials with var kappa}
 The polynomials
 \begin{align*}
 Q_n^G(x)=\sum_{k=0}^{\lfloor \frac{n}{2}\rfloor}\Var\bigl(\kappa_k^{(n,G)}\bigr)x^{n-2k},
 \end{align*}
 with $\Var\bigl(\kappa_k^{(n,G)}\bigr)$ given by equation~\eqref{eq-var coeff P G} for $k=1,\dots,\bigl\lfloor \frac{n}{2}\bigr\rfloor$ and $\Var\bigl(\kappa_0^{(n,G)}\bigr)=1$, satisfy the following three-term recurrence relation:
 \begin{align*}
 Q_{n+1}^G(x)=xQ_n^G(x)+\frac{n}{\beta}Q_{n-1}^G(x).
 \end{align*}
 \end{Lemma}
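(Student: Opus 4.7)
The plan is to follow the same strategy as the proof of the analogous Lemma \ref{lemma-Q D polynomials with var kappa} for the general ensemble, and simply track how the different variance of $\widetilde{b}_n$ in the non-symmetric model propagates through. First I would verify the base cases directly: by convention $\Var(\kappa_0^{(n,G)}) = 1$, so from $\mathcal{P}_0 \equiv 1$ and $\mathcal{P}_1^G(z) = z$ we read off $Q_0^G(x) = 1$ and $Q_1^G(x) = x$, matching the recurrence at $n=1$ provided we set $Q_{-1}^G \equiv 0$.

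For the inductive step, I would lift the three-term recurrence of Proposition \ref{prop-formula coefficients P}, namely $\mathcal{P}_{n+1}(z) = z\mathcal{P}_n(z) - \widetilde{b}_n \mathcal{P}_{n-1}(z)$, to the level of coefficients. Comparing powers of $z$ as already done in eq.~\eqref{eq-recursion kappas} gives
\begin{equation*}
\kappa_\ell^{(n+1,G)} = \kappa_\ell^{(n,G)} - \widetilde{b}_n\, \kappa_{\ell-1}^{(n-1,G)}.
\end{equation*}
Since $\widetilde{b}_n$ is independent of every $\widetilde{b}_j$ with $j<n$ and therefore of both $\kappa_\ell^{(n,G)}$ and $\kappa_{\ell-1}^{(n-1,G)}$, and since all three random variables have zero mean by Lemma \ref{lemma-mean and variance kappa's}, the variances split as in step (3) and step (4) of that proof:
\begin{equation*}
\Var\bigl(\kappa_\ell^{(n+1,G)}\bigr) = \Var\bigl(\kappa_\ell^{(n,G)}\bigr) + \Var(\widetilde{b}_n)\,\Var\bigl(\kappa_{\ell-1}^{(n-1,G)}\bigr).
\end{equation*}

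For the non-symmetric ensemble, the parametrisation $b_n c_n = \sqrt{n/\beta}\,e^{i\theta_n}$ with $\theta_n\sim\mathcal{U}(0,2\pi)$, as used in Lemma \ref{lemma-var coeff P G}, gives $\Var(\widetilde{b}_n) = n/\beta$, which is the sole quantitative difference with the general model (where it equals $n^2$). Multiplying the variance recurrence by $x^{n+1-2\ell}$, summing over $\ell$, and reindexing $\ell\mapsto\ell-1$ in the second contribution yields
\begin{equation*}
Q_{n+1}^G(x) = x\, Q_n^G(x) + \frac{n}{\beta}\, Q_{n-1}^G(x),
\end{equation*}
as required.

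The only point needing a little care is the top/bottom boundary of the coefficient index in eq.~\eqref{eq-recursion kappas}, in particular the $z^0$ coefficient when $n+1$ is even, where Proposition \ref{prop-formula coefficients P} treats the recursion as a separate case. Adopting the convention $\mathcal{P}_{-1}\equiv 0$ (equivalently $\Var(\kappa_\ell^{(k,G)}) = 0$ for $\ell > \lfloor k/2\rfloor$) makes the variance recursion above hold uniformly across all $\ell$, so no separate boundary argument is needed; this is the only potential obstacle and it is resolved purely by bookkeeping.
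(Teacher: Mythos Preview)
Your proof is correct and follows exactly the route the paper takes: the paper's proof is the one-line remark that the recurrence follows from Proposition~\ref{prop-formula coefficients P} with $\Var(-b_jc_j)=j/\beta$, and you have simply unpacked this by passing the coefficient recursion \eqref{eq-recursion kappas} to variances and summing. One small point of precision: the two summands $\kappa_\ell^{(n,G)}$ and $\widetilde{b}_n\kappa_{\ell-1}^{(n-1,G)}$ are not literally independent (both involve $\widetilde{b}_1,\dots,\widetilde{b}_{n-2}$), but the variance still splits because $\widetilde{b}_n$ is independent of the pair and has mean zero, so the cross term $\mathbb{E}\bigl[\kappa_\ell^{(n,G)}\,\overline{\widetilde{b}_n\kappa_{\ell-1}^{(n-1,G)}}\bigr]$ factors through $\mathbb{E}[\overline{\widetilde{b}_n}]=0$.
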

 \begin{proof}
 This follows from Proposition~\ref{prop-formula coefficients P} with $\Var(-b_j c_j)=\frac{j}{\beta}$ for $j=1,\dots,n-1$.
 \end{proof}

 \begin{Proposition}\label{prop-explicit form Q G}
 The polynomial $Q_n^G(x)$ is given by
 \begin{align*}
 Q_n^G(x)=\sum_{j=0}^{\lfloor\frac{n}{2}\rfloor}\frac{n!}{j!(n-2j)!(2\beta)^j}\
 x^{n-2j}.
 \end{align*}
 \end{Proposition}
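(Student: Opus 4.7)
The plan is to prove the formula by induction on $n$, using the three-term recurrence
\[
Q_{n+1}^G(x)=xQ_n^G(x)+\frac{n}{\beta}Q_{n-1}^G(x)
\]
established in Lemma~\ref{lemma-Q G polynomials with var kappa}. The base cases $n=0$ and $n=1$ are immediate from the claimed formula, which yields $Q_0^G(x)=1$ and $Q_1^G(x)=x$, matching the initial conditions prescribed by the recurrence. This handles the starting data.

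For the inductive step, I will assume the formula holds for $Q_n^G$ and $Q_{n-1}^G$ and compute $xQ_n^G(x)+(n/\beta)Q_{n-1}^G(x)$ by aligning powers of $x$. The first piece already contributes $x^{n+1-2j}$ with coefficient $n!/[j!(n-2j)!(2\beta)^j]$ for $j=0,\ldots,\lfloor n/2\rfloor$. In the second piece, I will reindex $j\mapsto k-1$ and use $\frac{n}{\beta}\cdot\frac{1}{(2\beta)^{k-1}}=\frac{2}{(2\beta)^{k}}\cdot k\cdot\frac{n}{k}$ (keeping track of factorials) so that the contribution at $x^{n+1-2k}$ becomes $2\cdot n!/[(k-1)!(n+1-2k)!(2\beta)^k]$. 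Summing the two contributions at fixed $k\geq 1$ and factoring gives
\[
\frac{n!}{k!(n+1-2k)!(2\beta)^k}\bigl[(n+1-2k)+2k\bigr]=\frac{(n+1)!}{k!(n+1-2k)!(2\beta)^k},
\]
which is exactly the claimed coefficient of $x^{n+1-2k}$ in $Q_{n+1}^G(x)$. The $k=0$ term $x^{n+1}$ comes solely from $xQ_n^G$ and already has the correct coefficient $1$.

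The main (mild) obstacle will be the edge cases at the highest index $k$, where the two summation ranges in $xQ_n^G$ and $(n/\beta)Q_{n-1}^G$ differ depending on the parity of $n$. I will treat $n=2m$ and $n=2m+1$ separately: in the even case both sums cover $k=1,\ldots,m$ and the argument above applies verbatim; in the odd case the top index $k=m+1$ is reached only by $(n/\beta)Q_{n-1}^G$, and one checks directly that $2\cdot n!/[m!(2\beta)^{m+1}]$ equals $(n+1)!/[(m+1)!(2\beta)^{m+1}]$ using $n+1=2(m+1)$. This closes the induction and proves the formula.

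As an aside worth noting (but not needed for the proof), the recurrence identifies $Q_n^G$ with a rescaled Hermite polynomial: setting $He_n$ for the monic (probabilists') Hermite polynomials satisfying $He_{n+1}(y)=yHe_n(y)-nHe_{n-1}(y)$, the substitution $Q_n^G(x)=(i/\sqrt{\beta})^n\,He_n(-i\sqrt{\beta}\,x)$ converts the sign of the lower-order term and reproduces the claimed coefficients upon expanding the explicit formula for $He_n$. This alternative route could be used to shortcut the induction, but the direct coefficient comparison above is self-contained and matches the style of the paper.
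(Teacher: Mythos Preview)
Your induction argument is correct and complete; the coefficient combination and the parity-dependent edge cases are handled properly.

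Your route differs from the paper's. The paper does not run an induction on the $Q_n^G$ recurrence directly. Instead it goes back to the explicit nested-sum expression for $\Var(\kappa_\ell^{(n,G)})$ from Lemma~\ref{lemma-var coeff P G} and evaluates that sum by recognising it as a Hermite-polynomial coefficient: since the probabilists' Hermite polynomials satisfy $\mathrm{He}_{n+1}(x)=x\,\mathrm{He}_n(x)-n\,\mathrm{He}_{n-1}(x)$, Proposition~\ref{prop-formula coefficients P} applied with $\widetilde{b}_j=j$ gives $\kappa_\ell^{(n,\mathrm{He})}=(-1)^\ell\sum_{\gamma_1}\gamma_1\cdots\sum_{\gamma_\ell}\gamma_\ell$, and the known explicit formula for $\mathrm{He}_n$ then yields $\sum_{\gamma_1}\gamma_1\cdots\sum_{\gamma_\ell}\gamma_\ell=\dfrac{n!}{2^\ell\,\ell!(n-2\ell)!}$, hence $\Var(\kappa_\ell^{(n,G)})=\dfrac{n!}{\ell!(n-2\ell)!(2\beta)^\ell}$. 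Your approach is more self-contained (it needs only the $Q$-recurrence and elementary factorial algebra), while the paper's route has the advantage of simultaneously closing the nested sum in Lemma~\ref{lemma-var coeff P G} and making the Hermite identification $Q_n^G(x)=(i\sqrt{\beta})^{-n}\,\mathrm{He}_n(i\sqrt{\beta}\,x)$ explicit, which is exactly what is used downstream in \eqref{eq:RelationtoHermite}. Your final aside already captures this connection, so nothing is lost.
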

 \begin{proof}
We begin by simplifying the expression for the nested summation of
$\Var\bigl(\kappa_k^{(n,G)}\bigr)$ defined in equation~\eqref{eq-var coeff P G}.

Let $B_{n}$ denote an $n\times n$ Jacobi matrix defined as follows:
\[
B_n:=\frac{1}{\sqrt{2}}\begin{pmatrix}
0 & \sqrt{n-1} & & \\
\sqrt{n-1} & 0 & \ddots & \\
 & \ddots & \ddots & \sqrt{1} \\
 & & \sqrt{1} & 0
\end{pmatrix}.
 \]
Utilising the notation of equation~\eqref{eq-3 term recurrence}, the characteristic polynomial of $B_n$ is given by
 \begin{equation}\label{eq:CharacteristicB}
 \mathcal{P}_n^B(x)=\sum_{m=0}^{\lfloor\frac{n}{2}\rfloor}\kappa_m^{(n,B)}x^{n-2m},
 \end{equation}
where
 \begin{equation}\label{eq:closedformB}
 \kappa_m^{(n,B)}=\frac{(-1)^m}{2^m}\sum_{\gamma_1=1}^{n-1}\gamma_1\dots\sum_{\gamma_m=1}^{\gamma_{m-1}-2}\gamma_m.
 \end{equation}
Next, we note that defined in~\cite[equation~(18.7.11)]{NIST} the probabilist's Hermite polynomials $\operatorname{He}_n(x)$ satisfy the three-term recurrence relation
\begin{equation*}
 \operatorname{He}_{n+1}(x)=x\operatorname{He}_n(x)-n\operatorname{He}_{n-1}(x).
\end{equation*}
Identifying the recurrence coefficients $\tilde{b}_n$
in equation~\eqref{eq-3 term recurrence}
for Hermite polynomials and using Proposition~\ref{prop-formula coefficients P}, alongside the explicit form of the Hermite polynomials~\cite[equation~(18.5.13)]{NIST}, we obtain
\begin{align}\label{eq:ChebyshevPoly}
\operatorname{He}_n(x)=\sum_{m=0}^{\lfloor\frac{n}{2}\rfloor}\kappa_{m}^{(n,{\rm He})}x^{n-2m}=n!\sum_{m=0}^{\lfloor\frac{n}{2}\rfloor}\frac{(-1)^m}{m!(n-2m)!2^m} x^{n-2m}
\end{align}
for
\begin{align}\label{eq:ChebyshevCoeff}
 \kappa_{l}^{(n,{\rm He})}=(-1)^l\sum_{\gamma_1=1}^{n-1}\gamma_1\dots\sum_{\gamma_l=1}^{\gamma_{l-1}-2}\gamma_{l}.
\end{align}
Now, we can derive an expression for $\kappa_m^{(n,G)}$ by comparing the closed forms of the characteristic polynomial in~\eqref{eq:CharacteristicB} and the Hermite polynomial. Specifically, by equating the equations~\eqref{eq:CharacteristicB} and~\eqref{eq:closedformB} with~\eqref{eq:ChebyshevPoly} and~\eqref{eq:ChebyshevCoeff}, we find
 \[
 \frac{1}{2^{2m}}\frac{n!}{m!(n-2m)!}=\frac{1}{2^m}\sum_{\gamma_1=1}^{n-1} \gamma_1\sum_{\gamma_2=1}^{\gamma_1-2}\gamma_2\dots\sum_{\gamma_m=1}^{\gamma_{m-1}-2}\gamma_m.
 \]
 Thus, we can express $\Var\bigl(\kappa_l^{(n,G)}\bigr)$ as
 \begin{align*}
 \Var\bigl(\kappa_l^{(n,G)}\bigr)=\frac{1}{\beta^l}\sum_{\gamma_1=1}^{n-1}\gamma_1\sum_{\gamma_2=1}^{\gamma_1-2}\gamma_2\dots\sum_{\gamma_l=1}^{\gamma_{l-1}-2}\gamma_l=\frac{1}{2^{l}\beta^l}\frac{n!}{l!(n-2l)!},
 \end{align*}
 as desired.
 \end{proof}

Combining the results discussed above, we are now ready to compute the limiting spectral density of the eigenvalues of the non-symmetric tridiagonal matrix $\widetilde{T}_\beta$ in the low temperature limit.

\subsubsection{Proof of Theorem~\ref{thm-density beta}, equation~(\ref{eq-limiting density T tilde})}
\begin{proof}
We begin by defining the parameters $\alpha_k^{(n)}$ in Theorem~\ref{thm-Barbarino Noferini} using equation~\eqref{eq:alphanonsym}. By Lemma~\ref{lemma-var coeff P G} we know that these parameters are independent, have zero mean and variance \smash{$\sqrt{\vphantom{\big|}\smash{\Var\bigl(\kappa_k^{(n,G)}\bigr)}}$}. Thus, our polynomial $\mathcal{P}^G(z)$, as defined by equation~\eqref{eq-3 term recurrence}, takes the form outlined in~\eqref{eq-polynomial thm Barbarino Noferini}.

Next, we consider the polynomial $Q_n^G\bigl(|z|^2\bigr)$, which is associated with the matrix $G_n$. To determine the corresponding limit for the matrix \smash{$\widetilde{T}_\beta$}, we must also account for the scaling determined in Theorem~\ref{thm:largebetanonsym}.
Recall that the characteristic polynomial of the matrix \smash{$\widetilde{T}_\beta$} is denoted by \smash{$\mathcal{P}^{\widetilde{T}}_n(z)$} and that of $G_n$, denoted by $\mathcal{P}_n^G(z)$, is given by equation~\eqref{eq: Characrteristic Polynomail G}.
Therefore, we conclude that
\begin{align*}
\mathcal{P}^{\widetilde{T}}_n(z)&=\prod_{i=1}^n(z-\lambda_i(\widetilde{T}_\beta))
	 =\prod_{i=1}^n\left(z-\frac{1}{\sqrt{2n}}\lambda_i(G_n)+o\left(\frac{1}{\sqrt{\beta}}\right)\right)
	 \approx\frac{1}{(2n)^{\frac{n}{2}}}\mathcal{P}^G_n\bigl(\sqrt{2n}z\bigr).
\end{align*}
in distribution as $\beta\rightarrow\infty$. Utilising Proposition~\ref{prop-explicit form Q G}, we establish the following:
\begin{align}
\frac{1}{(2n)^{n}}Q_n^G\bigl(2n|z|^2\bigr)
	 =\sum_{j=0}^{\lfloor\frac{n}{2}\rfloor}\frac{n!}{j!(n-2j)!}\frac{|z|^{2(n-2j)}}{2^{3j}\beta^jn^{2j}}
 =
\frac{1}{\bigl(2{\rm i}n\sqrt{\beta}\bigr)^n}\operatorname{He}_{n}\bigl(2{\rm i}n\sqrt{\beta}|z|^2\bigr).
\label{eq:RelationtoHermite}
\end{align}
However, the difficulty is that the scaling parameter $n$ appears in both the argument and the coefficient of the Hermite polynomial in~\eqref{eq:RelationtoHermite}. Therefore, before taking the limit of the logarithm
we apply the multiplication theorem for Hermite polynomials~\cite[equation~(18.18.13)]{NIST}, applied to~\eqref{eq:RelationtoHermite} to produce
\begin{align*}
 \frac{1}{\bigl(2{\rm i}n\sqrt{\beta}\bigr)^n}\operatorname{He}_{n} \bigl(2{\rm i}n\sqrt{\beta}|z|^2\bigr)&=\sum_{j=0}^{\lfloor\frac{n}{2}\rfloor}\left(1+\frac{1}{4\beta n^2}\right)^j\frac{n!}{j!(n-2j)!2^j}\operatorname{He}_{n-2j}\bigl(|z|^2\bigr)\nonumber\\
 &\approx \sum_{j=0}^{\lfloor\frac{n}{2}\rfloor}\frac{n!}{j!(n-2j)!2^j}\operatorname{He}_{n-2j}\bigl(|z|^2\bigr).
\end{align*}
In the second step, we are neglecting higher order terms in $1/\beta$. For the latter sum, there is the following identity \cite[equation~(18.18.20)]{NIST}:
\begin{equation*}
|z|^{2n}=\sum_{j=0}^{\lfloor\frac{n}{2}\rfloor}\frac{n!}{j!(n-2j)!2^j}\operatorname{He}_{n-2j}\bigl(|z|^2\bigr),
\end{equation*}
and thus we arrive at the following approximation for the limit
\begin{equation*}
 U(z)\approx\lim_{n\rightarrow\infty}\frac{1}{2n}\log\bigl( |z|^{2n}\bigr)=\log|z|.
\end{equation*}
It is well known that the logarithm is the Green's function for the Laplacian in two dimensions, and thus we obtain from Theorem~\ref{thm-Barbarino Noferini}
\begin{align*}
 \widetilde{\rho}^{\widetilde{T}}(z)=\frac{1}{2\pi}\Delta U(z)=\frac 14\delta^{(2)}(z).
\end{align*}
 After appropriately normalising this is equation~\eqref{eq-limiting density T tilde}, in coordinates $z=x+{\rm i}y$.
\end{proof}

\section[Numerical simulations, pseudospectrum, and local nearest-neighbour spacing]{Numerical simulations, pseudospectrum,\\ and local nearest-neighbour spacing}\label{Sec:Num}

\begin{figure}[ht]\centering
\begin{subfigure}{0.46\textwidth}\centering
\includegraphics[width=1\linewidth]{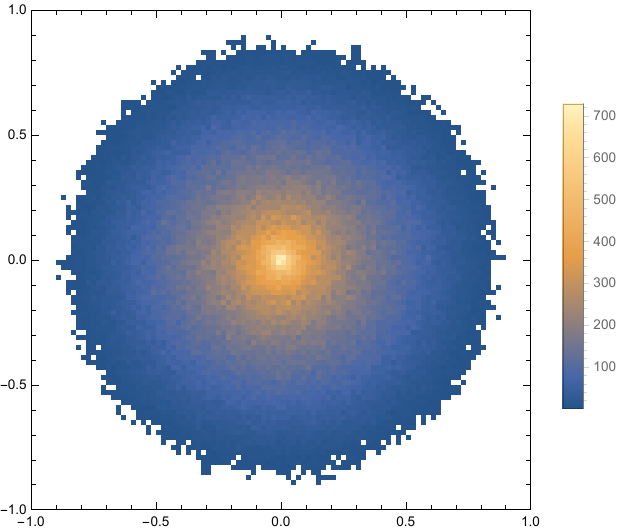}
\caption{$T_\beta$}
\end{subfigure}\quad
\begin{subfigure}{0.46\textwidth}\centering
\includegraphics[width=1\linewidth]{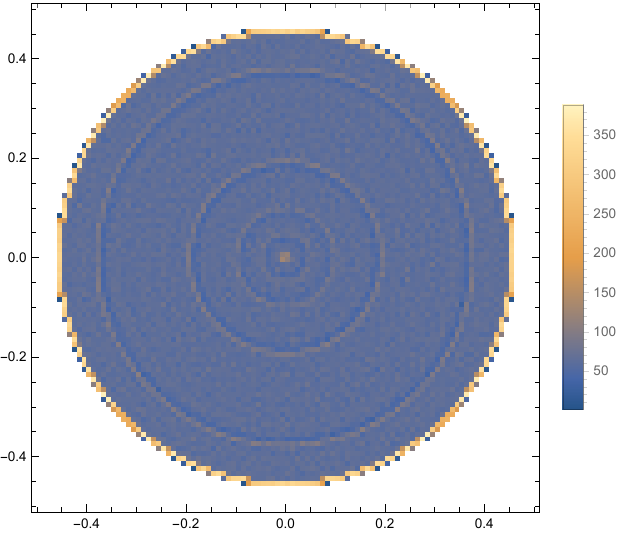}
\caption{$\widetilde{T}_\beta$}
\end{subfigure}
\caption{Spectral density of the complex eigenvalues of \smash{$T_\beta$} (left) and $\widetilde{T}_\beta$ (right), for $100$ matrices of dimension $n=5000$, both with $\beta=6$. The spectral density is normalised by $\sqrt{2n\beta}$.\label{fig-scatter plots tridiag models n5000}}
\end{figure}
In this section, we study the tridiagonal matrix ensembles numerically and compare the equilibrium densities for \smash{$T_\beta$} \eqref{eq-limiting density T,S} and for $\widetilde{T}_\beta$ \eqref{eq-limiting density T tilde} derived in the previous section in the limit of low temperature $\beta \gg 1$ and large matrices $n\rightarrow \infty$ with the numerics.

We begin with two scatter plots of the complex eigenvalues of the general tridiagonal random matrices \smash{$T_\beta$} and $\widetilde{T}_\beta$ in Figure~\ref{fig-scatter plots tridiag models n5000} left respectively right, for the same number of realisations, matrix size and $\beta=6$. In both cases the spectrum is rotationally invariant, as expected. For this reason, below we will only present plots for the radial density of the modulus $|z|$.
However, already for a single value of $\beta$ the ensembles show very different features.
While the spectrum of \smash{$T_\beta$} in Figure~\ref{fig-scatter plots tridiag models n5000} (left) is smooth, with a strong peak at the origin, that of $\widetilde{T}_\beta$ looks almost flat, with pronounced concentric rings, including at the origin and at the edge.

\begin{figure}[!ht]\centering
 \begin{subfigure}{0.46\textwidth}\centering
		\includegraphics[width=\textwidth]{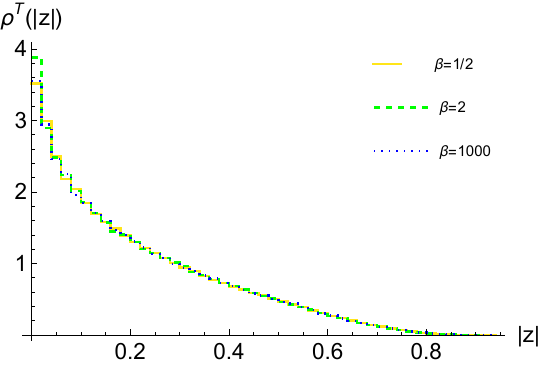}
		\subcaption{\smash{$T_\beta$} \label{fig-radial dist T n=5000}}	
	\end{subfigure}\quad
 \begin{subfigure}{0.46\textwidth}\centering
		\includegraphics[width=\textwidth]{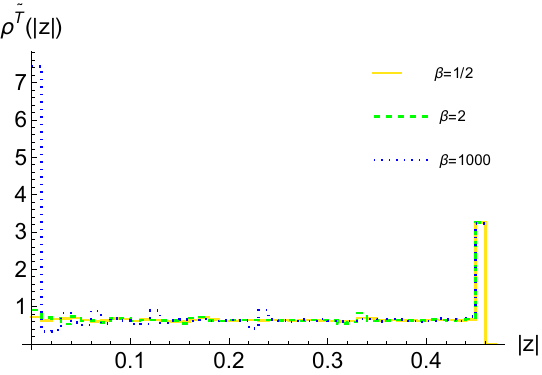}
		\subcaption{$\widetilde{T}_\beta$\label{fig-radial dist T tilde n=5000}}	
	\end{subfigure}
	\caption{Histograms of the radial density of the complex eigenvalues for $100$ matrices of size $n=5000$ of the ensembles \smash{$T_\beta$} (left) and $\widetilde{T}_\beta$ (right), at $\beta=1/2$ (yellow full curve), $\beta=2$ (green dashed curve) and $\beta=1000$ (magenta dotted curve).}\label{fig-radial dist T and T tilde beta 1/2,2,1000 n=5000}
\end{figure}

In Figure~\ref{fig-radial dist T and T tilde beta 1/2,2,1000 n=5000},
the radial density $\rho^L(|z|)$ of the complex eigenvalues $z$ is plotted for $L=T,\widetilde{T}$ for different values of $\beta=1/2,2,1000$ in different colours, for the same size and number of realisations as in Figure~\ref{fig-scatter plots tridiag models n5000}. For the general tridiagonal model \smash{$T_\beta$} (left), we observe that the histograms of the radii $|z|$ look very similar for the plotted $\beta$-values over a range of 3 orders of magnitude. Only near the origin small differences can be seen. We conclude that the spectral density of the general tridiagonal ensembles \smash{$T_\beta$} seems to be universal for a large range of $\beta$ in the limit of large matrix size $n$. This shows that although we are only able to compute the limiting spectral density for matrices \smash{$T_\beta$} in the limit of low temperature $\beta \gg 1$, due to universality the result will be a good approximation also for small $\beta$ down to $\beta=1/2$, when $n$ is large, too.

For the radial density of the ensemble $\widetilde{T}_\beta$, the situation is different.
We observe that all three curves for different values of $\beta=1/2,2,1000$ share a
flat density and a peak at the edge of the spectrum, which also appears to be universal.
However, at large $\beta$~-- which is the limit in which we derived the global density~-- a strong peak at the origin develops, in addition to small peaks within the spectrum.
We conclude for the non-symmetric tridiagonal model the equilibrium density \smash{$\rho^{\widetilde{T}}(z)=\delta(x)\delta(y)$} \eqref{eq-limiting density T tilde}
completely dominates the spectrum in the limit of large $\beta$ and~$n$, and that we cannot resolve the subleading, potentially universal features of the density in this limit.
The tails of the peak at the edge may be different for small to intermediate $\beta$ too, but such a local analysis is currently not feasible.
We will discuss in the next subsection what could be the reason of this singular behaviour, introducing the notation of a pseudo-spectrum, which plays an important role here.

\begin{figure}[!ht]
			\ \begin{minipage}{0.45\textwidth}
\centering					\includegraphics[width=\linewidth,angle=0]{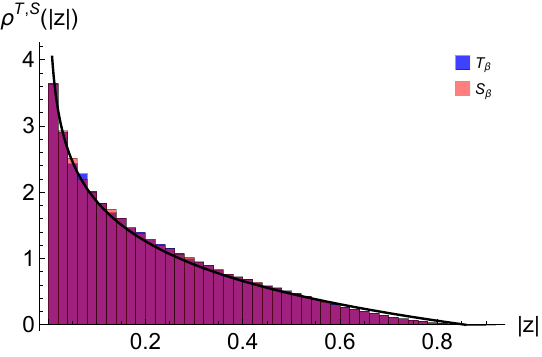}
				\end{minipage}\quad
			\begin{minipage}{0.50\textwidth}\small\centering
		\begin{tabular}{|c|l|c|c|c|c|}
		\hline
		&\diagbox{$n(m)$}{$\beta$} & $\frac{1}{2}$ & 2 & 100 & 1000 \\
		\hline
 \multirow{3}{*}{$T_\beta$}&100 (5000) & 7.60 & 4.57& 10.70& 11.73\\
		\cline{2-6}
		&1000 (1000) & 3.25 & 1.35 & 3.60& 4.02\\
		\cline{2-6}
		 &5000 (100) & 1.38 & 1.41& 1.39& 1.38\\ \hline \hline
		\multirow{3}{*}{$S_\beta$}&100 (5000) & 9.84 & 5.74&10.62 & 11.82\\
		\cline{2-6}
		&1000 (1000) & 4.10 & 2.24&3.59 & 4.15\\
		\cline{2-6}
		&5000 (100) & 2.22 & 1.38& 1.37& 1.34\\
	 \hline
		
	\end{tabular}
	\label{tab-KS dist T,S beta}
		\end{minipage}
		\caption[Mathematica Plot, \emph{Radial density with analytics}]{Comparison between analytics and numerics for \smash{$T_\beta$} and $S_\beta$. The histogram ({left}) shows the distribution of the radii for an ensemble of $100$ matrices of size $n=5000$ for the general ensemble \smash{$T_\beta$} (magenta) and symmetric ensemble $S_\beta$ (light red) for $\beta=100$. The region where both distributions agree is purple. The black curve gives the analytical expression for the density \eqref{eq-limiting density T,S}.
		We also give the Kolmogorov--Smirnov distances ({right}) in units $10^{-2}$ between the numerics and the analytical result~\eqref{eq-limiting density T,S} for both ensembles for various matrix sizes $n$, ensemble sizes $m$ and $\beta$ values.}\label{fig-comparison T,S beta}
	\end{figure}

 In Figure~\ref{fig-comparison T,S beta}, we give a visual (left) and quantitative comparison (right) between the radial distribution of the numerically generated general tridiagonal ensemble \smash{$T_\beta$} and symmetric ensemble $S_\beta$ at $\beta=100$, keeping the same matrix and ensembles size. The left plot shows coloured histograms in magenta and red, respectively, versus the
analytical solution \eqref{eq-limiting density T,S}, which we derived in the low temperature limit for $n\rightarrow \infty$ (magenta curve). We observe an excellent agreement within both matrix ensembles. Moreover, the distributions of the two models show only small departures from each other and from the limiting spacing density. To estimate the systematic error between the analytical result \eqref{eq-limiting density T,S} and the numerics of the two matrix ensembles \smash{$T_\beta$} and~$S_\beta$,
we calculate the Kolmogorov--Smirnov distance $d$ for 4 different values of $\beta$ in units of $10^{-2}$ in the table of Figure~\ref{fig-comparison T,S beta} (right).
The Kolmogorov--Smirnov distance $d$ between two distributions $f$ and $g$ is defined as
\begin{equation*}
	d=\max_{x\geq0}|F(x)-G(x)|\in[0,1],
\end{equation*}
given in terms of the respective cumulative distributions $F$ and $G$ of $f$ and $g$. The Kolmogorov--Smirnov distance $d$ is independent of the binning into histograms.

As a reference, we give the same measure of error for the constant spectral density of the complex Ginibre ensemble GinUE with the same ensemble size $n=5000$ and realisations ${m=100}$, which produces $d=0.17\cdot 10^{-2}$. Note that the range of error for the equilibrium density $\rho^{T,S}(z)$ of both ensembles is always comparable and shows the same systematic when increasing $\beta$, $n$ and $m$. We observe that the best agreement between the derived analytical result and numerics is given for large $n=5000$. However, the Kolmogorov--Smirnov distance is still by one order larger than for the same ensemble of Ginibre matrices.
For the largest $n$, where we see the best agreement in Kolmogorov--Smirnov distance, we can go to even larger $\beta= 5000, 7000, 50000$ and find that $d$ is not reduced further, i.e., we have $d=1.37, 1.39, 1.37$ in units $10^{-2}$, respectively.

\begin{Remark}
	As a technical remark, we would like to point out that for the calculation of the Kolmogorov--Smirnov distances, the data is normalised with respect to a flat measure, which gives a much simpler numerical computation of the cumulative distribution. Therefore, we compare with the limiting density, which is also normalised with respect to the flat measure, i.e.,\looseness=-1
	\begin{align*}
		1=\int_0^{r_0} \mathrm{d}r\, \rho^{S,\text{flat}}(z).
	\end{align*}
Since this only affects the overall scaling of the function, both for analytics and numerics, the calculated Kolmogorov--Smirnov distances still provide a good estimate of the systematic error.
\end{Remark}

\subsection[Contours of epsilon-pseudospectra for non-Hermitian tridiagonal matrices]{Contours of $\boldsymbol{\epsilon}$-pseudospectra for non-Hermitian tridiagonal matrices}

It is well known that the numerically determined eigenvalues of a non-normal matrix do not always provide a complete picture of the matrix properties, particularly its resonances and stability. In many cases, the spectrum alone is insufficient to capture the sensitivity of the matrix to perturbations. This limitation motivates the study of the pseudospectrum, a concept introduced in~\cite{Trefethen-Embree}. The pseudospectrum extends the analysis of the eigenvalue spectrum to incorporate the effects of perturbations and the behaviour of the resolvent.

The pseudospectrum can be formally defined as follows.

\begin{Definition}[\text{\cite[equation~(2.1)]{Trefethen-Embree}}]
	Let $A\in \mathbb{C}^{n\times n}$ and $\epsilon >0$. The \textit{$\epsilon$-pseudospectrum} $\Lambda_\epsilon(A)$ is the set of $z\in \mathbb{C}$ such that
		\begin{align*}
			\bigl\|(z\boldsymbol{1}_n-A)^{-1}\bigr\|>\epsilon^{-1}.
		\end{align*}
\end{Definition}	
The matrix $(z\boldsymbol{1}_n-A)^{-1}$ is known as the resolvent of $A$, and the spectrum $\sigma(A)$ consists of those $z\in\mathbb{{C}}$ where the resolvent does not exist (i.e., the eigenvalues of $A$). Following the convention in~\cite{Trefethen-Embree}, for $z\in\sigma(A)$, we set
\begin{align*}
	\bigl\|(z\boldsymbol{1}_n-A)^{-1}\bigr\|=\infty.
\end{align*}
An alternative definition of the $\epsilon$-pseudospectrum is given in terms of the $2$-norm and the smallest singular value of $(z\boldsymbol{1}_n-A)$, denoted by $s_{\min}(z\boldsymbol{1}_n-A)$. Specifically, when $\|\cdot \|=\|\cdot \|_2$, it holds that
\[
s_{\min}(z\boldsymbol{1}_n-A)=\frac{1}{\|(z\boldsymbol{1}_n-A)^{-1}\|_2}.
\]
\begin{Definition}\label{def:singpseudo}
	For $\|\cdot \|=\|\cdot \|_2$, the $\epsilon$-pseudospectrum $\Lambda_\epsilon(A)$ is the set of all $z\in\mathbb{C}$ such that
	\begin{align*}
		s_{\min}(z\boldsymbol{1}_n-A)=\inf_{\|\mathbf{x}\|_2=1}\|(z\boldsymbol{1}_n-A)\mathbf{x}\|_2< \epsilon.
	\end{align*}
\end{Definition}

For normal matrices (matrices which satisfy $AA^*=A^*A$), the behaviour of the resolvent is relatively straightforward: the $\epsilon$-pseudospectrum consists of the union of balls of radius $\epsilon$ centred at each eigenvalue in the complex plane. This reflects that the eigenvalues of normal matrices are stable under small perturbations, and the matrix properties can be understood purely from the spectrum.

However, for non-normal matrices, the situation is more complicated. Even when $z$ is far from the spectrum, the norm of the resolvent $\bigl\|(z\boldsymbol{1}_n-A)^{-1}\bigr\|$ can be large. This occurs because non-normal matrices can have eigenvalues that are extremely sensitive to small perturbations. As a result, the $\epsilon$-pseudospectrum of a non-normal matrix does not simply form $\epsilon$-balls around the eigenvalues. This behaviour is described in~\cite[Theorem~2.2]{Trefethen-Embree}.

A significant result that connects the spectrum and the pseudospectrum is the Bauer--Fike theorem, which provides bounds on how eigenvalues behave under perturbations.
\begin{Theorem}[Bauer--Fike theorem]\label{thm:Bauer-Fike}
 Let $A\in\mathbb{C}^{N\times N}$ be diagonalisable, such that $A=V\Lambda V^{-1}$. Then, for any $\epsilon>0$, with $\|\cdot \|=\|\cdot \|_2$,
 \begin{equation*}
 \sigma_A+\Delta_\epsilon\subseteq \Lambda_\epsilon(A)\subseteq \sigma_A+\Delta_{\epsilon\kappa(V)},
 \end{equation*}
 where
 \[
 \Delta_\epsilon=\{z\in\mathbb{C}\mid |z|<\epsilon\}\qquad \text{and}\qquad \kappa(V)=\frac{\|V\|}{\|V^{-1}\|}\ \text{is the condition number of}\ V.
 \]
\end{Theorem}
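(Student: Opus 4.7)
The plan is to prove each inclusion separately, using Definition \ref{def:singpseudo} in the equivalent form $\Lambda_\epsilon(A)=\{z\in\mathbb{C}:\|(z\mathbf{1}_n-A)^{-1}\|_2>\epsilon^{-1}\}$ (with the convention that the norm equals $+\infty$ when $z\in\sigma_A$).

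For the lower inclusion $\sigma_A+\Delta_\epsilon\subseteq\Lambda_\epsilon(A)$, I would take $z=\lambda+w$ with $\lambda\in\sigma_A$ and $|w|<\epsilon$. Let $\mathbf{x}\in\mathbb{C}^n$ be an eigenvector of $A$ for $\lambda$ with $\|\mathbf{x}\|_2=1$. Then
\begin{equation*}
(z\mathbf{1}_n-A)\mathbf{x}=(z-\lambda)\mathbf{x}=w\mathbf{x},
\end{equation*}
so $\|(z\mathbf{1}_n-A)\mathbf{x}\|_2=|w|<\epsilon$. By Definition \ref{def:singpseudo}, this gives $s_{\min}(z\mathbf{1}_n-A)<\epsilon$, i.e. $z\in\Lambda_\epsilon(A)$. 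The case $z\in\sigma_A$ itself is immediate from the convention.

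For the upper inclusion $\Lambda_\epsilon(A)\subseteq\sigma_A+\Delta_{\epsilon\kappa(V)}$, I would argue by contrapositive. Suppose $z\notin\sigma_A+\Delta_{\epsilon\kappa(V)}$, so that $|z-\lambda_j|\geq\epsilon\kappa(V)$ for every eigenvalue $\lambda_j$ of $A$. In particular $z\notin\sigma_A$, so the diagonalization $A=V\Lambda V^{-1}$ yields
\begin{equation*}
(z\mathbf{1}_n-A)^{-1}=V(z\mathbf{1}_n-\Lambda)^{-1}V^{-1}.
\end{equation*}
Submultiplicativity of the operator $2$-norm combined with the fact that for a diagonal matrix $\|(z\mathbf{1}_n-\Lambda)^{-1}\|_2=\max_j|z-\lambda_j|^{-1}$ then gives
\begin{equation*}
\|(z\mathbf{1}_n-A)^{-1}\|_2\leq \|V\|_2\,\|V^{-1}\|_2\,\max_j\frac{1}{|z-\lambda_j|}\leq\frac{\|V\|_2\|V^{-1}\|_2}{\epsilon\kappa(V)}=\frac{1}{\epsilon},
\end{equation*}
where in the last step I use the standard convention $\kappa(V)=\|V\|_2\|V^{-1}\|_2$. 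Hence $z\notin\Lambda_\epsilon(A)$, completing the inclusion.

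No serious obstacle is expected: the only technical points are the bookkeeping with the convention $\|(z\mathbf{1}_n-A)^{-1}\|=\infty$ for $z\in\sigma_A$ (which makes the lower inclusion trivial on $\sigma_A$) and the use of the explicit diagonal resolvent norm. I would flag that the formula for $\kappa(V)$ given in the theorem statement should be read as the standard condition number $\|V\|\,\|V^{-1}\|$, since it is this quantity that appears naturally from submultiplicativity in the bound above.
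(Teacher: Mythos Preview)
The paper does not prove this theorem; it is quoted as the classical Bauer--Fike result and used only as background for the discussion of pseudospectra. Your argument is the standard proof and is correct: the lower inclusion via an eigenvector test vector, and the upper inclusion via the diagonal resolvent bound $\|(z\mathbf{1}_n-A)^{-1}\|_2\le \|V\|_2\|V^{-1}\|_2\max_j|z-\lambda_j|^{-1}$. Your remark that $\kappa(V)$ must be read as $\|V\|\,\|V^{-1}\|$ (not the quotient printed in the statement) is exactly right---otherwise the last inequality in your chain would not close, and indeed $\|V\|/\|V^{-1}\|$ is not even scale-invariant.
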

This result shows that for diagonalisable matrices, the pseudospectrum $\Lambda_\epsilon(A)$ is contained within a neighbourhood controlled by the condition number $\kappa(V)$ of the eigenvector matrix~$V$. In the case of non-normal matrices, where $\kappa(V)$ can be large (see Section~\ref{Sec:Measures of non-normality}), the pseudospectrum provides an enlarged region around the spectrum.

\begin{figure}[!ht]\centering
\begin{subfigure}{0.46\textwidth}\centering
		\includegraphics[width=\textwidth]{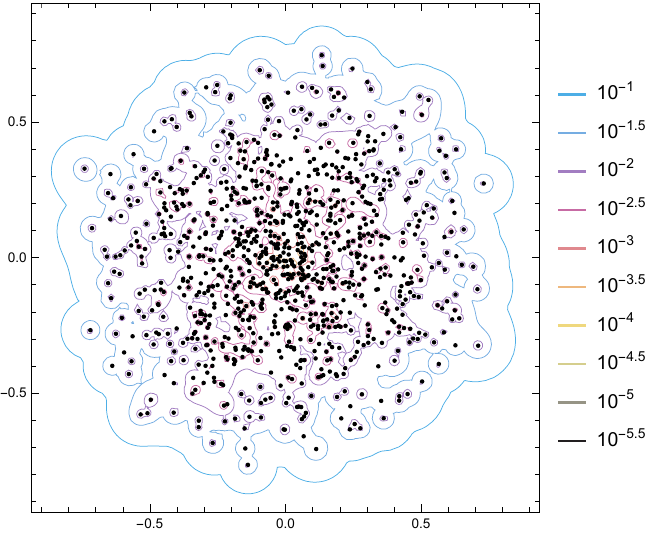}
		\subcaption{$T_\beta$ }	
	\end{subfigure}\quad
\begin{subfigure}{0.46\textwidth}\centering
		\includegraphics[width=\textwidth]{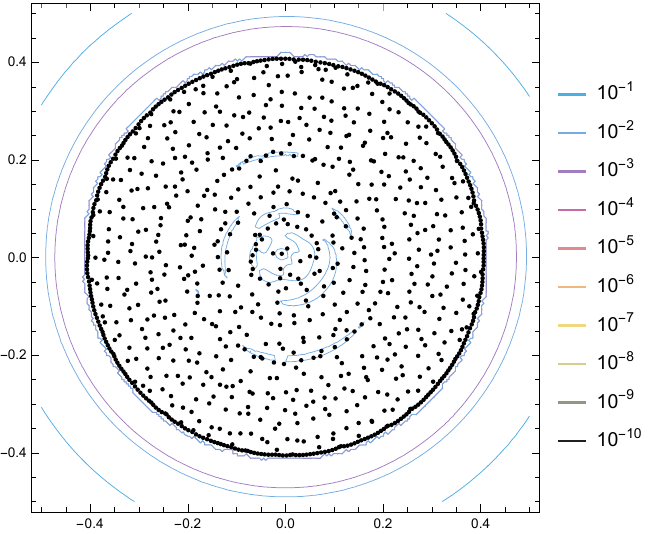}
		\subcaption{$\widetilde{T}_\beta$}
	\end{subfigure}
	\caption{The plots show the $\epsilon$-pseudospectra of $T_{\beta}$ (left) and $\widetilde{T}_{\beta}$ (right), for $n=1000$, compared to its eigenvalues (dots) for $\beta=2$. The contour scale represents $10^{-\epsilon}$.}
 \label{fig:pseudospectrum}
\end{figure}
\begin{Proposition}
	For arbitrary $\epsilon>0$, the $\epsilon$-pseudospectrum is given as
		\begin{gather}\label{eq:pseudo}
			\Lambda_\epsilon(L_\beta)=\left\{z\in \mathbb{C} \, \bigg| \left|z -\frac{a_1}{\sqrt{2n\beta}}\right|^2+\frac{|c_1|^2}{2n\beta}<\epsilon^2\right\}\quad \text{with} \ \begin{cases}
			 |c_1|\sim \chi_{\beta/2} & \text{for } L_\beta=T_\beta,\\
 |c_1|\sim \frac{1}{\sqrt{2}}\chi_{\beta} & \text{for } L_\beta=S_\beta,\\
 c_1=1& \text{for } L_\beta=\widetilde{T}_\beta
			\end{cases}
		\end{gather} and $\operatorname{Re}(a_1),\operatorname{Im}(a_1)\sim\mathcal{N}(0,1)$.

\end{Proposition}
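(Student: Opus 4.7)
The natural approach is to invoke Definition~\ref{def:singpseudo} directly and exploit the fact that the last column of $zI - L_\beta$ is ``sparse'': it has nonzero entries only in rows $n-1$ and $n$, with values $-c_1/\sqrt{2n\beta}$ and $z - a_1/\sqrt{2n\beta}$ respectively, regardless of which of the three ensembles we work with. All the complications of a full tridiagonal resolvent are thus sidestepped if one can argue that the standard basis vector $e_n$ is the (approximate) minimiser of $\|(zI - L_\beta)x\|_2$ over the unit sphere.

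First I would compute, straight from the matrix form \eqref{eq-gen_trid}, that
\begin{equation*}
\|(zI - L_\beta)e_n\|_2^2 = \Big|z - \frac{a_1}{\sqrt{2n\beta}}\Big|^2 + \frac{|c_1|^2}{2n\beta}.
\end{equation*}
This immediately bounds $s_{\min}(zI - L_\beta)$ from above via Definition~\ref{def:singpseudo}, so any $z$ lying inside the disc on the right-hand side of \eqref{eq:pseudo} satisfies $s_{\min}(zI - L_\beta) < \epsilon$ and hence belongs to $\Lambda_\epsilon(L_\beta)$; this gives the inclusion ``$\supseteq$''. The three cases enter only through the scalar distribution of $|c_1|$ (chi with $\beta/2$ degrees of freedom for $T_\beta$, a rescaled chi for $S_\beta$ since there $b_1=c_1$, and the constant $1$ for $\widetilde{T}_\beta$), which does not affect the structure of the argument.

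The reverse inclusion is the main obstacle. One has to rule out that some other unit vector $x$ produces a strictly smaller value of $\|(zI - L_\beta)x\|_2$. My plan is to show that for every interior basis vector $e_j$ with $j<n$, the expression $\|(zI - L_\beta)e_j\|_2^2$ picks up two chi-distributed contributions $(|b_{n-j}|^2 + |c_{n-j}|^2)/(2n\beta)$ from the rescaled subdiagonal and superdiagonal, which for $j \geq 2$ are of order $(n-j+1)/(2n)$ and are therefore strictly larger than the single $|c_1|^2/(2n\beta)$ contribution of $e_n$ (which scales like $1/n$ uniformly). Promoting this basis-vector comparison to a genuine lower bound on $\inf_{\|x\|=1}\|(zI-L_\beta)x\|_2$ is the delicate point; here one should expand $x = \sum_j \alpha_j e_j$, use the tridiagonal structure to split $\|(zI - L_\beta)x\|_2^2$ into essentially independent quadratic forms row by row, and minimise over the $\alpha_j$, showing that the optimum concentrates on the $(n-1,n)$ block.

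The remaining work is then bookkeeping: substitute the explicit chi and normal distributions from \eqref{mat_el_T} for the random variables $a_1$ and $c_1$ in each of the three ensembles, and verify that the similarity transformation \eqref{eq-T_tilde_def} reduces $T_\beta$ to $\widetilde{T}_\beta$ with $c_1\to 1$ precisely as the statement predicts. I expect the hardest technical step to be the rigorous lower bound on $s_{\min}$, for which a Gershgorin-type argument restricted to the last two rows, or equivalently inverse iteration starting from $e_n$, looks to be the cleanest route.
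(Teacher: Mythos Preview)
Your proposal follows essentially the same approach as the paper: compute $\|(zI-L_\beta)e_n\|_2$ directly to obtain the claimed expression, then argue that $e_n$ achieves the infimum in Definition~\ref{def:singpseudo} using the monotone growth of the off-diagonal chi variables. The paper carries out the lower-bound step via a brief Cauchy--Schwarz argument keeping only the last-row contribution, whereas you propose a basis-vector comparison followed by a quadratic-form minimisation; these are minor variants of the same heuristic, and in both cases the rigorous justification that $e_n$ is the exact minimiser (rather than merely an approximate one) is left somewhat informal.
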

\begin{proof}
First, we will use Definition~\ref{def:singpseudo} to compute the $\epsilon$-pseudospectrum of the general tridiagonal matrix $T_{\beta}$. Thus, to analyse the 2-norm of the vector $(zI-T_\beta)\mathbf{e}_n$ consider the equation
	\begin{align*}
		||(z\boldsymbol{1}_n-T_\beta)\mathbf{e_n}||_2&=\left\lVert \begin{pmatrix}
			0,&
			\dots,&
			0,&
			\dfrac{c_1}{\sqrt{2n\beta}},&
			z-\dfrac{a_1}{\sqrt{2n\beta}}
		\end{pmatrix}^{\rm t} \right\rVert_2.
	\end{align*}
	This yields
	\begin{equation*} ||(z\boldsymbol{1}_n-T_\beta)\mathbf{e_n}||_2=\sqrt{\frac{|c_1|^2}{2n\beta}+|z|^2-\frac{2|za_1^*|}{\sqrt{2n\beta}}\cos(\arg(za_1^*))+\frac{|a_1|^2}{2n\beta}}.
	\end{equation*}
	Next, we must establish that the vector $\mathbf{e}_n$ minimises this expression over all unit vectors \mbox{$\mathbf{x}\in\mathbb{C}^n$}. To do so we must use the Cauchy-Schwartz inequality since $\Var(|c_j|)\geq\Var(|c_1|)$ for $j=2,\dots,n-1$ when $|c_w|\sim \chi_{\beta w/2}$ for general tridiagonal matrices \smash{$T_\beta$} and $|c_w|\sim \frac{1}{\sqrt{2}}\chi_{\beta w}$ for symmetric tridiagonal matrices $S_\beta$.
	Note that the entries on the sub- and super-diagonals are of the same order for the general tridiagonal ensemble.
	
	Assuming we have a vector $\mathbf{x}$ with $\|\mathbf{x}\|_2=1$ one obtains
\begin{align*}
		||(z\boldsymbol{1}_n-T_\beta)\mathbf{x}||_2
		&\geq \sqrt{\left|\left(z-\frac{a_1}{\sqrt{2n\beta}}\right)x_n\right|^2+2\operatorname{Re} \left(\frac{c_1}{\sqrt{2n\beta}}x_{n-1}\left(\frac{a_1^*}{\sqrt{2n\beta}}-z^*\right)x_n^*\right)}.
	\end{align*}
	We can choose a representative, such that the variable $\operatorname{Re} \bigl(\frac{c_1}{\sqrt{2n\beta}}x_{n-1}\bigl(\frac{a_1^*}{\sqrt{2n\beta}}-z^*\bigr)x_n^*\bigr)$ is positive, without loss of generality. Therefore,
	\begin{align*}
		||(z\boldsymbol{1}_n-T_\beta)\mathbf{x}||_2&
		\geq \sqrt{\left|\left(z-\frac{a_1}{\sqrt{2n\beta}}\right)x_n\right|^2},
	\end{align*}
which implies that $\mathbf{e}_n$ is the infimum. For the non-symmetric ensemble, we have $c_j=1$ for $j=1,\dots,n-1$, which is independent of $\beta$.
\end{proof}

As $n\rightarrow\infty$ or $\beta\rightarrow\infty$, the $\epsilon$-pseudospectrum converges to $\epsilon$-balls centred at the origin for all three matrix ensembles. However, for small $n$ and $\beta$ the randomness displayed in $|c_1|$ altering the radius of the contours results in a significantly different situation for the general and symmetric tridiagonal ensemble, cf.\ equation~\eqref{eq:pseudo}.

\subsection{Measures of non-normality}\label{Sec:Measures of non-normality}
As discussed in the previous section, the $\epsilon$-pseudospectrum becomes more important as the degree of non-normality grows. However, the latter is too complex to be determined by one parameter completely. There are several possible measures of non-normality, which are discussed in \cite[Section~48]{Trefethen-Embree}. To show that the degree of non-normality is much larger for the non-symmetric tridiagonal matrices than for general/symmetric tridiagonal matrices, we consider the condition number of the diagonalising matrix as a measure. This quantity can be easily derived numerically.

As previously, we assume that the matrix $L_\beta=T_\beta, S_\beta, \widetilde{T}_\beta$, can be diagonalised by matrix~$R_L$~as
\begin{align*}
	L_\beta=R_L\Lambda_L R_L^{-1},
\end{align*}
where $\Lambda_L$ is a diagonal matrix containing the complex eigenvalues of $L_\beta$. The \textit{condition number} of the diagonalising matrix $R_L$ (which contains the eigenvectors) is also defined as
\begin{align*}
	\kappa(R_L)=\frac{s_{\max}(R_L)}{s_{\min}(R_L)},
\end{align*}
where $s_{\max}(R_L)$ and $s_{\min}(R_L)$ are the maximal and minimal singular value of $R_L$. Recall, that the diagonalising matrix $R_L$ is not determined uniquely and one has to fix a representative. The condition number is bounded by $1<\kappa(R_L)<\infty$, where the condition number $\kappa(R_L)=1$ is possible if and only if the matrix $L_\beta$ is normal. Therefore, we will give the condition number of the diagonalising matrix as a measure of non-normality for matrix ensembles of $m$ matrices of size $n=100,500,1000$ and values $\beta=1/2,2,100,1000$ each in the Table~\ref{tab-condition number}. We observe that the condition numbers are of order at least $10^{12}$ times larger for the non-symmetric ensemble compared with the general and the symmetric one. This confirms the observations we have made thus far, as the spectrum of the general tridiagonal matrix ensemble is quite stable for large $n$ and different $\beta$ and we calculated the equilibrium density which describes the numerics for a large range of ensemble sizes. For a better comparison, we also give the condition numbers for the same ensemble sizes of complex Ginibre matrices (normalised such that the droplet is the unit disk) in the last three rows of Table~\ref{tab-condition number}.
\begin{table}[!ht]\centering\small
	\caption[Condition number of the matrix of eigenvectors]{Condition numbers. We give the condition numbers of the matrices of eigenvectors for matrix ensembles of $m$ matrices of size $n=100,500,1000$ and $\beta=1/2,2,100,1000$ each for the general tridiagonal matrix \smash{$T_\beta$}, the symmetric matrix $S_\beta$, the non-symmetric matrix $\widetilde{T}_\beta$ and the Ginibre ensemble corresponding to $\beta=2$.}\label{tab-condition number}
	\begin{tabular}{|c|c|c|c|c|c|}
		\hline
		&\diagbox{$n(m)$}{$\beta$} & 1/2 & 2 & 100 & 1000\\
		\hline
 \hline
 \multirow{3}{*}{$T_\beta$}&
		100 (1000) &38.83 & 42.09 & 41.20 & 41.67\\
		\cline{2-6}
		&500 (500) &105.81 & 116.81 & 124.03 &114.04\\
		\cline{2-6}
		&1000 (100) & 156.14& 177.47 & 178.90 &181.97\\
		\hline
		\hline

 \multirow{3}{*}{$S_\beta$}&
		100 (1000) & 26.47 & 36.33 & 42.40 & 41.70\\
		\cline{2-6}
		&500 (500) & 77.95 & 101.37 & 114.83 &111.19\\
		\cline{2-6}
		&1000 (100) & 118.67&154.42 & 188.21 &177.21\\
		\hline
		\hline
 \multirow{3}{*}{$\widetilde{T}_\beta$}&100 (1000) & 2.57$\cdot10^{13}$ & 3.12$\cdot 10^{13}$ & 3.10$\cdot 10^{13}$ & 2.99$\cdot 10^{13}$\\
		\cline{2-6}
		&500 (500) & 3.90$\cdot 10^{13}$ & 3.87$\cdot 10^{13}$ &3.87$\cdot 10^{13}$ & 3.79$\cdot 10^{13}$\\
		\cline{2-6}
		&1000 (100) & 4.06$\cdot 10^{13}$ & 4.06$\cdot 10^{13}$ & 4.12$\cdot 10^{13}$ & 4.08$\cdot 10^{13}$\\
		\hline
		\hline

		\multirow{3}{*}{Ginibre}&
		100 (1000)& & 98.13 & & \\
		\cline{2-6}
		&500 (500)& & 405.45 & &\\
		\cline{2-6}
		&1000 (100)& & 704.47 & &\\
		
		\hline
	\end{tabular}
	
\end{table}

It is known that Ginibre matrices are not normal but their scatter plots can be described very well by the spectrum. We observe that the values of the condition numbers are of the same order as for the general and the symmetric tridiagonal matrix ensemble. Moreover, Chalker and Mehlig \cite{Chalker-Mehlig} derived that the eigenvalue condition numbers scale as $\mathcal{O}\bigl(\sqrt{N}\bigr)$ as $N\rightarrow \infty$, which indicates a mild degree of non-normality (cf.\ discussion in \cite[Section~35]{Trefethen-Embree}). Edelman proved in~\cite{Edelman} that the slope of the linearly growing condition number of a dense random matrix in the dimension is given by $C_E=2{\rm e}^{\gamma /2}$, with Euler's constant $\gamma$. However, we want to emphasise that this condition number, studied by Edelman and others, refers directly to the random matrix~$X$, rather than to the matrix of eigenvectors. We refer the reader to the literature~\cite{Edelman} for a formal definition. We checked that the calculation of this measure of non-normality yields qualitatively the same picture as we have seen in Table~\ref{tab-condition number}, i.e., the values of this quantity are of the same order for the general and the general tridiagonal ensemble while the results for the non-symmetric model are of orders of magnitudes larger.

There exist also other measures of non-normality such as the departure of normality (introduced by Henrici~\cite{Henrici}, see also \cite[Section~48]{Trefethen-Embree}). We observed that the computation of this quantity is in the case of our matrix ensembles extremely sensitive to the machine precision and we therefore cannot observe large differences between the tridiagonal matrix ensembles.\looseness=1

\subsection{Local nearest-neighbour spacing}

\begin{figure}[!ht]\centering
		\hspace{50pt}{$\beta=1/2$} \hspace{100pt}{$\beta=2$}\hspace{100pt}{$\beta=1000$} \hfill\\
		\begin{turn}{90}\hspace{40pt}{$T_\beta$}\end{turn}	
		\includegraphics[width=0.3\textwidth,angle=0]{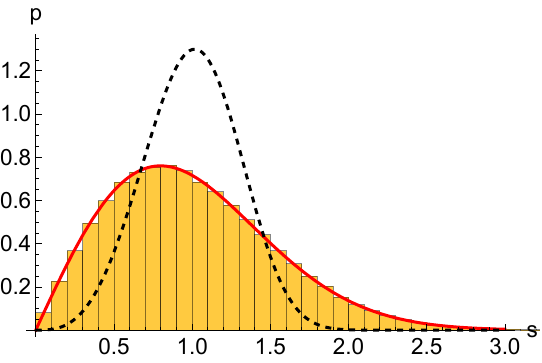}
		\includegraphics[width=0.3\textwidth,angle=0]{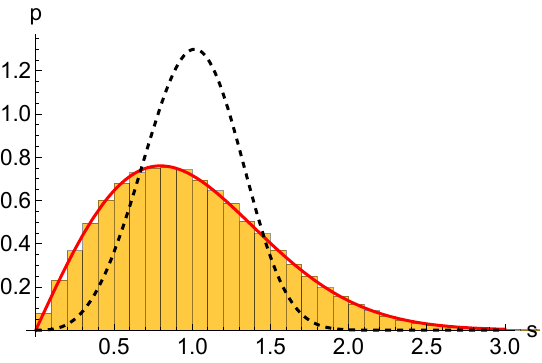}
		\includegraphics[width=0.3\textwidth,angle=0]{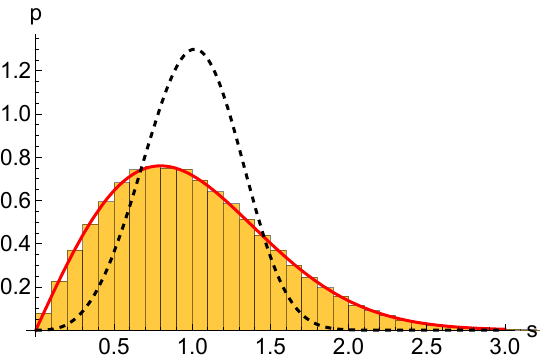}\\
		\begin{turn}{90}\hspace{40pt}{$S_\beta$}\end{turn}
		\includegraphics[width=0.3\textwidth,angle=0]{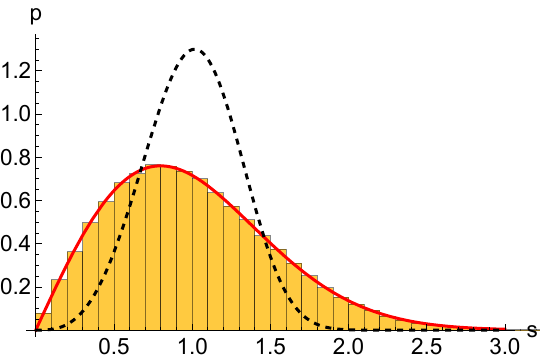}
		\includegraphics[width=0.3\textwidth,angle=0]{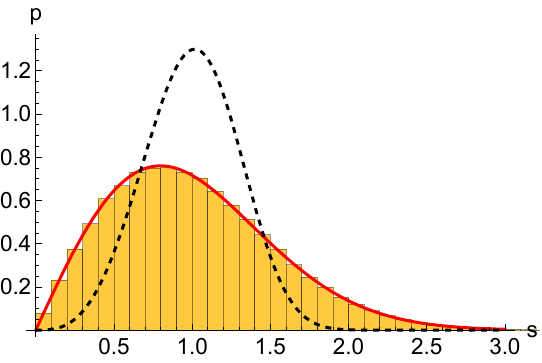}
		\includegraphics[width=0.3\textwidth,angle=0]{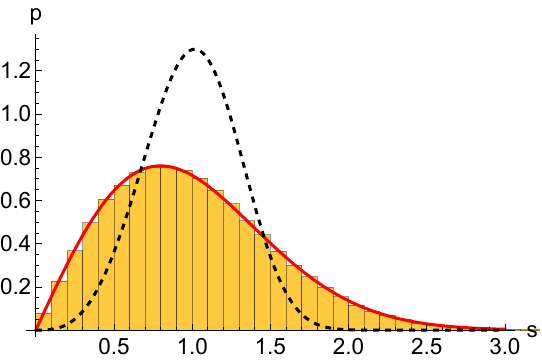}\\
		\begin{turn}{90}\hspace{40pt}{$\widetilde{T}_\beta$}\end{turn}
		\includegraphics[width=0.3\textwidth,angle=0]{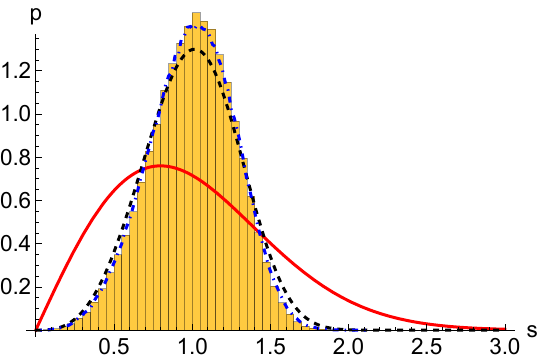}
		\includegraphics[width=0.3\textwidth,angle=0]{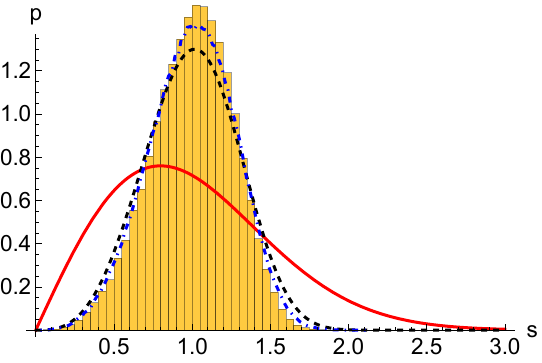}
		\includegraphics[width=0.3\textwidth,angle=0]{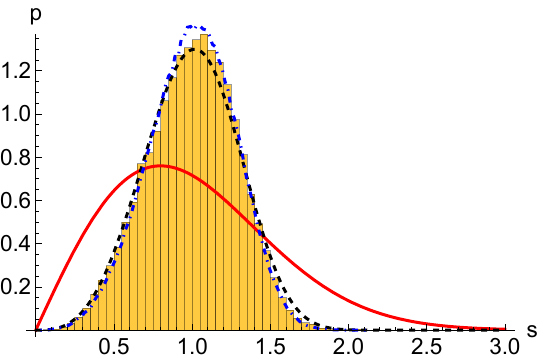}
\caption{We present histograms of the NN statistics for the ensembles \smash{$T_\beta$}, $S_\beta$, and $\widetilde{T}_\beta$ from top to bottom row. For all ensembles we consider $\beta=1/2,2,1000$ in the plots from left to right. Additionally, we give the analytical known results for 2D Poisson (${\beta=0}$)~\eqref{Poi-NN} (red full) and for complex Ginibre (${\beta=2}$)~\eqref{Gin-NN} (black dashed). We observe an agreement of the NN statistics of the ensembles~\smash{$T_\beta$} and~$S_\beta$ with 2D Poisson independently of $\beta$. In the last row, we also plot the numerical distribution of the symmetry class AII$^\dag$ (dot-dashed blue) which agrees well with $\beta=1/2$. The ensemble $\widetilde{T}_\beta$ shows a $\beta$-dependence, and it overshoots the maximum of the complex Ginibre ensemble for $\beta=1/2,2$, whereas it is in better agreement for $\beta=1000$.} \label{fig-nn}
	\end{figure}

We study the local statistics of the discussed tridiagonal matrix ensembles. For a set of $n$ complex eigenvalues $\lambda_1,\dots, \lambda_n$, we define the NN of the eigenvalue $\lambda_i$ as
	$s_i=\min_{i\neq j}|\lambda_i-\lambda_j|$ and study the distribution $p(s)$. It is convenient to normalise $p(s)$ and its first moment to unity.
We consider 500 matrices of size $N=5000$ for each of the three discussed tridiagonal matrix ensembles. For the general and symmetric matrix ensemble, we study the bulk of the spectrum, i.e., radii in the interval $ [0.1,0.6]\subset\bigl[0,\sqrt{2/{\rm e}}\bigr]$ with $\sqrt{2/{\rm e}}\approx0.858$ to stay away from the edge (cf.\ Figure~\ref{fig-radial dist T and T tilde beta 1/2,2,1000 n=5000}\,(a)) and unfold the spectrum with the square root of the analytically known spectral density \eqref{eq-limiting density T,S} (discussed in detail in~\cite{AKMP}). In both cases we find that the local NN spacing distribution (after rescaling the first moment to unity) looks for various $\beta=1/2,2,1000$ similar to the statistics of a 2D Poisson process of independent particles ($\beta=0$) with \eqref{Poi-NN}~\cite{Haake}. This is clearly not the NN spacing of a 2D Coulomb gas at the inverse temperature $\beta$ and we neither expected this nor understand this behaviour at the moment. For the non-symmetric model, we cut off the large peaks at the origin and the edge of the spectrum, i.e., consider radii in $[0.02,0.44]$ (cf.\ Figure~\ref{fig-radial dist T and T tilde beta 1/2,2,1000 n=2}\,(b)) and find a different behaviour compared to the other ensembles. The maximum of the distribution is shifted to the right with respect to Poisson, and we observe $\beta$-dependence of the distributions. For comparison we plot the analytically known NN distribution~\eqref{Gin-NN}~\cite{GHS} of the complex Ginibre ensemble ($\beta=2$) as a black dashed line, see Figure~\ref{fig-nn},
\begin{align}
	\label{Poi-NN}
	&p^{\textsc{nn}}_{\textrm{Pois}}(s) = 2 \Gamma\left(1+\frac{1}{2}\right)^2 s {\rm e}^{-\Gamma (1+\frac{1}{2} )^2 s^2}\\
	\label{Gin-NN}
	&p_{\textsc{nn}}^{\textrm{Ginibre}}(s) =
	E_{\textsc{nn}}(s)
	\sum_{j=1}^{\infty} \frac{2s^{2j+1}{\rm e}^{-s^2}}{\Gamma(1 + j, s^2)} \qquad \text{with }E_{\textsc{nn}}(s)= 	\prod_{j=1}^{\infty}\frac{\Gamma\bigl(1 + j, s^2\bigr)}{\Gamma(j+1)},
\end{align}
where the first moment of $p_{\textsc{nn}}^{\textrm{Ginibre}}(s)$ still needs to be normalized via the numerically computed factor 1.1429 cf.\ \cite{GHS}. The first moment of $p^{\textsc{nn}}_{\textrm{Pois}}(s)$ is one. In the last row of Figure~\ref{fig-nn}, we give also the local NN distribution in the bulk of class AII$^\dag$ as representative of the supposedly three universal local bulk statistics in non-Hermitian random matrices (see for further discussion~\cite{Hamazaki-et-al}). We numerically generate an ensemble of 500 matrices of size $2N=5000$, where each eigenvalue is doubly degenerate. We illustrate the NN distributions for the simplest representatives of each separate universality class in Figure~\ref{fig:NN symmetry classes}. Here, class AI$^\dag$ (red dashed) consists of complex symmetric matrices, class A (black full) is the complex Ginibre ensemble (cf. equation~\eqref{Gin-NN}) and class AII$^\dag$ (blue dot-dashed) contains complex self-dual matrices. It was found in \cite{AMP} that the local NN statistics of the classes AI$^\dag$ and AII$^\dag$ can be effectively described via the statistics of 2D Coulomb gases at $\beta=1.4$ and $\beta=2.6$, respectively. For class A, the exact correspondence with a 2D Coulomb gas at $\beta=2$ is well established. Additionally, we give also the 2D Poisson statistics (magenta dotted) from equation~\eqref{Poi-NN} in Figure~\ref{fig:NN symmetry classes}.
\begin{figure}[t]\centering
\includegraphics[width=0.45\linewidth]{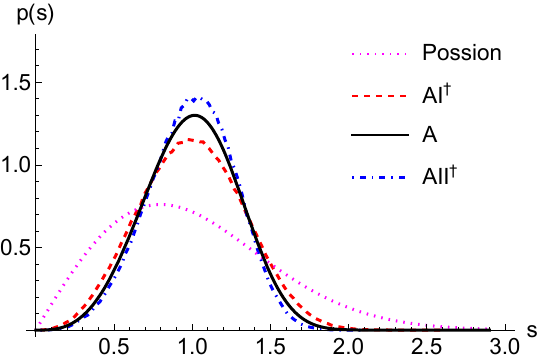}
\caption{We compare the NN distributions of the numerically generated statistics of 500 matrices of size $N=5000$, respectively $2N=5000$ of classes AI$^\dag$ (red dashed), respectively AII$^\dag$ (blue dot-dashed) with the analytically known results of class A (black full) and a 2D Poisson process (magenta dotted). We normalised the distribution with respect to the first moment of the NN statistics each.}\label{fig:NN symmetry classes}
\end{figure}

In Figure~\ref{fig-nn}, we find a surprisingly good agreement with the bottom left plot, i.e., $\beta=1/2$, which we currently do not understand. Moreover, we do not understand the shown convergence towards Ginibre for larger $\beta$ at the moment. Hence, the coupling of the eigenvectors plays an important role and needs further study in the future.

\appendix

 \section[Results for n=2]{Results for $\boldsymbol{n=2}$}\label{app-n=2}
 \begin{figure}[!ht]\centering
\begin{subfigure}{0.46\textwidth}\centering
		\includegraphics[width=\textwidth]{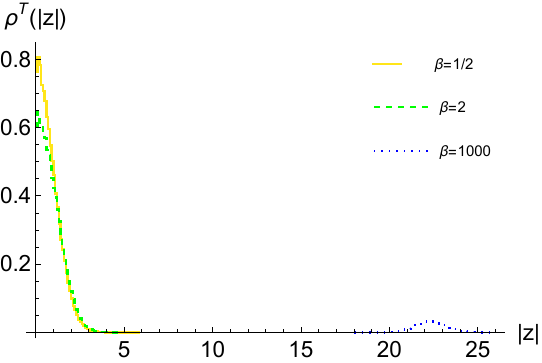}
		\subcaption{$T_\beta$\label{fig-radial dist T n=2}}	
	\end{subfigure}\quad
	\begin{subfigure}{0.46\textwidth}\centering
		\includegraphics[width=\textwidth]{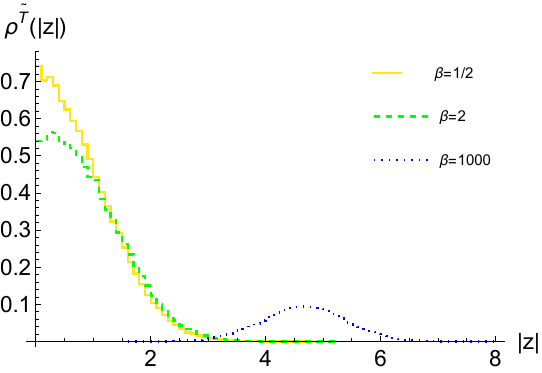}
		\subcaption{$\widetilde{T}_\beta$ \label{fig-radial dist T tilde n=2}}	
	\end{subfigure}
	\caption{Histograms of the radial density of complex eigenvalues for $50000$ matrices of size $n=2$ of the ensemble \smash{$T_\beta$} (left) and $\widetilde{T}_\beta$ (right) at $\beta=1/2$ (yellow full), $\beta=2$ (green dashed) and $\beta=1000$ (magenta dotted).} \label{fig-radial dist T and T tilde beta 1/2,2,1000 n=2}
\end{figure}
 It is well known, that the small $n$ statistics of non-Hermitian random matrices is not always a good approximation for the large $n$-limits, as pointed out in~\cite{GHS} for Ginibre matrices. This is also what we observe here. For the general ensemble \smash{$T_\beta$}, it is clear from Figure~\ref{fig-radial dist T and T tilde beta 1/2,2,1000 n=2}\,(a) that we do not have $\beta$-universality here, i.e., we see a distinct behaviour of the radial distribution for various~$\beta$ at $n=2$. For the non-symmetric tridiagonal ensemble $\widetilde{T}_\beta$, we see in Figure~\ref{fig-radial dist T and T tilde beta 1/2,2,1000 n=2}\,(b) that for small $\beta=1/2,2$ we have a relatively large support of the eigenvalues where a large amount concentrate near the origin. For $\beta=1000$ we have one large peak which is not placed at the origin and the eigenvalues are supported in a small neighbourhood of this radius. As $n=2$ is the simplest case of these non-Hermitian ensembles where we can study the jpdf and average spectral density, we will show the results and comment on the differences compared to the previously derived large $n$-results.
\begin{Lemma}\label{lemma-n=2 jpdf}
	In the low temperature limit $\beta \gg 1$, the jpdf of the eigenvalues of the $2\times 2$ matrix $L=T,S,\widetilde{T}$, is given by
	\begin{align}
			P_\beta^T(\boldsymbol{\lambda})&\approx C_\beta ^T |\lambda_1-\lambda_2|^\beta {\rm e}^{\frac{|\lambda_1-\lambda_2|^2}{4}-\frac{|\lambda_1|^2}{2}-\frac{|\lambda_2|^2}{2}}K_0 \left(\frac{|\lambda_1-\lambda_2|^2}{4}\right) \qquad \text{with }
			C_\beta^T=\frac{2^{4-\beta}\pi^2}{\beta Z_\beta^T},\!\!\label{eq-n=2 jpdf T}\\
		P_\beta^S(\boldsymbol{\lambda})&\approx C_\beta^S |\lambda_1-\lambda_2|^\beta {\rm e}^{-\frac{|\lambda_1|^2}{2}-\frac{|\lambda_2|^2}{2}} \qquad \text{with }C_\beta^S=\frac{2^{1-\beta}\pi}{\beta Z_\beta^S},\label{eq-n=2 jpdf S}\\
		P_\beta^{\widetilde{T}}(\boldsymbol{\lambda})&\approx C_\beta^{\widetilde{T}}|\lambda_1-\lambda_2|^\beta {\rm e}^{\frac{|\lambda_1-\lambda_2|^2}{4}-\frac{1}{32}|\lambda_1-\lambda_2|^4-\frac{|\lambda_1|^2}{2}-\frac{|\lambda_2|^2}{2}} \qquad \text{with }C_\beta^{\widetilde{T}}=\frac{2^{1-\beta}\pi}{\beta Z_\beta^{\widetilde{T}}},\label{eq-n=2 jpdf tilde T}
	\end{align}
where $K_0(x)$ denotes the modified Bessel function.
\end{Lemma}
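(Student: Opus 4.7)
The plan is to specialise Theorem~\ref{jpdf_theorem} and its analogues for $S$ (using \eqref{factorS}) and $\widetilde T$ to $n=2$ and carry out the residual integrations. At $n=2$ the constraint $r_1+r_2=1$ collapses the first row of $R$ to a single complex parameter $r\equiv r_1$, and the general ensemble $T$ retains one further complex parameter $R_2$. The recursion of Theorem~\ref{thm:roughbijection} specialised to $n=2$ gives $a_2=r\lambda_1+(1-r)\lambda_2$, $a_1=(1-r)\lambda_1+r\lambda_2$, $b_1=R_2(\lambda_1-\lambda_2)$ and $c_1=r(1-r)(\lambda_1-\lambda_2)/R_2$; the constraints $b_1=c_1$ for $S$ and $c_1=1$ for $\widetilde T$ specialise these to $b_1^2=r(1-r)(\lambda_1-\lambda_2)^2$ and $\widetilde b_1=r(1-r)(\lambda_1-\lambda_2)^2$ respectively. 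A direct computation gives the key identity $|a_1|^2+|a_2|^2-|\lambda_1|^2-|\lambda_2|^2=-2(\re(r)-|r|^2)|\lambda_1-\lambda_2|^2$, so $g_L$ becomes an explicit rational function of $r$ and, in the general case, $R_2$.

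For $T$ I would perform the $R_2$-integration exactly before any asymptotic analysis. Substituting $u=|R_2|^2$ and carrying out the angular integral, the $R_2$-part of the integrand becomes $\pi\int_0^\infty u^{-1}e^{-au-b/u}\,du=2\pi K_0(2\sqrt{ab})$ with $a=\tfrac12|\lambda_1-\lambda_2|^2$ and $b=\tfrac12|r(1-r)|^2|\lambda_1-\lambda_2|^2$, producing the factor $2\pi K_0(|\lambda_1-\lambda_2|^2|r(1-r)|)$. Using $\re(r)-|r|^2=\tfrac14-|r-\tfrac12|^2$ the remaining $r$-integral takes the form
\[
f_T(\Lambda)=2\pi e^{|\lambda_1-\lambda_2|^2/4}\int d^2r\,|r|^{\beta/2-2}|1-r|^{\beta/2-2}e^{-|\lambda_1-\lambda_2|^2|r-1/2|^2}K_0\bigl(|\lambda_1-\lambda_2|^2|r(1-r)|\bigr).
\]
In the regime $\beta\gg1$ the dominant contribution comes from a neighbourhood of $r=1/2$, where $|r(1-r)|=\tfrac14$; replacing $|r(1-r)|\to\tfrac14$ in the $K_0$-argument and evaluating the remaining $r$-integration by a Laplace-type argument yields $K_0(|\lambda_1-\lambda_2|^2/4)$ times a $\beta$-dependent constant absorbed into $C_\beta^T$, and the exterior $e^{|\lambda_1-\lambda_2|^2/4}$ prefactor reproduces \eqref{eq-n=2 jpdf T}.

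The symmetric ensemble is easier: using $|b_1|^2=|c_1|^2=|r||1-r||\lambda_1-\lambda_2|^2$ and the above identity gives $g_S|_{r=1/2}=0$, so the $r$-integration in \eqref{factorS} collapses in the large-$\beta$ limit to a pure $\beta$-dependent constant, yielding \eqref{eq-n=2 jpdf S}. For $\widetilde T$ there is no $R_2$ variable and hence no $K_0$; instead $|\widetilde b_1|^2=|r(1-r)|^2|\lambda_1-\lambda_2|^4$ evaluated at $r=1/2$ gives $\tfrac{1}{16}|\lambda_1-\lambda_2|^4$, which combined with the same $\tfrac14|\lambda_1-\lambda_2|^2$ piece surviving from $g_{\widetilde T}|_{r=1/2}$ produces the quartic exponent of \eqref{eq-n=2 jpdf tilde T}.

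The principal obstacle is the rigorous treatment of the remaining $r$-integration in the low-temperature limit. Since $\log|r(1-r)|$ is harmonic on $\mathbb C\setminus\{0,1\}$, the factor $|r(1-r)|^{\beta/2-2}$ has no interior maximum and in fact grows along the imaginary direction from $r=1/2$; a naive Laplace expansion therefore fails. One must verify that the combined effect of the $K_0$-factor, which decays exponentially in its large argument and so suppresses growth in the imaginary direction, together with the explicit Gaussian $e^{-|\lambda_1-\lambda_2|^2|r-1/2|^2}$, compensates this growth for large $\beta$ so that the dominant contribution is genuinely concentrated at $r=1/2$. Matching the precise $\beta$-dependent prefactors in $C_\beta^L$ then requires careful bookkeeping of the Gaussian width in $r$, the value $(\tfrac14)^{\beta/2-2}$ at the saddle, and the partition functions $Z_\beta^L$.
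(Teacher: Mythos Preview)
Your proposal follows essentially the same route as the paper: specialise the jpdf theorem to $n=2$, express the matrix elements through the recursion of Theorem~\ref{thm:roughbijection}, and handle the residual $r$-integration by a large-$\beta$ saddle-point argument localised at $r=1/2$. Your formulas for $a_1,a_2,b_1,c_1$, the identity $|a_1|^2+|a_2|^2-|\lambda_1|^2-|\lambda_2|^2=-2(\re r-|r|^2)|\lambda_1-\lambda_2|^2$, and the evaluations $g_S|_{r=1/2}=0$ and $g_{\widetilde T}|_{r=1/2}=-\tfrac14|\lambda_1-\lambda_2|^2+\tfrac{1}{32}|\lambda_1-\lambda_2|^4$ all match the paper.

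The only structural difference is in the order of operations for the general ensemble $T$: you integrate $R_2$ exactly first, obtaining $K_0(|\lambda_1-\lambda_2|^2|r(1-r)|)$, and then localise in $r$; the paper does the opposite, applying the steepest-descent approximation in $r$ first (noting explicitly that $(1/2,0)$ is a genuine saddle of $p(x,y)=\tfrac14\log[(x^2+y^2)((1-x)^2+y^2)]$, with negative Hessian determinant) and only afterwards performing the exact $R_2$-integration via the representation $\int_0^\infty s^{-1}e^{-as-b/s}\,ds=2K_0(2\sqrt{ab})$. The two orders are equivalent at leading order. Your order has the modest advantage that the $K_0$-factor with $r$-dependent argument visibly contributes the exponential damping (for large argument $K_0(x)\sim e^{-x}$) that controls the growth of $|r(1-r)|^{\beta/2}$ along the imaginary axis through $r=1/2$; the paper's order keeps the argument cleaner by fixing $r=1/2$ before the $R_2$-integral ever appears.

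The obstacle you flag --- that $r=1/2$ is a saddle rather than a maximum of $|r(1-r)|$, so a naive two-dimensional Laplace argument is not directly applicable --- is real, and the paper does not resolve it either: it records that the Hessian determinant is negative and that $p$ diverges logarithmically at infinity, and then simply applies the steepest-descent formula \eqref{eq-method of steepest descent}. So your identification of this as the principal technical gap is accurate, and your proposal is at the same level of rigour as the paper's own argument. The bookkeeping you mention for $C_\beta^L$ (the value $(1/4)^{\beta/2-2}=2^{4-\beta}$ at the saddle, the factor $2\pi/\beta$ from the Laplace width, and for $T$ an extra $2\pi$ from the angular $R_2$-integration) reproduces the constants exactly.
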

\begin{Remark}
	We observe in the symmetric ensemble with $n=2$ that the function $g_S(\boldsymbol{\lambda},\mathbf{r})$ contributes only as a normalisation factor. This can be understood because a $2\times2$ matrix which is complex symmetric can be made normal by fixing only one phase. As the Frobenius norm is invariant under this change one can assume that a $2\times 2$ complex symmetric matrix is also normal, i.e., it can be diagonalised via a unitary transformation.

	Let $S=\left(\begin{smallmatrix}
			a_1 & c_1\\
			c_1 & a_2
		\end{smallmatrix}\right)$
	be a complex symmetric matrix with $a_1, a_2, c_1\in \mathbb{C}$. The normality condition $SS^\dagger =S^\dagger S$ yields one equation
	\begin{align*}
		a_1c_1^*+c_1a_2^*=a_1^*c_1+c_1^*a_2.
	\end{align*}
	We choose polar coordinates $a_1=r_1{\rm e}^{{\rm i}\Theta_1}$, $a_2=r_2{\rm e}^{{\rm i}\Theta_2}$ and $c_1=s{\rm e}^{{\rm i}\varphi}$ with $r_1,r_2,s\in [0,\infty)$, $\Theta_1,\Theta_2, \varphi\in [0,2\pi)$. After some basic algebraic computations, one finds the condition
	\begin{align*}
		{\rm e}^{2{\rm i}\varphi}=\frac{r_2{\rm e}^{{\rm i}\Theta_2}-r_1{\rm e}^{{\rm i}\Theta_2}}{r_2{\rm e}^{-{\rm i}\Theta_2}-r_1{\rm e}^{-{\rm i}\Theta_1}},
	\end{align*}
	i.e., the normality condition for a $2\times 2$ complex symmetric matrix can be expressed by fixing one phase.
\end{Remark}
\begin{proof}
	We apply the method of steepest descent for large $\beta \gg 1$ to calculate the function $f_L(\Lambda)$ for $L=T,S,\widetilde{T}$. As the $\beta$-dependent part in the expression of $f_L(\Lambda)$ is the same for all three ensembles under consideration, the saddle point analysis can be performed before specifying the ensemble. We denote the $\beta$-dependent part of $f_L(\Lambda)$ as
	\begin{align*}
		f_L^\beta(\Lambda,\mathbf{R}_1)=\int_{\mathbb{C}}\exp(-g_L(\Lambda,\mathbf{r},\mathbf{R}_1)) |r_1|^{\frac{\beta}{2}-2}|1-r_1|^{\frac{\beta}{2}-2}\mathrm{d}^2r_1,
	\end{align*}
where the $\mathbf{R}_1$-dependence drops out for the ensembles $S$ and \smash{$\widetilde{T}$}. Note that the function $g_L(\Lambda,\mathbf{r},\mathbf{R}_1)$ explicitly depends on the choice of the ensemble via its definition in equation~\eqref{gTdef} and the expressions of the matrix elements in terms of the spectral parameters. We use the following formula for a two-dimensional integration via steepest descent:
\begin{align*}
	\int h(\mathbf{x}){\rm e}^{Mp(\mathbf{x})}\mathrm{d}\mathbf{x}\approx \frac{2\pi}{M}\frac{h(\mathbf{x_0}){\rm e}^{Mp(\mathbf{x_0})}}{\det(-H_p(\mathbf{x_0}))^{1/2}} \qquad \text{as} \ M\rightarrow \infty,
\end{align*}
where $H_p(\mathbf{x_0})$ is the Hessian matrix of $p$ evaluated at $\mathbf{x_0}$, which is a global maximum/ minimum of $p(\mathbf{x})$. We denote the real and imaginary parts of $r_1$ as $(x,y)$ and rewrite $f_L^\beta(\Lambda,\mathbf{R}_1)$ as follows:
\begin{align*}
	f_L(\Lambda,\mathbf{R}_1)=\int_{-\infty}^\infty \mathrm{d}x\int_{-\infty}^\infty \mathrm{d}y \frac{\exp(-g_L(\Lambda,\mathbf{r},\mathbf{R}_1))}{\bigl(x^2+y^2\bigr)\bigl((1-x)^2+y^2\bigr)} {\rm e}^{\frac{\beta}{4}\log[(x^2+y^2)((1-x)^2+y^2)]}.
\end{align*}
Hence, we are looking for the maximum/minimum of the function
\begin{align*}
	p(x,y)=\frac{1}{4}\log\bigl[\bigl(x^2+y^2\bigr)\bigl((1-x)^2+y^2\bigr)\bigr].
\end{align*}
We find that only $(x,y)=(1/2,0)$ yields $\frac{\partial p}{\partial x}=\frac{\partial p}{\partial y}=0$. The determinant of the Hessian at~$\mathbf{x_0}$ is negative. Hence, we have a saddle point. Moreover, we observe that $p(x,y)\rightarrow -\infty$ for $(x,y)=(0,0)$ and $(x,y)=(1,0)$. Hence, the contribution of $p(x,y)$ vanishes at these points. For $x,y\rightarrow \infty$, we have a logarithmic divergence of $p(x,y)$. It follows that
\begin{align*}
	f_L(\Lambda,\mathbf{R}_1)\approx \frac{2^{4-\beta}\pi}{\beta}\exp(-g_L(\Lambda,1/2,\mathbf{R}_1)),
\end{align*}
for $\beta \gg1$. To specify this function further, we have now to specify the function $g_L(\Lambda,\mathbf{R}_1)$ for each matrix ensemble $L=T,S,\widetilde{T}$, where the $\mathbf{R}_1$ dependence only appears for the general ensemble $T$, which makes another integration over the complex variable $R_2$ necessary.

\textit{Case $L=T$:} Using the recurrence relations which were constructed in the proof of Theorem~\ref{thm:roughbijection}, we find the following expressions of the matrix entries in terms of the spectral coordinates:
\begin{alignat}{3}
	&a_2=r_1(\lambda_1-\lambda_2)+\lambda_2,\qquad&&
	c_1=\frac{r_1(1-r_1)(\lambda_1-\lambda_2)}{R_2},&\nonumber\\
	&a_1=-r_1(\lambda_1-\lambda_2)+\lambda_1,\qquad&&
	b_1=R_2(\lambda_1-\lambda_2),&\label{eq-matrix elements T n=2}
\end{alignat}
which simplify as follows for $r_1=1/2$ to
\begin{align*}
	a_1=a_2=\frac{\lambda_1+\lambda_2}{2}, \qquad c_1=\frac{\lambda_1-\lambda_2}{4R_2}, \qquad b_1=R_2(\lambda_1-\lambda_2).
\end{align*}
With these expressions and the definition of the function $g_T(\Lambda,\mathbf{r},\mathbf{R}_1)$ in equation~\eqref{gTdef}, we find at $r_1=1/2$
\begin{align*}	g_T(\Lambda,1/2,R_2)=-\frac{|\lambda_1-\lambda_2|^2}{4}+\frac{|\lambda_1-\lambda_2|^2}{2} \left(|R_2|^2+\frac{1}{16|R_2|^2}\right).
\end{align*}
We can now calculate the function $f_T(\Lambda)$ (cf.\ equation~\eqref{factor}) as follows:
\begin{align*}
	f_T(\Lambda)&=\int_{\mathbb{C}\backslash \{0\}}\frac{f_T^\beta(\Lambda,\mathbf{R}_1)}{|R_2|^2}\mathrm{d}^2R_2\nonumber\\
	&\approx \frac{2^{5-\beta}\pi^2}{\beta}{\rm e}^{\frac{|\lambda_1-\lambda_2|^2}{4}}\int_{0}^\infty \mathrm{d} |R_2|\, |R_2|^{-1}{\rm e}^{-\frac{|\lambda_1-\lambda_2|^2}{2}(|R_2|^2+\frac{1}{16}|R_2|^{-2})}\nonumber \\
	&= \frac{2^{5-\beta}\pi^2}{\beta}{\rm e}^{\frac{|\lambda_1-\lambda_2|^2}{4}}K_0\left(\frac{|\lambda_1-\lambda_2|^2}{4}\right),
\end{align*}
where we used the integral representation \cite[equation~(8.432)\,(7)]{gradshteyn} of the modified Bessel function $K_0(x)$. The joint probability density in equation~\eqref{eq-n=2 jpdf T} follows now from equation~\eqref{eq:EigDens1}.

\textit{Case $L=S$:} Using similar recurrence relations as presented in the proof of Theorem~\ref{thm:roughbijection} adapted to the symmetric ensemble described in equation~\eqref{factorS} and below, we find that $g_S(\Lambda,\mathbf{r}_1)=1$ at $r_1=1/2$. Hence, the factor $f_S(\Lambda,\mathbf{r}_1)$ gives only a constant contribution yielding the result in equation~\eqref{eq-n=2 jpdf S}.

\textit{Case $L=\widetilde{T}$:} Via the recurrence relations detailed in the proof of \cite[Theorem~2.2]{mezzadri-taylor}, we find at $r_1=1/2$
\begin{align*}
	g_{\widetilde{T}}(\Lambda,1/2)=-\frac{|\lambda_1-\lambda_2|^2}{4}+\frac{|\lambda_1-\lambda_2|^4}{32},
\end{align*}
which yields the joint probability density in equation~\eqref{eq-n=2 jpdf tilde T}.
\end{proof}

Using the results of Lemma~\ref{lemma-n=2 jpdf}, we compute the average spectral density, which is defined~as
\begin{align*}
	\rho^L(\lambda)=\int_\mathbb{C}\mathrm{d}\lambda_2\, P^L(\lambda,\lambda_2).
\end{align*}
\begin{Proposition}
	In the low temperature limit $\beta \gg 1$, the spectral density of the $2\times 2$ matrix $L=T,S,\widetilde{T}$, is given by
	\begin{align}
&		\rho^T(\lambda)=\frac{1}{\pi{}_2F_1\bigl(1+\frac{\beta}{2},1+\frac{\beta}{2}; \frac{3}{2}+\frac{\beta}{2};\frac{1}{2}\bigr)}{\rm e}^{-|\lambda|^2}\,{}_2F_2\left(1+\frac{\beta}{2},1+\frac{\beta}{2}; 1,\frac{3}{2}+\frac{\beta}{2};\frac{|\lambda|^2}{2}\right),\label{eq-n=2 density T}\\
&		\rho^S(\lambda)=\frac{1}{\pi 2^{\frac{\beta}{2}+1}}\,{}_1F_1\left(\frac{\beta}{2}+1,1,\frac{|\lambda|^2}{2}\right){\rm e}^{-|\lambda|^2},\label{eq-n=2 density S}\\
&\rho^{\widetilde{T}}(\lambda)=\frac{{\rm e}^{-|\lambda|^2+\frac{1}{4}}\sum_{k=0}^\infty\frac{\Gamma(\frac{\beta}{2}+k+1)}{\Gamma(k+1)^2}|\lambda|^{2k}D_{-\frac{\beta}{2}-k-1}(1)}{\pi \Gamma\bigl(\frac{\beta}{2}+1\bigr)D_{-\frac{\beta}{2}-1}(0)} \nonumber 
	\end{align}
for $\lambda\in \mathbb{C}$. $D_\nu(x)$ denotes the parabolic cylinder function.
\end{Proposition}
\begin{Remark}
We observe that spectral densities of the general ensemble $T$ and the symmetric one $S$ agree as $\beta/2+1\approx \beta/2+3/2$ in the low temperature limit $\beta \gg 1$, which reduces the hypergeometric function ${}_2F_2$ in equation~\eqref{eq-n=2 density T} to the same ${}_1F_1$ as in equation~\eqref{eq-n=2 density S}.
\end{Remark}
\begin{proof}
To perform the complex integral we choose the new coordinate $w=\lambda_2-\lambda$ and take polar coordinates for $\lambda$ and $w$, i.e., $\lambda=r_1{\rm e}^{{\rm i}\theta_1}$ and $w=r_2{\rm e}^{{\rm i}\theta_2}$ with $r_1,r_2\in [0,\infty)$ and $\theta_1,\theta_2 \in [0,2\pi)$. It follows that the angular-depending part of the integral is of the same form for all three ensembles $L=T,S,\widetilde{T}$, i.e., we have to perform the following integral:
	\begin{align*}
		\int_0^{2\pi}\mathrm{d}\Theta_2\, {\rm e}^{-r_1r_2\cos(\Theta_1-\Theta_2)}.
	\end{align*}
Therefore, we apply the addition theorem for the cosine and use \cite[equation~(3.338)\,(4)]{gradshteyn}. The result is given in terms of the modified Bessel function of the first kind $I_0(x)$.

\textit{Case $L=T$:} We are left with the following radial integral:
\begin{align*}
	\rho^T(\lambda)=2\pi C_\beta^T \int_0^\infty \mathrm{d}r_2 \, r_2^{\beta+1}{\rm e}^{-r_1^2-\frac{r_2^2}{4}}I_0(r_1r_2)K_0\left(\frac{r_2^2}{4}\right).
\end{align*}
	We use the summation formula \cite[equation~(8.445)]{gradshteyn}
	\begin{align}\label{eq-sum I_0}
		I_0(x)=\sum_{k=0}^\infty \frac{x^{2k}}{4^k\Gamma(k+1)^2}.
	\end{align}
	Next, we swap the summation and the integration. Therefore, we apply Tonelli's theorem (we use the counting measure over $\mathbb{N}$ to identify the summation as an integration), as the function is positive for all $r_2\in [0,\infty)$ and $k\in \mathbb{N}$,
	\begin{align*}
		\Rightarrow \rho^T(\lambda)&=2\pi C_\beta^T {\rm e}^{-r_1^2}\sum_{k=0}^\infty \frac{r_1^{2k}}{4^k\Gamma(k+1)^2}\int_0^\infty \mathrm{d}r_2\, r_2^{\beta+2k+1}{\rm e}^{-\frac{r_2^2}{4}}K_0\left(\frac{r_2^2}{4}\right).
	\end{align*}
	We change variables to $t=r_2^2/4$ and find
	\begin{align*}
		\Rightarrow \rho_\beta^T(\lambda)&=2\pi C_\beta^T {\rm e}^{-r_1^2}\sum_{k=0}^\infty \frac{r_1^{2k}2^{\beta+2k+1}}{4^k\Gamma(k+1)^2}\int_0^\infty \mathrm{d}t
\, t^{\frac{\beta}{2}+k}{\rm e}^{-t}K_0(t)\\
		&=2\pi C_\beta^T {\rm e}^{-r_1^2}\sum_{k=0}^\infty \frac{r_1^{2k}2^{\beta+2k-2k+1}}{\Gamma(k+1)^2}\sqrt{\pi}2^{-\frac{\beta}{2}-k-1} \frac{\Gamma\bigl(\frac{\beta}{2}+k+1\bigr)^2}{\Gamma\bigl(\frac{\beta}{2}+k+\frac{3}{2}\bigr)}\\
		&=2^{\frac{\beta}{2}+1}\pi^{\frac{3}{2}} \frac{\Gamma\bigl(\frac{\beta}{2}+1\bigr)^2}{\Gamma\bigl(\frac{\beta}{2}+\frac{3}{2}\bigr)} C_\beta^T {\rm e}^{-r_1^2}\,{}_2F_2\left(1+\frac{\beta}{2},1+\frac{\beta}{2}; 1,\frac{3}{2}+\frac{\beta}{2};\frac{r_1^2}{2}\right),
	\end{align*}
	where we used \cite[equation~(6.621)\,(3)]{gradshteyn} and ${}_2F_1(a,b;c;0)=1$ in the first step.	We are now left with the calculation of the normalisation such that
	\begin{align*}
		1=\int_\mathbb{C} \mathrm{d}\lambda\, \rho^T (\lambda).
	\end{align*}
	Using a change of variables $t=r_1^2$ and \cite[equation~(7.522)\,(5)]{gradshteyn}, we find
	\begin{align*}
		\int_0^\infty \mathrm{d}r_1\, r_1 {\rm e}^{-r_1^2}\,{}_2F_2\left(1+\frac{\beta}{2},1+\frac{\beta}{2}; 1,\frac{3}{2}+\frac{\beta}{2};\frac{r_1^2}{2}\right)
		=\frac{1}{2}\,{}_2F_1\left(1+\frac{\beta}{2},1+\frac{\beta}{2}; \frac{3}{2}+\frac{\beta}{2};\frac{1}{2}\right).
	\end{align*}
	This proves equation~\eqref{eq-n=2 density T}.

\textit{Case $L=S$:} These calculations are detailed in \cite{AMP}.

\textit{Case $L=\widetilde{T}$:} We start from the radial integration of the form
\begin{align}\label{eq-int form n=2 non sym}
	\rho_\beta^{\widetilde{T}}(\lambda)&=2\pi K_\beta^{\widetilde{T}} \int_0^\infty \mathrm{d}r_2\, r_2^{\beta+1}{\rm e}^{-r_1^2-\frac{r_2^2}{4}-\frac{r_2^4}{32}}I_0(r_1r_2).
\end{align}
We again apply the summation formula equation~\eqref{eq-sum I_0}, change to the variable $w=r_2^2$ and apply \cite[equation~(3.462)\,(1)]{gradshteyn}. We find
\begin{align*}
	\rho_\beta^{\widetilde{T}}(\lambda)	=\pi K_\beta^{\widetilde{T}} {\rm e}^{-r_1^2}\sum_{k=0}^\infty \frac{r_1^{2k}}{4^k\Gamma(k+1)^2}2^{\beta +2k+2}\Gamma\left(\frac{\beta}{2}+k+1\right){\rm e}^{\frac{1}{4}}D_{-\frac{\beta}{2}-k-1}(1).
\end{align*}
In order to calculate the normalisation, we start from the integral representation in \eqref{eq-int form n=2 non sym} and apply integration over $\lambda\in \mathbb{C}$. Using \cite[equation~6.614\,(3)]{gradshteyn} and \cite[equations~(9.220)\,(2) and (9.215)]{gradshteyn} to evaluate the resulting Whittaker function, we find
\begin{align*}
	\int_\mathbb{C}\mathrm{d} \lambda\, \rho^{\widetilde{T}}(\lambda) =\pi^2 K_\beta^{\widetilde{T}}\int_0^\infty \mathrm{d}r_2\, r_2^{\beta+1}{\rm e}^{-\frac{r_2^4}{32}}
	 =2^{2+2\beta}\pi^2K_\beta^{\widetilde{T}}\Gamma\left(1+\frac{\beta}{2}\right)D_{-(1+\frac{\beta}{2})}(0),
\end{align*}
where we applied \cite[equation~(3.462)\,(1)]{gradshteyn} and $D_\mu(x)$ denotes the parabolic cylinder function.\looseness=1
\end{proof}

\section{Spectral decomposition of symmetric tridiagonal matrices}\label{app-symm}
Let $\mathcal{S}(n)$ denote the set of matrices $S$, where
\begin{equation*}
 S= \begin{pmatrix}
a_n & \widetilde{c}_{n-1} & & & \\
\widetilde{c}_{n-1} & a_{n-1} & \widetilde{c}_{n-2} & & \\
 & \ddots & \ddots & \ddots&\\
 & & \widetilde{c}_2 & a_2& \widetilde{c}_1\\
 & & & \widetilde{c}_1 & a_1
\end{pmatrix},
\end{equation*}
with $a_j,\widetilde{c}_j\in\mathbb{C}$, such that
$\det(S)\neq 0$ and the spectrum of $S$ is non-degenerate.
The sets $D(n)$ and $\Lambda(n)$ are defined in the same way as in equations~\eqref{def:D(n)} and~\eqref{def:Lambda(n)}. We adapt the definition of $R(n)$, defined in equation~\eqref{eq:R(n)def}, to be
\begin{align*}
 R^{S}(n)=\bigl\{R\in \mathrm{GL}(n,\C) \mid S=R\Lambda R^{-1},\, S\in \Sw(n),\, \Lambda\in\Lambda(n)\bigr\}.
\end{align*}
When $S\in \Sw(n)$, the spectral decomposition of $S$ is given by
\begin{equation}\label{eq:specdecompsym}
 S=R\Lambda R^{-1},
\end{equation}
where $\Lambda\in\Lambda(n)$ and $R\in R^{S}(n)$. The spectral decomposition in equation~\eqref{eq:specdecompsym} is unique up to a permutation of the eigenvalues and the right multiplication $R\mapsto RD$, where $D\in D(n)$.

Consequently, the spectral decomposition in equation~\eqref{eq:specdecompsym} defines a bijection
\begin{equation}\label{sing_valued_map_sym}
 \mathcal{F}^{S}\colon \ \Sw(n)\rightarrow\mathcal{L}(n)\times \mathcal{R}^{S}(n),
\end{equation}
where $\mathcal{L}(n)$ is defined by equation~\eqref{def:curlyL(n)}, and
$\mathcal{R}^{S}(n)=R^{S}(n)/D(n)$.

\begin{Lemma}\label{parameterisation_sym}
 Let $\mathbf{r}=(r_1,\dots,r_n)$ and $\mathbf{v}=(v_1,\dots,v_n)$ be the first and last rows of $R\in R^{S}(n)$, respectively. We have $r_j\neq0$, $v_j\neq0$ for all $j\in\{1,2,\dots,n\}$. Moreover, the subset of $R^{S}(n)$ such that $ r_1+r_2+\dots+r_n=1$, spans a set of representatives in $\mathcal{R}^{S}(n)$.
\end{Lemma}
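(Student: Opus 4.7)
The plan is to mirror the proof of Lemma \ref{r_diff_from_zero}, which itself mirrors Lemma~2.1 of~\cite{mezzadri taylor}, but now in the symmetric complex tridiagonal setting. The overall structure has two parts: (i) non-vanishing of the entries $r_j$ and $v_j$; (ii) the claim that the affine normalisation $r_1+\dotsb+r_n=1$ cuts out a surjective family of representatives for the $D(n)$-action.

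For part (i), I would fix an eigenvalue $\lambda_k$ and its right eigenvector $\mathbf{R}_k$ (the $k$-th column of $R$), and write out the eigenvalue equation $S\mathbf{R}_k=\lambda_k\mathbf{R}_k$ row-by-row. This gives the three-term recurrence
\begin{equation*}
\widetilde{c}_{n-j+1}R_{j-1,k}+a_{n-j+1}R_{j,k}+\widetilde{c}_{n-j}R_{j+1,k}=\lambda_k R_{j,k},\qquad j=1,\dotsc,n,
\end{equation*}
with boundary conditions $R_{0,k}=R_{n+1,k}=0$ and $\widetilde{c}_0=\widetilde{c}_n=0$. Assuming $r_k=R_{1,k}=0$, the $j=1$ equation reduces to $\widetilde{c}_{n-1}R_{2,k}=0$, and since $\widetilde{c}_{n-1}\neq 0$ (the matrix is irreducibly tridiagonal, the degenerate case being handled by the direct-sum decomposition in analogy with Remark~\ref{rem:2}), this forces $R_{2,k}=0$. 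Iterating the recurrence gives $R_{j,k}=0$ for all $j$, contradicting the non-triviality of $\mathbf{R}_k$. The argument that $v_k=R_{n,k}\neq 0$ is symmetric: start from the $j=n$ equation $\widetilde{c}_1R_{n-1,k}+a_1R_{n,k}=\lambda_k R_{n,k}$ and propagate upward.

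For part (ii), recall that $R$ is determined only up to the right action $R\mapsto RD$ with $D=\diag(d_1,\dotsc,d_n)\in D(n)$, which sends $r_k\mapsto d_k r_k$. Since every $r_k$ is non-zero by part (i), the single affine equation $\sum_k d_k r_k=1$ in the $n$ unknowns $d_k\in\mathbb{C}^*$ admits solutions (for instance $d_k=1/(n r_k)$), so every equivalence class in $\mathcal{R}^S(n)$ has a representative satisfying $r_1+\dotsb+r_n=1$. This is exactly the sense in which the normalisation "spans a set of representatives".

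The main obstacle, as in the non-symmetric case, is not the recurrence argument itself but rather ensuring that the irreducibility hypothesis $\widetilde{c}_j\neq 0$ is in force. If some $\widetilde{c}_j$ vanishes, the matrix $S$ block-decomposes and an eigenvector supported in only one block will have vanishing components in the other, invalidating part (i). This exceptional locus has measure zero under the probability distribution of the ensemble and can be excised by restricting to an analogue $\Sw'(n)\subset\Sw(n)$, reducing the general case to the irreducible one via the direct-sum decomposition used for $T$ in Remark~\ref{rem:2}.
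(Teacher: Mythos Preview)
Your proposal is correct and matches the paper's approach: the paper simply states that the proof follows analogously to Lemma~\ref{r_diff_from_zero}, which in turn defers to Lemma~2.1 of~\cite{mezzadri taylor}, and the argument there is precisely the three-term recurrence propagation you spell out in part~(i) together with the diagonal rescaling in part~(ii). You have made explicit what the paper leaves implicit by reference.
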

\begin{proof}
The proof follows analogously to the proof of Lemma~\ref{r_diff_from_zero}.
\end{proof}

Next, we aim to show that the bijection in equation~\eqref{sing_valued_map_sym} induces an injective map
\begin{equation*}
 \mathcal{G}^{S}\colon \ \Sw(n)\rightarrow\mathcal{L}(n)\times \mathcal{H}_n,
\end{equation*}
where $\mathcal{H}_n$ is defined by equation~\eqref{hyperplane}. It follows from equation~\eqref{sing_valued_map_sym} and Lemma~\ref{parameterisation_sym} that the map $\mathcal{G}^{S}$ is singular valued. Our next goal is to prove the injectivity of this map. To do so, we~let%
\begin{equation}
 \label{eq:newnotsym}
 R =\begin{pmatrix}\mathbf{r}_1^{\rm t}\\ \vdots \\ \mathbf{r}_n^{\rm t}\end{pmatrix}\qquad\text{and}\qquad
 R^{-1}= \bigl(\mathbf{l}_1,\dots,\mathbf{l}_n\bigr),
\end{equation}
where
$
\mathbf{r}_j^{\rm t} = \bigl( r_{j1},\dots, r_{jn}\bigr)$, $
 \mathbf{l}_j = (l_{1j},\dots, l_{nj})^{\rm t}$ for $j=1,\dots,n$.

\begin{Theorem}\label{thm-recursion matrix elements T cs}
Let $\Lambda\in\Lambda(n)$ and $\mathbf{r}=(r_1,\dots,r_n)\in\mathcal{H}_n$. Then, there exists a unique $R\in R^{S}(n)$ whose first row is $\mathbf{r}$, and a unique $S\in \Sw(n)$ such that $S=R\Lambda R^{-1}$.
\end{Theorem}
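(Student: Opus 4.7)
The plan is to mirror the recursive construction of Theorem~\ref{thm:roughbijection} and specialise it to symmetric tridiagonal matrices, using symmetry to eliminate the first column $\mathbf{R}_1$ as an independent input. The key structural simplification is that for $S = S^t$ with $S = R\Lambda R^{-1}$, one has $(R^{-1})^t \Lambda R^t = R \Lambda R^{-1}$, so $R^t R$ commutes with $\Lambda$ and hence (as $\Lambda$ has distinct eigenvalues) is diagonal. Equivalently, writing $R = (\mathbf{R}_1,\ldots,\mathbf{R}_n)$ and denoting the $j$-th row of $R^{-1}$ by $\mathbf{L}_j^t$, we have $\mathbf{L}_j^t = \alpha_j^{-1}\mathbf{R}_j^t$ with $\alpha_j = \mathbf{R}_j^t \mathbf{R}_j$. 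This relation halves the number of independent objects the recursion must track.

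Concretely, I would write the vector eigenvalue equations in the same form as in Theorem~\ref{thm:roughbijection}, but with the common entry $\widetilde{c}_{n-j} = b_{n-j} = c_{n-j}$:
\begin{align*}
\mathbf{r}_j^t \Lambda &= \widetilde{c}_{n-j+1}\mathbf{r}_{j-1}^t + a_{n-j+1}\mathbf{r}_j^t + \widetilde{c}_{n-j}\mathbf{r}_{j+1}^t,\\
\Lambda \mathbf{l}_j &= \widetilde{c}_{n-j+1}\mathbf{l}_{j-1} + a_{n-j+1}\mathbf{l}_j + \widetilde{c}_{n-j}\mathbf{l}_{j+1},
\end{align*}
with boundary conditions $\mathbf{r}_0^t = \mathbf{l}_0 = \mathbf{0}$, $\widetilde{c}_0 = 0$. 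Starting from $\mathbf{r}_1^t = \mathbf{r}^t$ and the implicit normalisation from Lemma~\ref{parameterisation_sym}, one extracts $a_{n-j+1} = \mathbf{r}_j^t \Lambda \mathbf{l}_j$ and then determines $\widetilde{c}_{n-j}$ (up to sign) as the square root of the product that, in the general case of Theorem~\ref{thm:roughbijection}, would have split into two independent parameters $b_{n-j}$ and $c_{n-j}$. This immediately fixes $\mathbf{r}_{j+1}^t$ and $\mathbf{l}_{j+1}$. An inductive check, identical in structure to the general case, verifies the biorthogonality relations $\mathbf{r}_k^t \mathbf{l}_j = \delta_{jk}$ and the compatibility of the constructed symmetric tridiagonal $S$ with $SR = R\Lambda$.

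The main obstacle is the sign ambiguity in each extracted $\widetilde{c}_{n-j}$. Geometrically this is precisely the residual gauge freedom $R \mapsto RD$ with $D$ diagonal of $\pm 1$'s, which leaves $S$ invariant and hence does not affect the uniqueness of $S$. A sign convention consistent with the biorthogonality (propagated inductively from $\mathbf{r}_1^t \mathbf{l}_1 = 1$) pins down a representative of $R$ inside $R^S(n)$. As in the proof of Theorem~\ref{thm:roughbijection}, the exceptional set of $\mathbf{r} \in \mathcal{H}_n$ where some $\widetilde{c}_{n-j}$ vanishes and the recursion degenerates has Lebesgue measure zero, and so can be safely excluded given the absolute continuity of the underlying distributions; the remark analogous to Remark~\ref{rem:2} handles the case where the tridiagonal matrix decouples into a direct sum, reducing to lower-dimensional instances of the same theorem.
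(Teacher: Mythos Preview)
Your proposal is correct and follows essentially the same approach as the paper: the same vector recurrences, the same extraction $a_{n-j+1}=\mathbf{r}_j^t\Lambda\mathbf{l}_j$, the square-root determination of $\widetilde{c}_{n-j}$, the inductive biorthogonality check, and the direct-sum decomposition when some $\widetilde{c}_{n-j}$ vanishes. Your explicit discussion of the sign ambiguity (and its interpretation as the residual $R\mapsto RD$ gauge with $D=\diag(\pm1,\ldots,\pm1)$) is a point the paper's proof passes over in silence, and your structural remark that $R^tR$ is diagonal is a nice conceptual addition not made explicit there.
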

\begin{proof}
Take $R$ from the set of representatives of $\mathcal{R}^{S}(n)$ defined in Lemma~\ref{parameterisation_sym}. Let us write the matrix equations $SR=R\Lambda$ and $R^{-1}S=\Lambda R^{-1}$ as a set of vector equations
\begin{subequations}\label{eq-vector eqs T cs}
	\begin{align}
		&\mathbf{r}_j^{\rm t}\Lambda=\widetilde{c}_{n-j+1}\mathbf{r}_{j-1}^{\rm t}+a_{n-j+1}\mathbf{r}_j^{\rm t}+\widetilde{c}_{n-j}\mathbf{r}_{j+1}^{\rm t},\\
		&\Lambda\mathbf{l}_j=\widetilde{c}_{n-j+1}\mathbf{l}_{j-1}+a_{n-j+1}\mathbf{l}_j+\widetilde{c}_{n-j}\mathbf{l}_{j+1},
	\end{align}
\end{subequations}
for $j=1,\dots,n$, where we have adopted the notation in equation~\eqref{eq:newnotsym}. Upon letting $\rv_1^{\rm t}=\mathbf{r}$, we want to show that given $\Lambda$ and $\mathbf{r}_1^{\rm t}$ with $\mathbf{r}_1^{\rm t}\mathbf{l}_1=1$, where $\mathbf{l}_1=(1,\dots,1)^{\rm t}$,
we can reconstruct~$S$,~$R$ and $R^{-1}$ uniquely from the recurrence relations in equations~\eqref{eq-vector eqs T cs} with boundary conditions $\widetilde{c}_0=0$, $\mathbf{l}_0=\mathbf{0}$ and $\mathbf{r}_{0}^{\rm t}=\mathbf{0}$. The vectors $\mathbf{r}_{n+1}^{\rm t}$ and $\mathbf{l}_{n+1}$ are undetermined.

Write
\begin{subequations}\label{eq-first recurrence step T cs}
	\begin{align}
		&a_n=\mathbf{r}_1^{\rm t} \Lambda \mathbf{l}_1, \label{eq-recurrence a_n for T cs}\\
&\widetilde{c}_{n-1}= \sqrt{\bigl(\mathbf{r}_1^{\rm t} \Lambda-a_n\mathbf{r}_1^{\rm t}\bigr)\Lambda \mathbf{l}_1}, \label{eq-recurrence tilde c_n-1} \\
		&\mathbf{r}_2^{\rm t}=\frac{1}{\widetilde{c}_{n-1}}\bigl(\mathbf{r}_1^{\rm t} \Lambda-a_n\mathbf{r}_1^{\rm t}\bigr), \label{eq-recurrence v_2}\\
		&\mathbf{l}_2=\frac{1}{\widetilde{c}_{n-1}}(\Lambda \mathbf{l}_1-a_n\mathbf{l}_1).\label{eq-recurrence u_2}
	\end{align}
\end{subequations}
In the case where $\widetilde{c}_{n-1}=0$, the construction in equation~\eqref{eq-recurrence tilde c_n-1} fails. Therefore, we decompose~$S$ into two smaller-dimensional symmetric tridiagonal matrices. Specifically, $S=S_1\oplus U_{n-1}$, where~$S_1$ denotes the principal sub-matrix of $S\in\mathcal{S}(n)$ obtained by retaining the first row and column of $S$, and let $U_{n-1}$ represent the principal sub-matrix obtained by keeping the last $n-1$ rows and columns of $S$. This decomposition reduces the problem to two lower-dimensional cases, which can be handled in two separate proofs. In general, if we encounter a $j$ for which $\widetilde{c}_{n-j}=0$, we use a similar decomposition and carry out separate proofs for the lower-dimensional matrices that we form since they are symmetric tridiagonal matrices. Therefore, for the remainder of the proof, we assume that $\widetilde{c}_{n-j}\neq0$ for all $j$, thereby ensuring we avoid any singularities in our constructions.

The quantities $a_n$, $\widetilde{c}_{n-1}$, $\mathbf{r}_1^{\rm t}$ and $\mathbf{l}_1$ solve equations~\eqref{eq-vector eqs T cs} for $j=1$. Next, we shall check that they are consistent with the condition $RR^{-1}=I$. One finds
\begin{align*}
	&\mathbf{r}_2^{\rm t}\mathbf{l}_1=\frac{1}{\widetilde{c}_{n-1}}\bigl(\mathbf{r}_1^{\rm t}\Lambda \mathbf{l}_1-a_n\bigr)=0,\qquad
	\mathbf{r}_1^{\rm t}\mathbf{l}_2=\frac{1}{\widetilde{c}_{n-1}}\bigl(\mathbf{r}_1^{\rm t}\Lambda \mathbf{l}_1-a_n\bigr)=0,
	\\
&\mathbf{r}_2^{\rm t}\mathbf{l}_2=\frac{\mathbf{r}_2^{\rm t}}{\widetilde{c}_{n-1}}(\Lambda \mathbf{l}_1-a_n\mathbf{l}_1)=1.
\end{align*}
When $1<j<n$ take $\mathbf{r}_{j-1}^{\rm t},\ \mathbf{r}_{j}^{\rm t},\ \mathbf{l}_{j-1},\ \mathbf{l}_{j}$, subject to the conditions
	\begin{align*}
		&\mathbf{r}_j^{\rm t}\mathbf{l}_j=\mathbf{r}_{j-1}^{\rm t}\mathbf{l}_{j-1}=1,\qquad \mathbf{r}_{j}^{\rm t}\mathbf{l}_{j-1}=\mathbf{r}_{j-1}^{\rm t}\mathbf{l}_{j}=0,\qquad
		\widetilde{c}_{n-j+1}=\mathbf{r}_j^{\rm t}\Lambda \mathbf{l}_{j-1}\neq0.
	\end{align*}
Define
\begin{subequations}\label{eq-recurrence eqs T cs j to j+1}
\begin{align}
&a_{n-j+1}=\mathbf{r}_j^{\rm t}\Lambda \mathbf{l}_j,\label{eq-recurrence a_n-j+1 for T cs}\\
&\widetilde{c}_{n-j}=\sqrt{\bigl(\mathbf{r}_j^{\rm t}\Lambda-a_{n-j+1}\mathbf{r}_j^{\rm t}-\widetilde{c}_{n-j+1}\mathbf{r}_{j-1}^{\rm t}\bigr)\Lambda \mathbf{l}_j}, \label{eq-recurrence tilde c_n-j for T cs}\\
		&\mathbf{r}_{j+1}^{\rm t}=\frac{1}{\widetilde{c}_{n-j}}\bigl(\mathbf{r}_j^{\rm t}\Lambda-a_{n-j+1}\mathbf{r}_j^{\rm t}-\widetilde{c}_{n-j+1}\mathbf{r}_{j-1}^{\rm t}\bigr), \label{eq-recurrence v_j+1 for T cs}\\
&\mathbf{l}_{j+1} =\frac{1}{\widetilde{c}_{n-j}}(\Lambda\mathbf{l}_j-a_{n-j+1}\mathbf{l}_j-\widetilde{c}_{n-j+1}\mathbf{l}_{j-1}). \label{eq-recurrence u_j+1 for T cs}
\end{align}
\end{subequations}
We can demonstrate that equations~\eqref{eq-recurrence eqs T cs j to j+1} solve the recurrence relations in equations~\eqref{eq-vector eqs T cs}. Next, we need to establish consistency in the same way as in the step for $j=1$. To begin, we note that $\widetilde{c}_{n-j}=\mathbf{r}_{j+1}^{\rm t}\Lambda \mathbf{l}_j=\mathbf{r}_{j}^{\rm t}\Lambda \mathbf{l}_{j+1}$, since $S$ is a symmetric matrix. Thus,
\begin{alignat*}{3}
	&\mathbf{r}_{j+1}^{\rm t}\mathbf{l}_j=\frac{1}{\widetilde{c}_{n-j}}\bigl(\mathbf{r}_j^{\rm t}\Lambda \mathbf{l}_j -a_{n-j+1}\bigr)=0,\qquad&&
	\mathbf{r}_{j+1}^{\rm t}\mathbf{l}_{j-1}=\frac{1}{\widetilde{c}_{n-j}}\bigl(\mathbf{r}_j^{\rm t}\Lambda \mathbf{l}_{j-1}-\widetilde{c}_{n-j+1}\bigr)=0,& \\
	&\mathbf{r}_j^{\rm t}\mathbf{l}_{j+1}=\frac{1}{\widetilde{c}_{n-j}}\bigl(\mathbf{r}_j^{\rm t} \Lambda \mathbf{l}_j-a_{n-j+1}\bigr)=0,\qquad&&
	\mathbf{r}_{j-1}^{\rm t}\mathbf{l}_{j+1}=\frac{1}{\widetilde{c}_{n-j}}\bigl(\mathbf{r}_{j-1}^{\rm t} \Lambda \mathbf{l}_j-\widetilde{c}_{n-j+1}\bigr)=0,& \\
	&\mathbf{r}_{j+1}^{\rm t}\mathbf{l}_{j+1}=\frac{1}{\widetilde{c}_{n-j}}(\mathbf{r}_{j+1}\Lambda \mathbf{l}_j)=1.&&&
\end{alignat*}
The orthogonality relations $\mathbf{r}_{j+1}^{\rm t}\mathbf{l}_k=0$, $\mathbf{r}_k^{\rm t}\mathbf{l}_{j+1}=0$, for $k=1,\dots,j-2$, follow by induction and from the fact that by construction $\mathbf{r}_j^{\rm t}\Lambda \mathbf{l}_k=0$ if $|j-k|>1$. The boundary condition $\mathbf{r}_{n+1}^{\rm t}=\mathbf{0}$ terminates the recurrence relations.
\end{proof}

\subsection*{Acknowledgements}
We would like to thank Tamara Grava, Iv\'an Parra and Mariya Shcherbina for fruitful discussions and Leslie Molag, Francisco Jose Marcellan Espanol and Walter van Assche for useful correspondence.

G.~Akemann and P.~P\"a{\ss}ler were partially supported by the DFG through the grant CRC 1283 ``Taming
uncertainty and profiting from randomness and low regularity in analysis, stochastics and their applications''. G.~Akemann also acknowledges support by a Leverhulme Trust Visiting Professorship VP1-2023-007. He thanks the School of Mathematics at the University of Bristol for its kind hospitality during his sabbatical stay, where part of this work was conducted. H.~Taylor was supported by EPSRC grant no. EP/T517872/1.

We thank the anonymous referee for the suggestion to look into the NN statistics of the ensembles and pointing us to flaws in the proofs of Lemma~\ref{lemma-mean and variance kappa's} and Theorem~\ref{thm-Barbarino Noferini} in a first version of the paper.

\pdfbookmark[1]{References}{ref}
\LastPageEnding

\end{document}